\let\poly\relax
\newcommand{\poly}{\mathrm{poly}}
\newcommand{\hvx}{\widehat{\vx}}
\newcommand{\hvy}{\widehat{\vy}}
\newcommand{\hvmu}{\widehat{\vmu}}
\definecolor{lightgray}{gray}{0.85}
\newcommand{\declarecolor}[2]{\definecolor{#1}{RGB}{#2}\expandafter\newcommand\csname #1\endcsname[1]{\textcolor{#1}{##1}}}
\definecolor{mydarkblue}{rgb}{0,0.08,0.45}
\tikzset{style green/.style={
    set fill color=green!100!lime!40,
    set border color=white,
  },
  style cyan/.style={
    set fill color=cyan!90!blue!40,
    set border color=white,
  },
  style orange/.style={
    set fill color=orange!80!red!60,
    set border color=white,
  },
  style red/.style={
    set fill color = red!50,
    set border color=white,
  },
  hor/.style={
    above left offset={-0.15,0.31},
    below right offset={0.15,-0.125},
    #1
  },
  ver/.style={
    above left offset={-0.1,0.3},
    below right offset={0.15,-0.15},
    #1
  }
}
\let\E\relax
\newcommand{\sumx}{\vx_U}
\newcommand{\sumy}{\vy_U}
\newcommand{\setT}{\mathcal{T}_U}
\newcommand{\accept}{\textsf{Accept}}
\newcommand{\reject}{\textsf{Reject}}
\newcommand{\vxdef}{\mathring{\vec{x}}}
\newcommand{\vydef}{\mathring{\vec{y}}}
\DeclareMathOperator{\pr}{\mathbb{P}}
\newcommand{\reg}{\mathsf{Reg}}
\newcommand{\vu}{\vec{u}}
\newcommand*{\N}{{\mathbb{N}}}
\let\R\relax
\newcommand*{\R}{{\mathbb{R}}}
\newcommand*{\E}{{\mathbb{E}}}
\newcommand*{\cG}{{\mathcal{G}}}
\newcommand{\defeq}{\coloneqq}
\newcommand{\OV}{\textsf{OV}}
\newcommand{\vx}{\vec{x}}
\newcommand{\vy}{\vec{y}}
\newcommand{\vz}{\vec{z}}
\newcommand{\vmu}{\vec{\mu}}
\DeclareMathOperator{\sw}{\mathsf{SW}}
\newcommand{\tilmu}{\tilde{\mu}}
\newcommand{\tilE}{\tilde{\E}}
\newcommand{\enumhard}{\textsf{EnumHard}}
\newcommand{\soshard}{\textsf{SoSHard}}
\renewcommand{\vec}[1]{\bm{#1}}
\newcommand{\mat}[1]{\mathbf{#1}}
\theoremstyle{plain}
\newtheorem{theorem}{Theorem}[section]
\newtheorem{lemma}[theorem]{Lemma}
\newtheorem{proposition}[theorem]{Proposition}
\newtheorem{fact}[theorem]{Fact}
\newtheorem{claim}[theorem]{Claim}
\theoremstyle{definition}
\newtheorem{definition}[theorem]{Definition}
\theoremstyle{remark}
\title{Computational Lower Bounds for Regret Minimization\\ in Normal-Form Games}
\author[1]{Ioannis Anagnostides}
\author[2]{Alkis Kalavasis}
\author[3]{Tuomas Sandholm}
\affil[1,3]{Carnegie Mellon University}
\affil[2]{Yale University}
\affil[3]{Strategy Robot, Inc.}
\affil[3]{Strategic Machine, Inc.}
\affil[3]{Optimized Markets, Inc.}
\affil[ ]{\texttt{\{ianagnos,sandholm\}}\texttt{@cs.cmu.edu}, \texttt{alkis.kalavasis}\texttt{@yale.edu}}
\date{}
\begin{document}

\maketitle

\begin{abstract}
    A celebrated connection in the interface of online learning and game theory establishes that players minimizing \emph{swap regret} converge to \emph{correlated equilibria (CE)}---a seminal game-theoretic solution concept. Despite the long history of this problem and the renewed interest it has received in recent years, a basic question remains open: how many iterations are needed to approximate an equilibrium under the usual \emph{normal-form} representation? In this paper, we provide evidence that existing learning algorithms, such as multiplicative weights update, are close to optimal. In particular, we prove lower bounds for the problem of computing a CE that can be expressed as a uniform mixture of $T$ product distributions---namely, a uniform \emph{$T$-sparse CE}; such lower bounds immediately circumscribe (computationally bounded) regret minimization algorithms in games. Our results are obtained in the algorithmic framework put forward by Kothari and Mehta (STOC 2018) in the context of computing Nash equilibria, which consists of the \emph{sum-of-squares (SoS)} relaxation in conjunction with oracle access to a \emph{verification oracle}; the goal in that framework is to lower bound either the degree of the SoS relaxation or the number of queries to the verification oracle. Here, we obtain two such hardness results, precluding computing i) uniform $\log n$-sparse CE when $\epsilon =\poly(1/\log n)$ and ii) uniform $n^{1 - o(1)}$-sparse CE when $\epsilon = \poly(1/n)$.
\end{abstract}

\pagenumbering{gobble}

\clearpage

\pagenumbering{arabic}


\section{Introduction}

A celebrated line of research in the interface of algorithmic game theory and online learning revolves around the repeated interaction of multiple players in a game. Much of this theory stems from the realization that players engaging rationally---in that their behavior is consistent with some notion of hindsight rationality or \emph{no-regret}---converge to a certain game-theoretic solution concept known as \emph{(coarse) correlated equilibrium (CE)}~\citep{Hart00:Simple,Foster97:Calibrated}. From an algorithmic standpoint, perhaps the most pressing question emerging from that connection pertains to the number of iterations needed to approximate an equilibrium. One answer put forward in the online learning literature postulates that each player is facing an adversarial environment, a regime which is by now well-understood commencing from some early influential work~\citep{Littlestone94:Weighted,Littlestone87:Learning}. When learning in games, however, players instead interact with other learning algorithms; the obvious concern thus is that the predictions of the traditional no-regret framework are overly pessimistic. Indeed, it turns out that barriers ingrained in the adversarial regime can be circumvented when specialized algorithms are in place (\emph{e.g.},~\citep{Syrgkanis15:Fast,Rakhlin13:Optimization,Daskalakis15:Near,Daskalakis21:Near,Erez23:Regret,Daskalakis22:Fast,Cai24:Near}). As a notable example, \citet{Daskalakis21:Near} showed that a simple variant of the celebrated multiplicative weights update algorithm guarantees \emph{(external) regret} growing only as $O(\log^4 T \log n)$ after $T$ repetitions of any game with a constant number of players, where $n$ here and throughout represents the number of actions of each player; this is a significant improvement over the $\Omega(\sqrt{T \log n})$ information-theoretic barrier when facing an adversary~\citep{Cesa-Bianchi06:Prediction}. 

Practical experience also suggests a considerable gap between existing theoretical predictions and observed behavior for no-regret dynamics, manifested in attaining remarkable performance even in enormous games~\citep{Bowling15:Heads}. It is worth stressing that even though a correlated equilibrium can be computed exactly via linear programming~\citep{Papadimitriou08:Computing}, no-regret dynamics are in many ways more appealing from a computational standpoint, mostly attributed to their scalability and their decentralized nature; it has also been argued that they constitute more plausible models of learning~\citep{Sandholm10:Population}. Despite the rich history of this foundational problem, the number of iterations needed for no-regret learners to approximate an equilibrium remains an outstanding open problem.

In this paper, we study the complexity of no-regret learning in games from a \emph{computational} perspective, under the premise that learners employ polynomial-time algorithms to update their strategies. At first glance, it might be unclear why and how computation circumscribes no-regret learning. This can be grasped by considering first an extreme case: what prevents \emph{both} learners from incurring small regret after a \emph{single} iteration? The answer is that this would result in a \emph{Nash equilibrium (NE)} of the underlying game---a stronger notion than correlated equilibria, which can be immediately ruled out in light of well-known computational barriers in general-sum games~\citep{Rubinstein16:Settling,Chen09:Settling,Daskalakis08:Complexity,Boodaghians20:Smoothed,Deligkas22:Pure,Etessami07:Complexity,Kothari18:Sum}. But can those lower bounds be extended for multiple iterations of no-regret learning? To address this question, we investigate the complexity of computing a correlated equilibrium under the additional constraint that it can be expressed as a uniform mixture of $T$ \emph{product distributions}---henceforth, \emph{uniform $T$-sparse} CE (\Cref{def:sparse}). Intractability concerning sparse CE immediately lower bounds the number of iterations for computationally bounded no-regret learners. Further, a compelling aspect of this approach is that, unlike other lower bounds based on query complexity (discussed in~\Cref{sec:related}), it applies even if the players know the game upfront and can coordinate prior to the learning phase. 

This program has been followed with success starting from the work of~\citet{Foster23:Hardness} in the context of Markov (aka. stochastic) games, and continued by~\citet{Peng24:Complexity} targeting games with imperfect information. Yet, no progress has been made in understanding the complexity of sparse CE in \emph{normal-form games}, which constitutes the canonical representation treated in the literature. It is worth stressing here that although computing $2$-sparse CE appears to be a similar problem to computing NE (that is, $1$-sparse CE), adapting existing reductions based on the latter to the former turns out to be particularly challenging; relaxing players' independence---as in Nash equilibria---immediately introduces considerable technical obstacles.

Our contribution here is to tackle those challenges and provide strong evidence for the intractability of computing sparse CE in normal-form games. In particular, we extend the \emph{sum-of-squares (SoS)}-based lower bounds of~\citet{Kothari18:Sum} from Nash equilibria to sparse CE under a broad sparsity regime.

\paragraph{The algorithmic framework} We operate in the algorithmic framework put forward by~\citet{Kothari18:Sum} revolving around SoS, a sequence of increasingly more powerful semidefinite programs (SDPs). The SoS hierarchy has proven to be a remarkably effective technique for algorithm design in many fundamental problems from diverse areas (\emph{e.g.},~\citep{Arora15:Subexponential,Barak11:Rounding,Ma16:Polynomial,Harrow16:Tight,Barak15:Dictionary,Barak14:Rounding,Arora09:Expander,Kothari22:Polynomial,Barak14:Sum}). As a result, in light of the power of the framework, an SoS lower bound serves as strong evidence for the intractability of a problem. Such a lower bound typically manifests itself in the form of an \emph{integrality gap}, precluding approximating a certain objective function without ascending excessively high in the SoS hierarchy.

When it comes to equilibrium computation, however, there is no underlying objective; and the decision version of the problem is also of little use since, by virtue of its totality, a solution always exists. \citet{Kothari18:Sum} propose to address such issues by instead relying on what they refer to as \emph{rounding gaps}. The idea here is as follows. An SoS relaxation will output a relaxed solution with respect to our problem of interest---namely, $T$-sparse CE. Such a relaxation will generally not be a legitimate solution, so an additional step---known as \emph{rounding}---is required so as to identify an actual equilibrium. Now, for this approach to be meaningful, it is necessary to impose constraints on how the rounding algorithm operates, for otherwise it can simply ignore the relaxed solution and compute an equilibrium from scratch by accessing the game; proving lower bounds against such algorithms is precisely what we set out for in the first place. Following~\citet{Feige16:Oblivious}, \citet{Kothari18:Sum} address this issue by restricting the rounding algorithm to be \emph{oblivious}, in that the true solution can only depend on the relaxed one. This algorithmic framework, although restricted, is powerful enough to capture several famous algorithms~\citep{Feige16:Oblivious}, including threshold rounding for vertex cover~\citep{Hochbaum82:Approximation}; randomized rounding for set cover~\citep{Raghavan87:Randomized}; random hyperplane rounding for maximum cut~\citep{Goemans95:Improved}; and welfare maximization for fractionally subadditive (XOS) and submodular valuations~\citep{Feige09:Maximizing,Feige10:Submodular}. To strengthen the rounding algorithm and capture more existing techniques, \citet{Kothari18:Sum} also allow it to adaptively produce a list of candidate solutions---instead of a single one---by checking whether one of them is indeed a solution through a \emph{verification oracle}; this enables capturing enumeration techniques over a restricted search space, which have been successful in equilibrium computation problems---most notably, by~\citet{Lipton03:Playing}. A lower bound in this framework consists of proving that either the level---aka. the \emph{degree}---in the SoS hierarchy is high---thereby rendering the corresponding SDP out of reach---or the number of queries to the verification oracle is prohibitively large.

\subsection{Preliminaries}
\label{sec:init-prels}

To describe our results, we first need to formally introduce the problem of interest and the algorithmic framework outlined above; further background is provided later in~\Cref{sec:prels}. The familiar reader can mostly skim the upcoming paragraphs leading to~\Cref{sec:results} for our notation.

\paragraph{Two-player games} We consider two-player games represented in normal form. (Since we are aiming to prove lower bounds, concentrating on two-player games will only make the results stronger.) Here, each player has a finite set of available actions; without any loss of generality, we may and will assume that the set of actions of each player is $[n] \defeq \{1, 2, \dots, n\}$. Under a pair of actions $(i, j) \in [n] \times [n]$, the utility of the players is given by $\mat{R}_{i, j}$ and $\mat{C}_{i, j}$, respectively, where $\mat{R}, \mat{C} \in \R^{n \times n}$ are the payoff matrices of the game given as part of the input; by convention, we will refer to the players as Player $x$ (for the ``row player'') and Player $y$ (for the ``column player''), respectively. $\mathcal{G} \defeq (\mat{R}, \mat{C})$ will sometimes be referred to as an $n \times n$ game. Players can randomize by selecting as strategy a probability distribution, a point in $\Delta^n \defeq \{  \vx \in \R^n_{\geq 0} : \sum_{i=1}^n \vx_i = 1 \}$. The expected utility under a pair of mixed strategies $(\vx, \vy) \in \Delta^n \times \Delta^n$ is given by $\E_{(i,j) \sim (\vx, \vy)} \mat{R}_{i, j} = \langle \vx, \mat{R} \vy \rangle$ and $\E_{(i,j) \sim (\vx, \vy)} \mat{C}_{i, j} = \langle \vx, \mat{C} \vy \rangle$ for Player $x$ and Player $y$, respectively.

\paragraph{Correlated equilibria and sparsity} We next recall the notion of a \emph{correlated equilibrium (CE)}~\citep{Aumann74:Subjectivity}. Central to this definition is the set of \emph{swap deviations} $\Phi_{\text{swap}}$, which contains all functions mapping $[n]$ to $[n]$. Even though there are $n^n$ such functions, there is still an efficient algorithm minimizing $\Phi_{\text{swap}}$-regret~\citep{Blum07:From}. In fact, we will show that lower bounds persist even if one considers a certain subset of deviations $\Phi \subseteq \Phi_{\text{swap}}$ with polynomial (in $n$) size (defined later in~\Cref{section:CE-Regret}).

\begin{definition}
    \label{def:CE}
    A distribution $\vmu$ on $[n] \times [n]$ is an \emph{$\epsilon$-correlated equilibrium ($\epsilon$-CE)} if for any deviations $\phi_x \in \Phi$ and $\phi_y \in \Phi$,
    \begin{equation}
        \label{eq:CE}
        \E_{(i, j) \sim \vmu} \mat{R}_{i, j} \geq \E_{(i, j) \sim \vmu} \mat{R}_{\phi_x(i), j} - \epsilon \quad \text{and} \quad \E_{(i, j) \sim \vmu} \mat{C}_{i, j} \geq \E_{(i, j) \sim \vmu} \mat{C}_{i, \phi_y(j)} - \epsilon.
    \end{equation}
\end{definition}

\emph{Coarse} correlated equilibria (CCE) relax CE by instead imposing~\eqref{eq:CE} only for \emph{external deviations} $\Phi_{\text{ext}} \defeq \{ \phi \in \Phi_{\text{swap}} : \exists i' \in [n] \text{ s.t. } \phi(i) = i' \text{ } \forall i \in [n] \}$. Interestingly, our lower bound does not seem to apply to CCE; as we shall see in the sequel, employing certain non-external deviations is crucial for the argument.
This brings us to the notion of a \emph{sparse} distribution, introduced below.

\begin{definition}[Sparse distribution]
    \label{def:sparse}
    We say that a (correlated) distribution $\vmu$ on $[n] \times [n]$ is \emph{uniform $T$-sparse} if there exist $\vx^{(1)}, \dots, \vx^{(T)} \in \Delta^n$ and $\vy^{(1)}, \dots, \vy^{(T)} \in \Delta^n$ such that $\vmu = \frac{1}{T} \sum_{t=1}^T \vx^{(t)} \otimes \vy^{(t)}$.
\end{definition}
(Above, we denote by $\vx \otimes \vy \defeq \vx \vy^\top$ the outer (tensor) product of $\vx$ and $\vy$.) In words, a sparse distribution is one that can be expressed as a uniform mixture of $T \in \N$ product distributions. As we explained earlier, a \emph{Nash equilibrium} can be equivalently thought of as a $1$-sparse CE. A key connection that motivates~\Cref{def:sparse} is that $T$ iterations of no-regret learning produces, by definition, a uniform $T$-sparse distribution whose CE gap grows with the players' \emph{$\Phi$-regret} (\Cref{prop:regret-CE}).

\paragraph{Social welfare} The (expected) social welfare of a correlated distribution $\vmu$ on $[n] \times [n]$ is defined as $\sw(\vmu) \defeq \E_{(i,j) \sim \vmu} [ \mat{R}_{i, j} + \mat{C}_{i, j}]$; under a product distribution induced by $(\vx, \vy)$, we will write $\sw(\vx \otimes \vy) = \langle \vx, (\mat{R} + \mat{C}) \vy \rangle$.

\paragraph{SoS and pseudo-equilibria} The \emph{sum-of-squares (SoS)} hierarchy is a sequence of increasingly tighter semidefinite programs (SDPs), parameterized by a degree $d \in \N$, for the solution of a system of polynomial inequalities. Central to the SoS framework is the notion of a \emph{pseudo-distribution}, which is a generalization of the usual notion of a probability distribution.

\begin{restatable}[Pseudo-distribution]{definition}{pseudoexp}
    A \emph{degree-$d$ pseudo-distribution} is a discrete signed measure $\tilmu$ on $\R^m$ such that the associated linear functional (namely, the \emph{pseudo-expectation}) $\tilE_{\tilmu} : f \mapsto \tilE_{\tilmu}[f] = \sum_{ \vz : \tilmu(\vz) \neq 0} \tilmu(\vz) f(\vz)$, where $f: \R^m \to \R$, has the following properties:
    \begin{enumerate}
        \item \emph{normalization}: $\tilE_{\tilmu}[1] = 1$, and
        \item \emph{positivity}: $\tilE_{\tilmu}[p^2] \geq 0$ for every polynomial $p$ on $\R^m$ with degree at most $d/2$.
    \end{enumerate}
\end{restatable}

In this context, following~\citet{Kothari18:Sum}, we introduce a relaxation of uniform $T$-sparse CE based on the notion of a constrained pseudo-distribution (\Cref{def:contrainsed-pseudo}), which we refer to as \emph{uniform $T$-sparse pseudo-CE} (a $1$-sparse pseudo-CE will also be called \emph{pseudo-NE}).

\begin{definition}
    \label{def:pseudo-CE}
    For a game $(\mat{R}, \mat{C})$, a degree-$d$, uniform $T$-sparse \emph{pseudo-CE} is a degree-$d$ pseudo-distribution on $(\vx^{(1)}, \dots, \vx^{(T)}, \vy^{(1)}, \dots, \vy^{(T)})$ that satisfies $\vx^{(1)}, \dots, \vx^{(T)}, \vy^{(1)}, \dots, \vy^{(T)} \in \Delta^n$ and for all $(\phi_x, \phi_y) \in \Phi \times \Phi$ the system of quadratic inequalities
    \begin{equation}
        \label{eq:CE-prod}
        \frac{1}{T} \sum_{t=1}^T \langle \vx^{(t)}, \mat{R} \vy^{(t)} \rangle \geq \frac{1}{T} \sum_{t=1}^T \langle \phi_x(\vx^{(t)}), \mat{R} \vy^{(t)} \rangle \text{ and } \frac{1}{T} \sum_{t=1}^T \langle \vx^{(t)}, \mat{C} \vy^{(t)} \rangle \geq \frac{1}{T} \sum_{t=1}^T \langle \vx^{(t)}, \mat{C} \phi_y(\vy^{(t)}) \rangle.
    \end{equation}
\end{definition}
(We clarify that $\phi_x, \phi_y : \Delta^n \to \Delta^n$ are linear functions, and so~\eqref{eq:CE-prod} above is indeed quadratic.) \Cref{def:pseudo-CE} constitutes the natural SoS relaxation for computing uniform $T$-sparse CE. An \emph{$\epsilon$-pseudo-CE} incorporates an additive slackness $\epsilon > 0$ in~\eqref{eq:CE-prod}.

\paragraph{Oblivious rounding with a verification oracle}

We first state the definition of an oblivious rounding algorithm~\citep{Feige16:Oblivious}.

\begin{definition}[Oblivious rounding algorithm]
    A degree-$d$ \emph{oblivious rounding algorithm} for a game $\cG$ takes as input a degree-$d$, uniform $T$-sparse $\epsilon$-pseudo-CE per~\Cref{def:pseudo-CE}, and has to output a uniform $T$-sparse $\epsilon$-CE of $\cG$.
\end{definition}

As explained earlier, it is desirable to also capture rounding algorithms endowed with the ability to adaptively produce a list of candidate solutions by checking whether one of them is indeed a solution through a \emph{verification oracle}.

\begin{definition}[Verification oracle]
    \label{def:VO}
    A \emph{verification oracle} with respect to a game $\cG$ takes as input a candidate solution, in the form of $(\vx^{(1)}, \dots, \vx^{(T)})$ and $(\vy^{(1)}, \dots, \vy^{(T)})$, and outputs $\accept$ if $\vmu = \frac{1}{T} \sum_{t=1}^T \vx^{(t)} \otimes \vy^{(t)}$ is an $\epsilon$-CE of $\cG$ and $\reject$ otherwise.
\end{definition}

We are now ready to introduce the general class of rounding algorithms captured by the upcoming lower bounds.

\begin{definition}
    A \emph{degree-$d$, $q$-query oblivious rounding algorithm} with \emph{verification oracle} ($\OV$ rounding algorithm) is a degree-$d$ oblivious rounding algorithm that can additionally access a verification oracle for the underlying game at most $q$ times.
\end{definition}

Lower bounds within this algorithmic framework boil down to proving that any $\OV$ rounding algorithm either requires high degree or must otherwise submit a large number of (potentially adaptive) queries.

\subsection{Our results}
\label{sec:results}

As expected based on existing upper bounds, the lower bounds we obtain with respect to the sparsity parameter of~\Cref{def:sparse} crucially depend on the desired precision of the approximation (per~\Cref{def:CE}). In particular, we present results on two different regimes: $\epsilon = \poly(1/n)$ and $\epsilon = \poly(1/\log n)$; we shall refer to those as \emph{high} and \emph{low precision}, respectively. We begin with the high-precision regime. Here, our main result is summarized below.

\begin{theorem}
    \label{theorem:main-high}
    Suppose that there is a degree-$d$, $q$-query $\OV$ rounding algorithm for uniform $n^{1 - o(1)}$-sparse $\epsilon$-CE, with $\epsilon = n^{-c}$ for some constant $c$. Then, either $d = 2^{ \Omega( \sqrt{ \log n \log \log n } )}$ or $q = 2^{\Omega(n)}$.
\end{theorem}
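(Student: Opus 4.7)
My plan is to extend the Kothari--Mehta SoS-plus-verification-oracle framework for Nash equilibria to uniform $T$-sparse CE in the high-precision regime, by reducing the $T$-sparse problem to the $1$-sparse (NE) problem on a suitably modified game. The starting point is a hard game $\cG_0$ from the Kothari--Mehta construction: encoding a quasi-random SoS-hard instance, any $\epsilon$-NE of $\cG_0$ with $\epsilon = n^{-c'}$ already encodes a solution to a problem for which SoS requires degree $2^{\Omega(\sqrt{\log n \log \log n})}$ to refute and for which list-decoding against a verification oracle requires $2^{\Omega(n)}$ queries. The task then reduces to showing that computing a uniform $T$-sparse $\epsilon$-CE of a suitable augmentation $\cG$ of $\cG_0$ is at least as hard as computing an $\epsilon$-NE of $\cG_0$ itself.

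To achieve this reduction, I would augment $\cG_0$ with a quadratic \emph{decoupling gadget} on a block of auxiliary actions, designed so that the swap-deviation CE constraint~\eqref{eq:CE} forces the correlated distribution $\vmu = \frac{1}{T}\sum_{t=1}^T \vx^{(t)} \otimes \vy^{(t)}$ to be approximately equal to the product of its marginals, up to error $O(\epsilon)$. This step fundamentally exploits \emph{non-external} swap deviations---consistent with the remark in the paper that the argument does not apply to CCE---since only internal swaps can penalize the joint $\vmu$ for deviating from its row-conditional-on-column structure. Combined with the high-precision choice $\epsilon = n^{-c}$ and a linearity/averaging argument over $t \in [T]$, this forces each component $\vx^{(t)} \otimes \vy^{(t)}$ of the mixture to be an approximate NE of $\cG_0$ up to cumulative error $\poly(T)\cdot\epsilon = n^{1 - c - o(1)}$, which remains polynomially small for $c$ chosen sufficiently large relative to the $o(1)$ hidden in the sparsity $n^{1 - o(1)}$.

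With the decoupling established, the SoS degree lower bound follows by checking that the decoupling inequality admits a low-degree sum-of-squares proof from the pseudo-$T$-sparse-CE axioms of Definition~\ref{def:pseudo-CE}; the extracted pseudo-NE of $\cG_0$ then inherits the Kothari--Mehta degree bound $d = 2^{\Omega(\sqrt{\log n \log \log n})}$. The verification-oracle bound follows analogously: any $q$-query $\OV$ rounding algorithm for uniform $T$-sparse $\epsilon$-CE of $\cG$ would, via the same decoupling, yield a $q$-query oblivious rounder for $\epsilon$-NE of $\cG_0$, at which point the Kothari--Mehta lower bound forces $q = 2^{\Omega(n)}$. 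One also has to verify that the resulting family of deviations $\Phi$ remains of polynomial size in $n$, as promised in the discussion following Definition~\ref{def:CE}.

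The main obstacle I anticipate is the SoS-provability of the decoupling step. The pseudo-expectation ranges over the $2T$ vectors $(\vx^{(t)}, \vy^{(t)})_{t \in [T]}$ jointly and thus encodes far more freedom than a pseudo-NE over a single $(\vx, \vy)$ pair; moreover, naively propagating the decoupling error through the $T$ mixture components accumulates factors of $\poly(T) = n^{1 - o(1)}$ that threaten to overwhelm the $\poly(1/n)$ precision budget. Calibrating the gadget's payoff magnitudes, the non-external swap deviations used to certify decoupling, and the SoS degree so that the certificate operates comfortably within the regime $\epsilon = n^{-c}$ and $T = n^{1 - o(1)}$ is, I expect, the central technical difficulty of the argument.
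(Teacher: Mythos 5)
Your proposal departs fundamentally from the paper's route, and the departure introduces a gap that I do not see how to close. The central device you propose---a ``decoupling gadget'' that forces every $\epsilon$-CE of the augmented game to be $O(\epsilon)$-close to the product of its marginals, and moreover forces each component $\vx^{(t)} \otimes \vy^{(t)}$ of the mixture to be an approximate NE of $\cG_0$---cannot exist in the generality you need. The set of $\epsilon$-CE of any fixed game is a convex polytope (the CE constraints are linear in $\vmu$, whether or not internal deviations are included), whereas the set of product distributions is a non-convex variety of codimension roughly $n^2 - 2n$. If the augmented game has two approximate NE $\vx \otimes \vy$ and $\vx' \otimes \vy'$ with $\|\vx - \vx'\|_1, \|\vy - \vy'\|_1 = \Omega(1)$, then $\tfrac{1}{2}(\vx \otimes \vy + \vx' \otimes \vy')$ is an $\epsilon$-CE that is $\Omega(1)$-far from every product distribution, so no gadget can force all $\epsilon$-CE to be near-product unless the approximate-NE set itself collapses to a point. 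But the Kothari--Mehta framework is not about one hard game $\cG_0$; it is about a family $\{\cG_k\}$ that are mutually indistinguishable to low-degree SoS yet have pairwise-disjoint approximate equilibria, and the $\soshard$ ingredient specifically builds games with \emph{many} Nash equilibria (those arising from independent sets). So the hypothesis under which your gadget could work is exactly the hypothesis that the hardness construction must violate.

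There is a second, independent issue: you treat the sparsity $T = n^{1-o(1)}$ as a mere nuisance error $\poly(T)\cdot\epsilon$ that can be absorbed by increasing $c$. In the paper, $T$ is not an error term that can be pushed down by precision; it enters multiplicatively in the welfare/independent-set extraction (Lemma~\ref{lemma:ref-IS} gives an independent set of size $k/(16T)$), and the achievable $T = n^{1-o(1)}$ is pinned precisely to the $n^{1-o(1)}$ inapproximability gap in Tulsiani's SoS lower bound (\Cref{theorem:Tulsiani-IS}). Making $\epsilon$ polynomially smaller does nothing to improve this gap, and a naive adaptation of the CCE argument loses an extra factor of $T$ (giving only $k/(16T^2)$, hence sparsity only $\approx\sqrt{n}$) --- the upgrade to $\Theta(T)$ is exactly where the paper's internal-deviation argument (\Cref{lemma:ref-ub}) does genuine work. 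Your proposal does not engage with this interplay at all.

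By contrast, the paper's actual argument avoids decoupling entirely. It builds an $\soshard$ game (\Cref{theorem:soshard-high}) whose uniform $T$-sparse $\epsilon_0$-CE all have low welfare yet which admits a high-welfare pseudo-CE; a large $\enumhard$ family (\Cref{theorem:enumhard-high}) whose $\epsilon$-CCE (not merely NE) have pairwise-distinct marginals, via generalized matching pennies where CCE and NE coincide by \Cref{fact:zerosum}; and then stitches them (\Cref{sec:stitching}) so that any $T$-sparse CE of the combined game projects down to a $T$-sparse CE of the chosen $\enumhard$ game. Disjointness of CE sets is established at the level of marginals, not by forcing product structure, so convexity poses no obstruction. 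The role of non-external deviations is also very different from what you envision: they are used in \Cref{claim:induced-CE} to let a player behave differently on the two blocks of the stitched game, and in \Cref{lemma:ref-ub} to sharpen the probability bound $\vx_i^{(t)} \vy_i^{(t)} \le 4T/k^2$---neither of which is a decoupling argument.
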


It is worth noting here that there is a tradeoff between the sparsity and the degree precluded by~\Cref{theorem:main-high}; one can elevate the degree at the cost of reducing the sparsity---the extreme case being the result of~\citet{Kothari18:Sum} in which $T = 1$ and $d = \Omega(n)$. To put~\Cref{theorem:main-high} into better context, we point out that an exact CE can be computed in polynomial time via a linear program, and every distribution can be expressed as a mixture---albeit not necessarily uniform per~\Cref{def:sparse}---of $n$ product distributions; that is, modulo the use of non-uniform mixtures, the sparsity ruled out by~\Cref{theorem:main-high} is the best one can hope for.

Turning to the low-precision regime, our result mirrors~\Cref{theorem:main-high} but with quantitatively weaker bounds on $d$ and $q$.

\begin{theorem}
    \label{theorem:main-low}
    Suppose that there is a degree-$d$, $q$-query $\OV$ rounding algorithm for uniform $\log n$-sparse $\epsilon$-CE, with $\epsilon = (\log n)^{-c}$ for some constant $c$. Then, either $d = \Omega(\frac{\log n}{\log \log n}) $ or $q = n^{\Omega(\log n)}$.
\end{theorem}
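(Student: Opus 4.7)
The plan is to follow the blueprint of~\Cref{theorem:main-high} but instantiate the underlying SoS rounding gap at a quasi-polynomial scale, so that sparsity $\log n$ and slack $(\log n)^{-c}$ correspond to planted structures of size $\Theta(\log n)$ rather than $n^{o(1)}$. Concretely, I would construct two distributions over $n \times n$ games $\cG$, a ``YES'' distribution and a ``NO'' distribution, such that (i) every sample from the YES distribution admits a degree-$d$, uniform $\log n$-sparse $\epsilon$-pseudo-CE in the sense of~\Cref{def:pseudo-CE}, (ii) with high probability, a sample from the NO distribution admits no uniform $\log n$-sparse $\epsilon$-CE that can be produced obliviously, and (iii) the two distributions are indistinguishable to any degree-$d$ test with $d = o(\log n / \log\log n)$, so the pseudo-CE built over a YES-sample is also a pseudo-CE for the corresponding NO-sample. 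Oblivious rounding is then forced to emit the same output distribution in both worlds, and any correct rounder either ascends the SoS hierarchy past the degree threshold or else must rely on the verification oracle to brute-force a valid CE on NO-instances.

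The source of SoS hardness is a rescaled version of the planted instance used in the proof of~\Cref{theorem:main-high}, taken now with a planted ``signal'' of size $k = \Theta(\log n)$. At that scale, the standard SoS lower bounds for planted-clique / $k$-XOR style problems degrade gracefully to a degree threshold of $\Omega(\log n / \log\log n)$, which is exactly the quantitative bound appearing in the statement. The verification-oracle piece is handled in parallel with~\Cref{theorem:main-high}: since the rounding algorithm is oblivious, its enumerated list of candidates is distributed identically across YES and NO worlds, and the probability that any fixed $\log n$-sparse product-mixture is an $(\log n)^{-c}$-CE of a random NO-instance is $n^{-\Omega(\log n)}$ by a standard concentration/net estimate over the $2 T n$-dimensional parameter space. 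A union bound over the at most $q$ adaptive queries then forces $q = n^{\Omega(\log n)}$ whenever $d$ sits below the degree threshold.

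The main obstacle, as flagged in the introduction, is bridging the ``CE vs.~NE'' gap when lifting the base SoS lower bound into a pseudo-CE for~\Cref{def:pseudo-CE}: a pseudo-NE for the planted instance gives pseudo-moments on a single product pair $(\vx,\vy)$, whereas the pseudo-CE must simultaneously satisfy the averaged swap-regret inequalities in~\eqref{eq:CE-prod} across all $T = \log n$ copies and all $(\phi_x,\phi_y) \in \Phi \times \Phi$. A clean way to carry this out is to take a $T$-fold product of the base pseudo-NE over disjoint variable blocks $(\vx^{(t)},\vy^{(t)})_{t=1}^{T}$, so that each marginal inequality from the pseudo-NE transfers to the corresponding summand in~\eqref{eq:CE-prod} and averaging preserves the sign; positivity at degree $d$ is preserved since products of pseudo-distributions over disjoint variables remain valid pseudo-distributions. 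The choice of the polynomial-size deviation family $\Phi$ identified in~\Cref{section:CE-Regret} is crucial here, because it bounds both the number of inequalities to be certified and the degree of the associated SoS witnesses. Once this lift is in place, the rest of the argument is largely parameter bookkeeping, scaling the sub-exponential regime of~\Cref{theorem:main-high} down to the quasi-polynomial regime of~\Cref{theorem:main-low}.
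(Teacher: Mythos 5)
The proposal gets the broad coordinates right (planted clique as the source of the degree-$\Omega(\log n/\log\log n)$ SoS bound, and a separate argument against verification queries), but it misses the central construction in the paper and contains at least one step that doesn't go through as stated.

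The paper does not set up a YES/NO distributional distinguishing problem. It reduces \Cref{theorem:main-low} to \Cref{theorem:game-collection-low}: an \emph{explicit} family of $n^{\Omega(\log n)}$ games that (a) all admit a single shared degree-$\Omega(\log n/\log\log n)$, uniform $T$-sparse pseudo-CE, and (b) have pairwise disjoint sets of $\poly(1/\log n)$-approximate uniform $T$-sparse CE. Property (a) means the SoS relaxation gives the rounder zero information about which member of the family it faces; property (b) means each verification query can eliminate at most one game, which is the clean adversary argument that bounds $q$. Your framing in terms of indistinguishable YES/NO \emph{distributions}, a ``NO instance admits no CE that can be produced obliviously,'' and a ``concentration/net estimate over the $2Tn$-dimensional parameter space'' doesn't match this and is problematic on its own terms: every game has a $1$-sparse CE (a Nash), so NO instances certainly admit CE; and a $\delta$-net over $[0,1]^{2Tn}$ with $T=\log n$ has cardinality $\exp(\Theta(n\log n \log(1/\delta)))$, far exceeding $n^{O(\log n)}$, so the union bound as sketched does not give $q=n^{\Omega(\log n)}$. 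The paper sidesteps all of this: a verification query on candidate $\vmu$ returns $\accept$ for at most one game in the family because of disjointness, so $q$ queries rule out at most $q$ games, and $q < n^{\Omega(\log n)} - 1$ leaves the rounder unable to determine the instance.

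The key piece missing from your proposal is the stitching of \Cref{sec:stitching}. The family $\{\cG_k\}$ in \Cref{theorem:game-collection-low} is built by combining, via the block game \eqref{eq:game-stitch}, a single $\soshard$ game $(\mat{R},\mat{C})$ (from planted clique, \Cref{theorem:soshard-low}) with each member of an $\enumhard$ family $\{(\mat{R}^S,\mat{C}^S)\}$ (from the \citet{Daskalakis09:Oblivious} construction, \Cref{theorem:enumhard-low}). The completeness side places the shared pseudo-CE on the $(\mat{R},\mat{C})$ block for all $S$ simultaneously (\Cref{lemma:stitch-completeness}), and the soundness side shows that any approximate CE of $\cG_k$ is essentially an approximate CE of $(\mat{R}^S,\mat{C}^S)$ (\Cref{lemma:stitch-soundness}), which inherits disjointness. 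The soundness argument hinges on the fact that $\Phi$ contains non-external deviations (\Cref{claim:induced-CE}); none of this appears in your sketch, and it is the technically substantive content.

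Finally, the lift from pseudo-NE to uniform $T$-sparse pseudo-CE: the paper does \emph{not} take a $T$-fold product over disjoint blocks. It uses the ``copy'' pseudo-distribution of \Cref{def:tilmu} and \eqref{eq:map2}, defined by $\tilE_{\tilmu}[f(\vx^{(1)},\dots,\vy^{(T)})]=\tilE_{\tilde\rho}[f(\vx,\dots,\vx,\vy,\dots,\vy)]$, which is supported on $\vx^{(1)}=\dots=\vx^{(T)}$, $\vy^{(1)}=\dots=\vy^{(T)}$. This makes each averaged CE constraint in \eqref{eq:CE-prod} literally reduce to the single NE constraint already certified, with no degree loss. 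Your tensor-product construction preserves positivity of the moment matrix, but the constraints in \eqref{eq:CE-prod} are sums over all $T$ blocks, so the SoS certificates would need to survive multiplication by SoS multipliers $h$ whose variables span all blocks; that this holds at the needed degree under \Cref{def:contrainsed-pseudo} is a nontrivial claim you assert without proof, and the paper's copy construction avoids the question entirely.
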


A few remarks are again in order. First, even when $T = 1$, there is a quasipolynomial-time algorithm due to~\citet{Lipton03:Playing} based on exhaustive enumeration on a carefully constructed search space. Further, since $\Phi$ contains a polynomial number of deviations, there are (efficient) no-regret dynamics producing an $O(\log n/\epsilon^2)$-sparse CE (\emph{e.g.}, via the algorithm of~\citet{Gordon08:No}). In particular, in the regime of~\Cref{theorem:main-low}, there is a polynomial-time algorithm for computing $\log^{O(1)} n$-sparse CE, thereby matching the sparsity precluded by~\Cref{theorem:main-low} up to a constant in the exponent; as in~\Cref{theorem:main-high}, optimizing the dependence on $\epsilon$ (as a function of $n$) was not our focus. In fact, by virtue of recent breakthrough results~\citep{Dagan24:From,Peng24:Fast}, it is worth noting that polylogarithmic sparsity is also attainable with respect to the entire set of swap deviations when $\epsilon = \Theta(1)$.

\subsection{Technical overview}

The proof of~\Cref{theorem:main-high,theorem:main-low} follows the blueprint of~\citet{Kothari18:Sum} but with certain important twists and modifications, which are the subject of this subsection. The overarching goal is to establish~\Cref{theorem:game-collection-high,theorem:game-collection-low}, which we state and discuss below.

\begin{theorem}
    \label{theorem:game-collection-high}
    Let $T = n^{1 - o(1)}$. There is a family of $2^{\Omega(n)}$ games $\{ \mathcal{G}_k \}_{k}$ such that
    \begin{enumerate}
        \item there is a degree-$2^{ \Omega( \sqrt{ \log n \log \log n } )}$, uniform $T$-sparse pseudo-CE for every $\cG_k$ simultaneously,\label{item:property1}
        \item for every $k \neq k'$, if $\vmu$ and $\vmu'$ are uniform $T$-sparse $\poly(1/n)$-CE of $\mathcal{G}_k$ and $\mathcal{G}_{k'}$, $\vmu \neq \vmu'$.\label{item:property2}
    \end{enumerate}
\end{theorem}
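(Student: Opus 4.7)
The plan is to adapt the Kothari--Mehta blueprint for Nash equilibria (which corresponds to $T = 1$) to the setting of uniform $T$-sparse correlated equilibria with $T = n^{1-o(1)}$. The construction will rest on embedding a suitable combinatorial hard problem into the payoff matrices $(\mat{R}, \mat{C})$; the target degree $2^{\Omega(\sqrt{\log n \log \log n})}$ points toward a label-cover or free-games style reduction, where SoS lower bounds of this exact form are known. Concretely, each game $\cG_k$ will be indexed by a combinatorial ``planted'' object $k$ (ranging over a family of size $2^{\Omega(n)}$) and the payoffs will be designed so that, without the planted signal, the games are indistinguishable to any low-degree SoS view, while only the correct $k$ can be read off from a genuine $T$-sparse approximate CE.

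For property~\ref{item:property1}, I would construct a single pseudo-distribution on $(\vx^{(1)}, \dots, \vx^{(T)}, \vy^{(1)}, \dots, \vy^{(T)})$ via pseudo-calibration against the uniform prior over the index $k$. Because the SoS proofs underlying the planted-structure lower bound certify all low-degree polynomial identities that the planted marginals satisfy in expectation, one inherits the required normalization, positivity, and simplex constraints $\vx^{(t)},\vy^{(t)}\in\Delta^n$. To lift this to the CE system in~\Cref{def:pseudo-CE}, each of the (polynomially many) deviation inequalities indexed by $\phi_x, \phi_y \in \Phi$ would be matched to a low-degree SoS certificate built from the indistinguishability between the planted and null distributions, with the $\poly(1/n)$ slackness absorbing lower-order fluctuations. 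The degree of the resulting pseudo-CE then inherits the degree threshold of the underlying SoS lower bound, giving the stated $2^{\Omega(\sqrt{\log n \log \log n})}$.

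For property~\ref{item:property2}, I would show that any uniform $T$-sparse $\epsilon$-CE of $\cG_k$ with $\epsilon = \poly(1/n)$ must have its support and correlation pattern concentrated in a way that encodes $k$. Here the carefully chosen non-external deviations in $\Phi$ (the ingredient the authors flag as essential, and missing for CCE) play the role of converting the quadratic CE inequalities into constraints that ``name'' the planted structure; a deterministic recovery procedure then extracts $k$ from $\vmu = \frac{1}{T}\sum_{t=1}^T \vx^{(t)} \otimes \vy^{(t)}$. Since $T = n^{1-o(1)}$ is close to the trivial sparsity $n$, care is needed to prevent the $T$ components from collectively mimicking more than one $k$; the high precision $\epsilon = \poly(1/n)$ makes the CE inequalities nearly tight against each deviation and rules this out, so the induced map $k \mapsto \vmu$ has disjoint ranges across different $k$.

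The main obstacle, and the central tension the proof must navigate, is the two-sided pull between these properties: enlarging the pseudo-distribution in property~\ref{item:property1} (so it is simultaneously feasible for all $2^{\Omega(n)}$ games) fights against the rigidity demanded by property~\ref{item:property2} (where an actual CE must be sharp enough to identify $k$). Unlike the $T = 1$ case, relaxing from Nash to CE opens up correlation degrees of freedom that can a priori be exploited to spread the planted structure across several of the $T$ product components; designing $\Phi$ and the payoff gadgets so that the quadratic CE inequalities still force full recovery---without inadvertently also strengthening what a low-degree SoS proof can certify---is, I expect, the most delicate part of the argument.
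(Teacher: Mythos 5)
Your proposal misses the central structural idea of the paper's argument: the two properties of~\Cref{theorem:game-collection-high} are \emph{not} obtained from a single planted structure indexed by $k$, but are deliberately decoupled across two separate hard instances that are then combined via a ``stitching'' gadget. The paper first builds one fixed $\soshard$ game $(\mat{R}, \mat{C})$ (via a reduction from independent set and Tulsiani's SoS integrality gap, not label cover or free games, and without any pseudo-calibration -- the pseudo-CE is simply pushed forward through the reduction map and then padded with zeros), with the property that there is a degree-$2^{\Omega(\sqrt{\log n \log\log n})}$ pseudo-CE where both players pretend to receive utility $\delta$, whereas every genuine $T$-sparse $\poly(1/n)$-CE has welfare strictly below $2\delta$. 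Separately, it builds a family of $2^{\Omega(n)}$ $\enumhard$ games $\{(\mat{R}^S, \mat{C}^S)\}_S$ (based on generalized matching pennies restricted to $S$), for which approximate CCE marginals pin down $S$. The game $\cG_k$ is then the $2n \times 2n$ block game~\eqref{eq:game-stitch}, with $(\mat{R}, \mat{C})$ in the top-left, $(\mat{R}^S, \mat{C}^S)$ in the bottom-right, and carefully chosen $-k\mat{1}$ and $\delta\mat{1}$ off-diagonal blocks.

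This decomposition is precisely what dissolves the ``two-sided pull'' you correctly flag as the central tension: property~\ref{item:property1} is witnessed by a pseudo-CE supported entirely on the $(\mat{R}, \mat{C})$ block (hence identical for every $\cG_k$), while property~\ref{item:property2} follows because any \emph{actual} $T$-sparse $\poly(1/n)$-CE of $\cG_k$ must have its mass pushed onto the $(\mat{R}^S, \mat{C}^S)$ block (by~\Cref{claim:off-diagonal,claim:induced-CE,claim:Lsmall}), where $S$ is uniquely recoverable. A single planted object driving both properties -- which is what your sketch effectively proposes -- does not resolve this tension; the resolution requires two independent sources of hardness glued by the stitching construction. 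You also correctly observe that non-external deviations are essential, but their role is more specific than ``naming the planted structure'': they are needed (i) in~\Cref{lemma:ref-ub} to obtain the sharp $O(T/k^2)$ bound on $\vx_i^{(t)}\vy_i^{(t)}$ that makes the independent-set reduction support sparsity $T = n^{1-o(1)}$, and (ii) in~\Cref{claim:induced-CE} to argue about deviations in the block $(\mat{R}, \mat{C})$ despite the $-k\mat{1}$ off-diagonal penalty that would make any external deviation into that block catastrophically unprofitable.
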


\Cref{theorem:game-collection-high} immediately implies~\Cref{theorem:main-high}. Indeed, suppose that one selects a game from the family described in~\Cref{theorem:game-collection-high}, unbeknownst to the underlying algorithm. By the first property (\Cref{item:property1}), a degree-$2^{ \Omega( \sqrt{ \log n \log \log n } )}$, uniform $T$-sparse pseudo-CE could not provide any information to discern the game from the class. An $\OV$ rounding algorithm is thus left with only query access to the verification oracle. But, by the second property (\Cref{item:property2}), each query can eliminate only a single game at a time, thereby leading to~\Cref{theorem:main-high}. The counterpart of~\Cref{theorem:game-collection-high} in the low-precision regime is stated below.

\begin{theorem}
    \label{theorem:game-collection-low}
    Let $T = \log n$. There is a family of $n^{\Omega(\log n)}$ games $\{ \mathcal{G}_k \}_{k}$ such that
    \begin{enumerate}
        \item there is a degree-$\Omega(\frac{\log n}{\log \log n})$, uniform $T$-sparse pseudo-CE for every $\cG_k$ simultaneously,
        \item for every $k \neq k'$, if $\vmu$ and $\vmu'$ are uniform $T$-sparse $\poly(1/\log n)$-CE of $\mathcal{G}_k$ and $\mathcal{G}_{k'}$, $\vmu \neq \vmu'$.
    \end{enumerate}
\end{theorem}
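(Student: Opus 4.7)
The plan is to follow the same blueprint used for \Cref{theorem:game-collection-high}, but swap the underlying hard problem for one that admits SoS lower bounds in the \emph{low-precision} regime. A natural choice is \maxclique{} on random graphs $G(n,1/2)$, for which SoS refutations of degree $\Omega(\log n / \log\log n)$ cannot certify a clique-number upper bound polylog$(n)$ times the true value $\sim 2\log n$. This sets both the degree parameter in Property~(1) and the approximation slack $\epsilon = \poly(1/\log n)$ in Property~(2) after the appropriate normalisation in the game gadget.

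First I would fix a base graph $H$ drawn from (or matching the SoS-hard instances of) $G(n,1/2)$ and, for each tuple $k = (C_1,\dots,C_T)$ of $T = \log n$ vertex subsets of $[n]$ each of size $\log n$, construct a game $\cG_k$ whose payoffs encode $H$ with the cliques $C_1,\dots,C_T$ planted in. Following the clique-game gadget of Lipton--Markakis--Mehta and~\citet{Kothari18:Sum}, combined with the polynomial-size swap deviation set $\Phi \subseteq \Phi_{\mathrm{swap}}$ from~\Cref{section:CE-Regret}, the gadget should guarantee that every component of a uniform $T$-sparse $\poly(1/\log n)$-CE of $\cG_k$ is (approximately) supported on one of the planted cliques. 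Non-external swap deviations are essential here, which is why the argument breaks for CCE. Since there are $\binom{n}{\log n}^{\Theta(T)} = n^{\Omega(\log n)}$ admissible indices $k$, the family has the claimed cardinality.

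For Property~(1) I would exhibit a single degree-$\Omega(\log n/\log\log n)$ pseudo-distribution that ``believes'' in the existence of a $\log n$-clique in $H$, and which, by the SoS indistinguishability of planted versus unplanted instances at this degree, is simultaneously valid for every $\cG_k$. Taking $T$ independent copies on disjoint blocks of pseudo-variables and combining them in the natural way then yields a uniform $T$-sparse pseudo-CE meeting the constraints of~\Cref{def:pseudo-CE} for every game in the family.

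The main obstacle is Property~(2), namely that any uniform $T$-sparse $\poly(1/\log n)$-CE of $\cG_k$ determines $k$. When $T = 1$ this is the familiar statement that a near-NE must concentrate on a near-clique which, by construction, is the planted one. For $T \geq 2$ the rounding algorithm could in principle mix and match components across different games $\cG_k, \cG_{k'}$ and exhibit the same mixture $\vmu$. Ruling this out requires (i) a carefully chosen family of swap deviations $\Phi$ that penalises per-component mass placed outside the planted cliques of $\cG_k$---bearing in mind that each $\phi_x$ appearing in~\eqref{eq:CE-prod} acts simultaneously on all $T$ components of the mixture---and (ii) a combinatorial argument, exploiting the rarity of $(\log n)$-cliques in $H$, that the unordered collection of marginals of $\vmu$ uniquely recovers the planted tuple $k$. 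Step~(i) is the crux of the proof and is precisely what forces the use of non-external deviations.
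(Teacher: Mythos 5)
Your proposal diverges fundamentally from the paper's construction, and the divergence introduces a genuine gap.

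The paper's proof is modular: it builds each game $\cG_k$ as a $(2n)\times(2n)$ ``stitched'' game (Section~\ref{sec:stitching}, equation~\eqref{eq:game-stitch}) in which the top-left block is a \emph{fixed} $\soshard$ game (\Cref{theorem:soshard-low}, from Pang's planted-clique SoS lower bound), and the bottom-right block is drawn from a family of $\enumhard$ games (\Cref{theorem:enumhard-low}, from the Alth\"ofer / Daskalakis--Papadimitriou construction). The two concerns are then cleanly decoupled: completeness (Property~1) follows because the same pseudo-CE, supported entirely on the shared top-left block, is trivially valid for every $\cG_k$ no matter which $\enumhard$ game sits in the bottom-right (\Cref{lemma:stitch-completeness}); soundness (Property~2) follows because any genuine $\poly(1/\log n)$-CE is forced, via the $-k\mat{1}$ off-diagonal penalty and an internal-deviation argument, to live almost entirely in the bottom-right block, where the $\enumhard$ property makes the equilibria pairwise distinct (\Cref{claim:off-diagonal,claim:induced-CE,claim:Lsmall,lemma:stitch-soundness}).

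Your proposal instead builds $\cG_k$ from a \emph{single} graph gadget that varies with $k$ --- the random graph $H$ with the tuple of cliques $(C_1,\dots,C_T)$ planted --- and tries to obtain both properties from that one mechanism. The problem is Property~1. The pseudo-CE constraints in~\Cref{def:pseudo-CE} depend on the payoff matrices of $\cG_k$, which in your construction genuinely change with $k$ (different planted cliques, different utilities). Appealing to ``SoS indistinguishability of planted versus unplanted instances'' gives you a pseudo-distribution believing in a clique of $H$, but it does not give you a single pseudo-distribution simultaneously satisfying the CE polynomial system of $n^{\Omega(\log n)}$ \emph{different} games; those are $n^{\Omega(\log n)}$ distinct SoS-feasibility statements, and you would have to verify each one. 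Worse, planting cliques into $H$ risks destroying the very SoS-hardness you invoke: the $\soshard$ property requires that no genuine $T$-sparse CE attains the welfare claimed by the pseudo-distribution, which hinges on $H$ having \emph{no} large clique; planting cliques would furnish the sought-for equilibria and collapse the gap. The paper sidesteps both issues because the SoS-hard block is fixed and never touched by the varying part of the game. You do correctly identify the crucial role of non-external deviations and the planted-clique SoS lower bound as the right tool in the low-precision regime, but the proposal as written does not supply a workable completeness argument, and you acknowledge yourself that the soundness argument (your step~(i)) is unresolved.
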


Establishing~\Cref{theorem:game-collection-high,theorem:game-collection-low} can be in turn divided into three key steps, which are treated separately in what follows.

\paragraph{SoS hardness for the welfare maximization variant} The first one revolves around establishing SoS lower bounds for uniform $T$-sparse CE but under an additional welfare maximization constraint; of course, the final problem of interest does not involve any welfare constraint, but this focus will be justified in the next part of the argument. Starting from the high-precision regime, let us first consider the special case where $T = 1$ (that is, Nash equilibria). With a slight modification to the construction of~\citet{Gilboa89:Nash}, \citet{Kothari18:Sum} observed that there is a reduction from (the decision variant of) the independent set problem to deciding whether there is a Nash equilibrium exceeding a certain welfare threshold. In conjunction with the SoS hardness result for the independent set problem due to~\citet{Tulsiani09:CSP}, they were able to construct a game such that there is a pseudo-NE exceeding a certain welfare threshold, while in reality all NE attain welfare considerably below it; such a game is referred to as $\soshard$. For the more challenging problem of uniform $T$-sparse CE (still subject to maximizing welfare), the question is to understand how the sparsity constraint interacts with such reductions.

We show that uniform $T$-sparse CE can still be accounted for by starting from a \emph{gap-amplified} (PCP-type) version of independent set. In particular, our starting point is a recent lower bound for welfare-optimal sparse \emph{coarse} correlated equilibria~\citep{Anonymous24:Barriers}. Even though a uniform $T$-sparse CE is a stronger notion than uniform $T$-sparse CCE, hardness results for the latter do not readily carry over because of the underlying welfare constraint. Indeed, it turns out that in their reduction any CE obtains lower welfare than the welfare-optimal CCE. As a result, we observe that a direct adaptation of their approach to sparse CE necessitates a gap of $\Theta(T^2)$ (\Cref{lemma:IS}). One of our key contributions is to appropriately leverage internal deviations so as to reduce the gap of the reduction to $\Theta(T)$ (\Cref{lemma:ref-IS}); our reduction for sparse CE presents several interesting geometric features, discussed in detail in~\Cref{sec:hard-sos}.

Now, crucially, the result of~\citet{Tulsiani09:CSP} can support a gap up to $n^{1 - o(1)}$ (\Cref{theorem:Tulsiani-IS}), and this is precisely why the sparsity bound $T = n^{1 - o(1)}$ appears in~\Cref{theorem:game-collection-high}. In particular, this leads to an $\soshard$ game with respect to uniform $T$-sparse CE (per~\Cref{def:soshard}), as we formalize in~\Cref{theorem:soshard-high}. The argument in the low-precision regime has a similar flavor (\Cref{theorem:soshard-low}), but instead relies on SoS lower bounds for the \emph{planted clique} problem~\citep{Pang21:SOS}. It is worth noting that~\citet{Kothari18:Sum} had to follow a different path since Pang's result was not available at the time---and earlier integrality gaps~\citep{Barak19:Nearly} were not quite suited for such purposes.

\paragraph{Hardness for enumeration algorithms} The second step is to provide (information-theoretic) lower bounds against algorithms that access the game only through a verification oracle---\emph{oblivious algorithms} in the parlance of~\citet{Daskalakis09:Oblivious}; the rationale behind this consideration is that an $\OV$ rounding algorithm that obtains no meaningful information from the SoS relaxation is thereby reduced to an oblivious algorithm. This problem is well-understood for Nash equilibria since the work of~\citet{Daskalakis09:Oblivious}. The basic idea is that one can construct a large family of games whose set of $\epsilon$-Nash equilibria are pairwise disjoint (as in~\Cref{item:property2} we saw earlier); deriving sharp bounds requires a careful construction, discussed in~\Cref{sec:enum}. In \Cref{theorem:enumhard-low,theorem:enumhard-high}, we show that such bounds carry over to coarse correlated equilibria as well (no matter the sparsity). The simple observation here is that the existing lower bounds for Nash equilibria against oblivious algorithms are based on (modulo strictly dominated actions that can be easily accounted for) constant-sum games; in such games, CCE and NE are tantamount---in the precise sense of~\Cref{fact:zerosum}. We call such a family $\enumhard$ games (\Cref{def:enumhard}).

\paragraph{Combining the games} The final crucial piece in the construction shows how to appropriately combine such games to arrive at~\Cref{theorem:game-collection-high,theorem:game-collection-low}. In particular, let $(\mat{R}, \mat{C})$ be $\soshard$: a game such that there is a degree-$d$, uniform $T$-sparse pseudo-CE in which both players obtain utility at least $\delta$, while all uniform $T$-sparse CE attain welfare at most $2 \delta - 2 \epsilon$. Consider further a family of $\enumhard$ games $\{ (\mat{R}^S, \mat{C}^S) \}_{S \in \mathcal{S}}$. The idea is to consider the game
\begin{equation*}
    \mat{R}' =
    \begin{pmatrix}
        \mat{R} & - k \mat{1}_{n \times n} \\
        \delta \mat{1}_{n \times n} & \mat{R}^{S}
    \end{pmatrix} \quad
    \text{and} \quad
    \mat{C}' =
    \begin{pmatrix}
        \mat{C} & \delta \mat{1}_{n \times n} \\
        -k \mat{1}_{n \times n} & \mat{C}^S
    \end{pmatrix},
\end{equation*}
where $\mat{1}_{n \times n}$ denotes the all-ones $n \times n$ matrix and $k \gg 1$.

We first want to argue that uniform $T$-sparse CE in $(\mat{R}', \mat{C}')$ are close to uniform $T$-sparse CE in $(\mat{R}^S, \mat{C}^S)$. 
Indeed, the crucial role of the $-k \mat{1}_{n \times n}$ term in the off-diagonal of $\mat{R}' + \mat{C}'$ is that it forces any product distribution comprising a $T$-sparse (C)CE in $(\mat{R}', \mat{C}')$ to be either supported only on $(\mat{R}, \mat{C})$ or only $(\mat{R}^S, \mat{C}^S)$---up to some probability mass that decays when $k$ is large (\Cref{claim:off-diagonal}). The goal is to prove that actually only the second case can arise: each product is essentially only supported on $(\mat{R}^S, \mat{C}^S)$, at which point it readily follows that a uniform $T$-sparse CE in $(\mat{R}', \mat{C}')$ must be a uniform $T$-sparse CE in $(\mat{R}^S, \mat{C}^S)$ as well (with roughly the same solution quality). To prove the former assertion, we show that in the contrary case, the conditional distribution on $(\mat{R}, \mat{C})$ must be a uniform $T$-sparse CE for that game (\Cref{claim:induced-CE}). This argument carefully hinges on using a richer set of deviations than external ones. The basic reason is that we want to apply a different mapping when on $(\mat{R}, \mat{C})$ compared to when being on $(\mat{R}^S, \mat{C}^S)$, a functionality not supported by external deviations. To explain this point further, an external deviation for Player $x$ to an action in $\mat{R}$ could result in an overwhelmingly negative utility due to the $-k \mat{1}_{n \times n}$ term: for every component supported on $(\mat{R}^S, \mat{C}^S)$, such a deviation inevitable incurs $-k$. It is thus unclear how to argue about deviations in $(\mat{R}, \mat{C})$ starting from $(\mat{R}', \mat{C}')$. This is no longer a concern when players are able to deviate differently on different components of the game. Continuing from~\Cref{claim:induced-CE}, we now reach a contradiction: uniform $T$-sparse CE of $(\mat{R}, \mat{C})$ attain welfare considerably lower than $2 \delta$ (by the property of the $\soshard$ game), and at the same time, there is a suitable deviation that takes advantage of the $\delta \mat{1}_{n \times n}$ term in $\mat{R}'$ and $\mat{C}'$ (\Cref{claim:Lsmall}); this contradicts the assumption that we started from a CE.

What remains to show is that there is a degree-$d$, uniform $T$-sparse pseudo-CE shared among the games no matter the selection of $(\mat{R}^S, \mat{C}^S)$ from the family of $\enumhard$ games. This is a fairly straightforward exercise, similar to the completeness proof for Nash equilibria (\Cref{lemma:stitch-completeness}).
\subsection{Related work}
\label{sec:related}

Related to sparsity per~\Cref{def:sparse}, it is worth pointing out here a notion of sparsity which instead imposes a bound on the \emph{support}---the number of nonzero elements---of the correlated distribution $\vmu$. The latter is clearly more stringent, and has been the subject of much investigation in the past (\emph{e.g.}, \citep{Babichenko14:Simple,Feder07:Approximating,Althofer94:Sparse}). From the perspective of regret minimization, lower bounds on the support of $\vmu$ have been successfully employed so as to preclude fast convergence of no-regret dynamics when players select \emph{pure strategies}, while the more permissive notion we study here accounts for \emph{mixed strategies} as well. One crucial difference between the two is that there are no such information-theoretic barriers surrounding \Cref{def:sparse}, in that a $1$-sparse CE always exists~\citep{Nash50:Non}; this is precisely why we need to resort to computational lower bounds.

\citet{Foster23:Hardness} recently introduced~\Cref{def:sparse} in the context of Markov games (see also~\citep{Peng24:Complexity}), and showed that---under standard complexity assumptions---no polynomial sparsity can be attained in such games; this stands in stark contrast to normal-form games. Unfortunately, directly translating such lower bounds to normal-form games appears to be of little use as it does not result in a polynomial-time reduction. Yet, it is worth noting that certain lower bounds have been documented for restricted classes of algorithms, such as multiplicative weights update~\citep{Peng24:Complexity}. Besides those recent works, sparsity played a key role in the celebrated \emph{ellipsoid against hope} algorithm of~\citet{Papadimitriou08:Computing} (\emph{cf.}~\citep{Jiang11:Polynomial,Farina24:Polynomial}). Indeed, that algorithm outputs a CE that is a convex combination of polynomially many product distributions, and is able to compute a correlated equilibrium even in many succinctly represented multi-player games; a correlated distribution in such games is in general an exponential object, but sparsity crucially provides a succinct representation.

Beyond the computational perspective, a rather orthogonal approach to proving lower bounds for no-regret dynamics is to resort to query or communication complexity (\emph{e.g.}, \citep{Maiti23:Query,Conitzer04:Communication,Goldberg16:Bounds,Goldberg23:Lower,Babichenko15:Query,Fearnley16:Finding,Goos23:Near,Hadiji23:Towards}). One compelling aspect of the former compared to the latter approach is that computational lower bounds apply even in the centralized model of computation where the entire game is known in advance. This is crucial given that no-regret learning is consistently part of the best known algorithms for equilibrium computation problems. Further, query or communication lower bounds are quite brittle depending on how the game is accessed and information is distributed among the players. On the other hand, it has to be noted that the model we operate in is so permissive that no meaningful lower bounds can be established in, for example, (two-player) zero-sum games.
\section{Further preliminaries}
\label{sec:prels}

Continuing from~\Cref{sec:init-prels}, this section provides some further background on regret minimization and the SoS paradigm.

\paragraph{Notation} We let $\N = \{1, 2, \dots \}$. We use the shorthand notation $[n] \defeq \{1, \dots, n\}$. We use lowercase boldface letters, such as $\vx$ and $\vy$, to represent vectors. The $i$th coordinate of $\vx \in \R^n$ is accessed by $\vx_i$. For $S \subseteq [n]$, $\vx_S \in \R^{S}$ denotes the projection of $\vx$ on the coordinates in $S$. For $\vx, \vx' \in \R^n$, we use $\langle \vx, \vx' \rangle \defeq \sum_{i=1}^n \vx_i \vx_i'$ for their inner product. $\|\vx\|_1 \defeq \sum_{i=1}^n |\vx_i| $ denotes the $\ell_1$ norm of $\vx$ and $\| \vx\|_\infty \defeq \max_{i \leq n} |\vx_i|$ the $\ell_\infty$ norm. For $\vx, \vx' \in \R^n$ and $S \subseteq [n]$, we use the convention that $\|\vx - \vx'_S \|_1 = \|\vx_S - \vx_S'\|_1 + \sum_{i \notin S} |\vx_i|$. $\vec{e}_i \in \Delta^n$ is defined such that its $i$th entry is equal to $1$. If $\vx \in \R^{n + m}$, for $n, m \in \N$, we let $\R^n \ni \vx_{i \leq n} \defeq (\vx_1, \dots, \vx_n)$ and $\R^m \ni \vx_{i \geq n+1} \defeq (\vx_{n+1}, \dots, \vx_{n+m})$. For matrices, we use uppercase boldface letter, such as $\mat{R}$ and $\mat{C}$. The $(i,j)$ entry of $\mat{R}$ is accessed by $\mat{R}_{i, j}$. $\mat{I}_n$ is the $n \times n$ identity matrix, $\mat{0}_{n \times n}$ is the all-zeroes $n \times n$ matrix, while $\mat{1}_{n \times n}$ is the all-ones $n \times n$ matrix. Superscripts (with parantheses) are typically used for quantities related to the discrete time. We use the $O(\cdot), \Theta(\cdot), \Omega(\cdot)$ notation to suppress absolute constants.

\subsection{Correlated equilibria and regret minimization}
\label{section:CE-Regret}

We begin by introducing precisely the notion of a correlated equilibrium we employ and its connection with regret minimization. For further background, we refer to the excellent book of~\citet{Cesa-Bianchi06:Prediction}.

\paragraph{Correlated equilibria} The concept of the \emph{correlated equilibrium}~\citep{Aumann74:Subjectivity} was presented earlier (as \Cref{def:CE}), but we deferred the definition of set $\Phi$ until now. To do so, we let $\Phi_{\textrm{int}}$ be the set of \emph{internal deviations}~\citep{Stoltz05:Internal,Blum07:From} on $[n] \to [n]$; namely, $\phi \in \Phi_{\textrm{int}}$ if there exists $i' \in [n]$ such that $\phi(i) = i$ for all $i \in [n] \setminus \{i' \}$. $\Phi_{\textrm{int}}$ is a subset of $\Phi_{\textrm{swap}}$---the latter being the set of all deviations from $[n]$ to $[n]$. In the literature, correlated equilibria are typically defined with respect to either $\Phi_{\textrm{int}}$ or $\Phi_{\textrm{swap}}$ (\emph{e.g.}, \citep{Ganor18:Communication,Goldberg16:Bounds}). Up to a factor of $n$, those two definitions are equivalent~\citep{Blum07:From}. Nevertheless, especially in the low-precision regime, those two should not be blurred. In this context, our lower bounds do not require the entire set of swap deviations; instead, it suffices to consider the subset of swap deviations for which there exist $i', i_0 \in [n]$ such that
\[
    \phi(i) = 
    \begin{cases}
        i' & i \leq i_0,\\
        i & i > i_0.
    \end{cases}
\]
This set, say $\widehat{\Phi}$, contains as a subset the set of external deviations (by taking $i_0 = n$). It is a somewhat artificial set, and so for more clarity in the exposition, we augment that set with internal deviations as well. That is, we work with $\Phi \defeq \widehat{\Phi} \cup \Phi_{\textrm{int}}$. (Alternatively, we could have worked with $\Phi_{\textrm{swap}}$ throughout, but we prefer to derive our lower bounds with respect to a set with a polynomial number of deviations.)

For a mixed strategy $\vx \in \Delta^n$, we use (for example, in~\eqref{eq:CE-prod} earlier) the extended mapping (with an overload in the notation) $\Phi \ni \phi_x : \vx \mapsto \E_{i \sim \vx} \phi_x(i)$. In particular, it is useful to observe that $\phi_x$ can be represented through a (column-)stochastic matrix $\mat{M}_x$, so that $\phi_x(\vx) = \mat{M}_x \vx$ for all $\vx \in \Delta^n$.

\paragraph{Connection with no-regret learning} While the notion of a sparse CE (\Cref{def:sparse}) is natural in its own right, our motivation stems from its intimate connection with the framework of \emph{no-regret learning}. The upshot is that $T$ rounds of learning produces, essentially by definition, a uniform $T$-sparse correlated distribution whose CE gap (\Cref{def:CE}) can be bounded in terms of the players' $\Phi$-regret (\Cref{prop:regret-CE}).

Taking a step back, the framework of no-regret learning sees a learner interact in a sequence of $T$ iterations with an environment. In each round, the learner selects a strategy $\vx^{(t)} \in \Delta^n$, whereupon the environment selects a (linear) utility function $\vx \mapsto \langle \vx, \vu^{(t)} \rangle$; at that round, the learner obtains a utility of $\langle \vx^{(t)}, \vu^{(t)} \rangle$, and observes $\vu^{(t)}$ as feedback from the environment. How should the learner engage in this repeated interaction? A canonical measure of performance revolves around \emph{$\Phi$-regret}~\citep{Stoltz07:Learning,Greenwald03:Correlated}, defined as
\begin{equation}
    \label{eq:Phireg}
    \reg_{\Phi}^T \defeq \max_{\phi \in \Phi} \left\{ \sum_{t=1}^T \langle \phi(\vx^{(t)}), \vu^{(t)} \rangle \right\} - \sum_{t=1}^T \langle \vx^{(t)}, \vu^{(t)} \rangle.
\end{equation}

For example, \eqref{eq:Phireg} under $\Phi_{\text{ext}}$ gives rise to external regret, while $\Phi_{\text{int}}$ to internal regret. At this point, it should be evident that there is an immediate connection between sparse correlated equilibria and $\Phi$-regret, as specified below.

\begin{proposition}
    \label{prop:regret-CE}
    Consider $T$ iterations in which Player $x$ observes $\vu_x^{(t)} = \mat{R} \vy^{(t)}$ and Player $y$ observes $\vu_y^{(t)} = \mat{C}^\top \vx^{(t)}$ for all $t \in [T]$. Then, if the maximum of the players' $\Phi$-regret is $\reg_{\Phi}^T$, the correlated distribution $\frac{1}{T} \sum_{t=1}^T \vx^{(t)} \otimes \vy^{(t)}$ is a $\reg_{\Phi}^T / T$-CE.
\end{proposition}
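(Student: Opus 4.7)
The plan is to unfold the definitions and observe that, because each round's distribution is a product $\vx^{(t)} \otimes \vy^{(t)}$, the expected payoff under $\vmu = \frac{1}{T} \sum_{t=1}^T \vx^{(t)} \otimes \vy^{(t)}$ reduces, by linearity, to the time-averaged round-by-round payoffs that show up in the regret definition \eqref{eq:Phireg}. Concretely, I would first verify that $\vmu$ is a valid correlated distribution on $[n] \times [n]$ (each $\vx^{(t)} \otimes \vy^{(t)}$ is a distribution, and $\vmu$ is a convex combination of those). Then I would fix an arbitrary deviation $\phi_x \in \Phi$ and write
\begin{equation*}
\E_{(i,j) \sim \vmu} \mat{R}_{\phi_x(i), j} - \E_{(i,j) \sim \vmu} \mat{R}_{i, j} = \frac{1}{T} \sum_{t=1}^T \Big( \langle \phi_x(\vx^{(t)}), \mat{R} \vy^{(t)} \rangle - \langle \vx^{(t)}, \mat{R} \vy^{(t)} \rangle \Big),
\end{equation*}
using the extended (linear) action of $\phi_x$ on $\Delta^n$ described in the preliminaries, i.e. $\phi_x(\vx^{(t)}) = \mat{M}_x \vx^{(t)}$.

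Next, I would substitute the utility vector Player $x$ observes, $\vu_x^{(t)} = \mat{R} \vy^{(t)}$, so the right-hand side becomes $\frac{1}{T}\big( \sum_{t=1}^T \langle \phi_x(\vx^{(t)}), \vu_x^{(t)} \rangle - \sum_{t=1}^T \langle \vx^{(t)}, \vu_x^{(t)} \rangle \big)$. Taking the maximum over $\phi_x \in \Phi$ yields exactly $\reg_\Phi^T / T$ for Player $x$'s regret stream, which is by assumption at most $\reg_\Phi^T / T$. This verifies the first inequality in \eqref{eq:CE}. Symmetrically, for any $\phi_y \in \Phi$, using $\vu_y^{(t)} = \mat{C}^\top \vx^{(t)}$ and the identity $\langle \vx^{(t)}, \mat{C} \phi_y(\vy^{(t)}) \rangle = \langle \phi_y(\vy^{(t)}), \mat{C}^\top \vx^{(t)} \rangle = \langle \phi_y(\vy^{(t)}), \vu_y^{(t)} \rangle$, the same manipulation reduces the CE gap for Player $y$ to Player $y$'s $\Phi$-regret divided by $T$, which is again at most $\reg_\Phi^T / T$.

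Since both inequalities in \eqref{eq:CE} hold with slack $\reg_\Phi^T / T$, $\vmu$ is a $(\reg_\Phi^T/T)$-CE, as claimed. There is no real obstacle here: the statement is essentially a bookkeeping exercise and the only care needed is (i) using that $\phi_x, \phi_y$ extend linearly to mixed strategies so that the deviation term factorizes across the product components, and (ii) correctly identifying the utility streams $\vu_x^{(t)}, \vu_y^{(t)}$ so that the time-averaged gap matches the definition \eqref{eq:Phireg} of $\Phi$-regret, including for the asymmetric roles of $\mat{R}$ and $\mat{C}^\top$ (which is where the transpose in $\vu_y^{(t)}$ does all the work).
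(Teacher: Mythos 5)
Your proof is correct, and the paper in fact states \Cref{prop:regret-CE} without proof, treating it (as you note) as a bookkeeping consequence of the definitions; your unfolding via linearity of expectation over the mixture, the linear extension $\phi_x(\vx) = \mat{M}_x \vx$, and the identification of $\vu_x^{(t)}, \vu_y^{(t)}$ with the regret payoff streams is exactly the intended argument. One tiny wording nit: taking the max over $\phi_x$ yields Player $x$'s own $\Phi$-regret divided by $T$, which is then at most $\reg_\Phi^T/T$ because $\reg_\Phi^T$ is defined as the maximum of the two players' regrets; you conflate these two quantities in passing, but the inequality chain is still valid.
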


\subsection{Sum-of-squares (SoS)}

We continue by stating some basic facts about SoS algorithms. It might first be helpful to restate the definition of a pseudo-expectation.

\pseudoexp*

The point here is that one relaxes the constraint $\tilmu \geq 0$ of a measure by only requiring that $\tilmu$ satisfies low-degree nonnegativity constraints. Indeed, a degree-$\infty$ pseudo-distribution is a probability distribution on $\R^m$. Since $\tilE_{\tilmu}$ is a linear operator, it can be fully specified by the order-$d$ tensor $\tilE_{\tilmu} [ (1, \vz)^{\otimes d}]$, which contains the \emph{pseudo-moments} of $\tilmu$. A crucial property that drives the SoS framework is that there is a separation oracle with running time $m^{O(d)}$ for the convex set of the degree-$d$ moment tensors of all degree-$d$ pseudo-distributions.

\begin{fact}[\citep{Nesterov00:Semidefinite,Parrilo00:Structured,Lasserre01:Global}]
    For any $d \in \N$, the convex set
    \[
        \left\{ \tilE_{\tilmu} (1, \vz)^{\otimes d} : \text{ $\tilmu$ is a degree-$d$ pseudo-distribution on $\R^m$}\right\}
    \]
    admits a weak separation~\citep{Grotschel81:Ellipsoid} with running time $m^{O(d)}$.
\end{fact}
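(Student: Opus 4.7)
The plan is to exhibit the convex set explicitly as a spectrahedron and then invoke standard PSD-cone separation. First I would identify the ambient space: the degree-$d$ moment tensor $\tilE_{\tilmu}[(1,\vz)^{\otimes d}]$ is determined by the pseudo-moments $\tilE_{\tilmu}[\vz^\alpha]$ for multi-indices $\alpha$ with $|\alpha|\le d$, so it lives in $\R^N$ for $N = \binom{m+d}{d} = m^{O(d)}$. The normalization condition translates into a single linear equality $\tilE_{\tilmu}[1]=1$. The positivity condition is equivalent to positive semidefiniteness of the degree-$\lfloor d/2\rfloor$ \emph{moment matrix} $\mat{M}(\tilmu)$ whose rows and columns are indexed by monomials of degree at most $d/2$, with entry $\mat{M}(\tilmu)[\alpha,\beta] = \tilE_{\tilmu}[\vz^{\alpha+\beta}]$: expanding $p = \sum_\alpha \vec{c}_\alpha \vz^\alpha$ gives $\tilE_{\tilmu}[p^2] = \vec{c}^\top \mat{M}(\tilmu)\vec{c}$, so nonnegativity over all such $p$ is exactly $\mat{M}(\tilmu) \succeq 0$. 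The set in question is therefore the intersection of an affine hyperplane with the linear preimage of the PSD cone of an $N' \times N'$ matrix where $N' = \binom{m+\lfloor d/2\rfloor}{\lfloor d/2\rfloor} = m^{O(d)}$.

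Next I would assemble the weak separation oracle. On input a candidate tensor $\mat{T}\in\R^N$, the oracle first verifies the normalization equality in time $O(1)$, and outputs the corresponding violated halfspace if it fails. Otherwise it assembles $\mat{M}(\mat{T})$ in time $O(N^2) = m^{O(d)}$ and approximates the minimum eigenpair $(\lambda_{\min},\vu)$ via standard iterative methods (power method on a shift, or Lanczos), all in $\poly(N') = m^{O(d)}$ arithmetic operations. If $\vu^\top \mat{M}(\mat{T})\vu \ge -\eta$ for the tolerance $\eta$ dictated by the weak-separation specification of~\citep{Grotschel81:Ellipsoid}, declare approximate membership; otherwise lift $\vu$ to the polynomial $p_\vu(\vz) = \sum_\alpha \vu_\alpha \vz^\alpha$ and output the linear functional $\mat{T}' \mapsto \vu^\top \mat{M}(\mat{T}')\vu = \tilE_{\tilmu'}[p_\vu^2]$, which is a valid separating hyperplane because it is nonnegative on every true pseudo-moment tensor yet strictly negative at $\mat{T}$. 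The coefficients of this hyperplane (in the $N$ coordinates of $\mat{T}'$) are recovered directly from the outer product $\vu\vu^\top$ in $m^{O(d)}$ time.

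Finally I would bound the complexity: each of the three subroutines (normalization check, moment-matrix assembly, approximate eigendecomposition, hyperplane readout) runs in $m^{O(d)}$ time, so the total is $m^{O(d)}$ as claimed. The one delicate step I anticipate is calibrating the eigenvalue-approximation accuracy and the bit complexity against the tolerance parameters of the weak-separation model; in particular, one must ensure that the conversion between coefficient vectors of polynomials and entries of $\mat{M}$ does not blow up ratios of inner-radius to outer-radius of the convex set. This is a standard bookkeeping argument (using that the coefficient-to-moment linear map has condition bounded by a function of $m$ and $d$), but it is the only non-mechanical piece of the proof.
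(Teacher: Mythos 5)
The paper gives no proof of this Fact; it is stated by citation only. Your argument is the standard derivation one would find in the cited references and in the usual SoS lecture notes, and it is essentially correct: you identify the set as a spectrahedron (the affine slice $\tilE[1]=1$ of the linear preimage of the PSD cone under the moment-matrix map $\mat{T}\mapsto\mat{M}(\mat{T})$), and then build a weak separation oracle from an approximate minimum-eigenvector computation, reading off a separating halfspace from $\vu\vu^\top$ when $\mat{M}(\mat{T})$ is detectably non-PSD. The bookkeeping you flag at the end is indeed the real content of ``weak'' separation, and it is correctly left as standard.

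One small gap worth closing explicitly: you use, without stating it, that the spectrahedral description is \emph{exact} --- i.e., that every tensor $\mat{T}$ with $\mat{T}_\emptyset=1$ and $\mat{M}(\mat{T})\succeq 0$ actually arises as $\tilE_{\tilmu}[(1,\vz)^{\otimes d}]$ for some finitely supported signed measure $\tilmu$. If the spectrahedron were a strict superset, your oracle could declare approximate membership at points outside the target set. The fix is one line: the linear functional $p\mapsto\sum_\alpha \mat{T}_\alpha\,[\vz^\alpha]p$ on $\R[\vz]_{\le d}$ can always be represented by a finite signed combination of point evaluations, since evaluations at finitely many points span the (finite-dimensional) dual of $\R[\vz]_{\le d}$; any such representing signed measure is a degree-$d$ pseudo-distribution with the prescribed pseudo-moments. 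With that observation added, the argument is complete and matches what the cited sources do.
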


This fact enables efficiently optimizing with respect to the set of pseudo-distributions. Now, let us consider a system of $\ell$ polynomial inequalities $\{ f_1 \geq 0, \dots, f_{\ell} \geq 0 \}$ on $\R^m$. A pseudo-distribution $\tilmu$ satisfies a polynomial inequality constraint $f \geq 0$ if for every polynomial $p$ with $\deg(p^2 f) \leq d$, it holds that $\tilE_{\tilmu}[p^2 f] \geq 0$. More broadly, we have the following definition (to which~\Cref{def:pseudo-CE} is subjected).

\begin{definition}[Constrained pseudo-distribution]
    \label{def:contrainsed-pseudo}
    Let $\tilmu : \R^m \to \R $ be a degree-$d$ pseudo-distribution. We say that $\tilmu$ satisfies a system of polynomial constraints $\{ f_1 \geq 0, \dots, f_{\ell} \geq 0 \}$ if for every set $S \subseteq [\ell]$ and every sum-of-squares polynomial $h$ such that $\deg(h) + \sum_{i \in S} \deg(f_i) \leq d$,
    \[
        \tilE_{\tilmu} \left[ h \prod_{i \in S} f_i \right] \geq 0.
    \]
\end{definition}
For additional flexibility, some authors replace $\deg(f_i)$ by $\max ( r, \deg(f_i))$, in which case $\tilmu$ is said to satisfy the system at level $r \in \N$, but that will not be useful for our purposes here. Further, we say that a system of polynomial constraints is \emph{explicitly bounded} if it contains a constraint of the form $\|\vz\|^2 \leq M$. (In our applications, explicit boundedness is always met.)

\begin{fact}
    For any explicitly bounded and satisfiable system of $\ell$ polynomial constraints in $m$ variables, there is an $(m + \ell)^{O(d)}$-time algorithm that outputs a degree-$d$ pseudo-distribution satisfying that system.
\end{fact}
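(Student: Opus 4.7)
The plan is to cast the problem as a semidefinite program in the pseudo-moment variables and solve it via the ellipsoid method, in the spirit of the preceding fact. The crucial observation is that a degree-$d$ pseudo-distribution $\tilmu$ on $\R^m$ is fully parameterized by its pseudo-moment tensor $\tilE_{\tilmu}[(1,\vz)^{\otimes d}] \in \R^{m^{O(d)}}$. Both the pseudo-distribution axioms and the constraint-satisfaction requirements of~\Cref{def:contrainsed-pseudo} are linear in these pseudo-moments: normalization is a single linear equation; positivity $\tilE_{\tilmu}[p^2] \geq 0$ for all $p$ of degree at most $d/2$ is equivalent to positive-semidefiniteness of the $m^{O(d)}$-sized moment matrix, whose entries are linear combinations of pseudo-moments; and for each $S \subseteq [\ell]$ with $\sum_{i \in S} \deg(f_i) \leq d$, the condition $\tilE_{\tilmu}[h \prod_{i \in S} f_i] \geq 0$ for all sum-of-squares $h$ of suitable degree is equivalent to a localization PSD constraint on an appropriate matrix with entries linear in the pseudo-moments.

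Once this reformulation is in place, I would invoke the standard SDP machinery. The relevant subsets $S$ can be restricted to $|S| \leq d$, so the total number of PSD blocks is at most $\ell^{O(d)}$, each over a matrix of size $m^{O(d)} \times m^{O(d)}$. The weak separation oracle for each PSD block is the textbook one---compute the minimum eigenvalue of the candidate matrix and return the corresponding eigenvector as a separating hyperplane whenever it is negative---so the combined separator for the full feasibility region runs in time $(m+\ell)^{O(d)}$. The explicit boundedness constraint $\|\vz\|^2 \leq M$ propagates, via pseudo-Cauchy--Schwarz applied to the SoS-certified inequalities it induces, to an explicit bound of $M^{O(d)}$ on the magnitude of every individual pseudo-moment, furnishing the initial bounding ellipsoid. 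The ellipsoid method of~\citet{Grotschel81:Ellipsoid} then terminates after $(m+\ell)^{O(d)}$ calls to this separation oracle, yielding the claimed runtime.

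The main technical obstacle is that the feasibility region may be flat or of exponentially small volume, so that the ellipsoid method only returns a pseudo-moment tensor satisfying each PSD condition up to inverse-exponential additive slack rather than exactly. The standard fix is to loosen each constraint by a parameter $\delta > 0$, find a $\delta$-approximate solution in time $(m+\ell)^{O(d)} \cdot \poly(\log(1/\delta))$ via the ellipsoid method, and then round back to an exact pseudo-distribution by convex-combining with a tiny fraction of any fixed feasible pseudo-distribution (which exists by satisfiability, for instance a Dirac mass at a feasible point). Explicit boundedness ensures that this mixing step perturbs every pseudo-moment by at most $O(\delta \cdot M^{O(d)})$, and with $\delta$ chosen inverse-polynomial in the input size the conversion preserves the target runtime. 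Modulo this routine but non-trivial precision bookkeeping, the SDP approach above yields a valid degree-$d$ pseudo-distribution satisfying the system in time $(m+\ell)^{O(d)}$.
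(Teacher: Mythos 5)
The paper never proves this statement; it presents it as a folklore Fact following from the Nesterov/Parrilo/Lasserre separation oracle for pseudo-distributions, so there is no ``paper's proof'' to match against. Your SDP-via-ellipsoid reduction is the right high-level argument and correctly identifies all the structural pieces (moment matrix PSDness, localization PSD blocks for each $S$ with $|S| \le d$, eigenvector-based separation, bounds on pseudo-moments from explicit boundedness), but the final ``rounding back to exactness'' step has a genuine flaw. First, the fix is circular: the ``fixed feasible pseudo-distribution'' you propose to mix in (e.g., a Dirac mass at a point in the semialgebraic set) is not available to the algorithm---if it were, you could just output it and skip the ellipsoid method entirely. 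Second, even granting access to such a $\tilde{\mu}_2$, mixing does not restore feasibility: if $\tilde{\mu}_1$ violates some localization PSD constraint by $-\delta$ and $\tilde{\mu}_2$ satisfies it with slack exactly $0$ (which is what a Dirac mass generically gives, since $p(\vz^*)^2 \prod_{i \in S} f_i(\vz^*)$ can be $0$), then $(1-\lambda)\tilde{\mu}_1 + \lambda\tilde{\mu}_2$ still violates it by $(1-\lambda)\delta > 0$ for any $\lambda < 1$. This is precisely the subtlety isolated by O'Donnell's ``SOS is not obviously automatizable'' and addressed by Raghavendra--Weitz under the Archimedean (explicit boundedness) condition: what one actually gets in $(m+\ell)^{O(d)}$ time is a pseudo-distribution satisfying the constraints up to an additive $2^{-\poly}$ error, and the statement is understood in that approximate sense (which suffices for all downstream uses in this paper). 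So the correct repair is not to claim exactness but to weaken the conclusion to approximate satisfaction, or alternatively to assume a Slater-type strict-feasibility condition under which the feasible spectrahedron has interior and the ellipsoid method finds an exact point.
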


As a result, leveraging the above fact, it is possible to compute low-degree uniform $T$-sparse pseudo-CE (\Cref{def:pseudo-CE}) in polynomial time.

For further background on the SoS paradigm, we refer to the excellent book of~\citet{Fleming19:Semialgebraic} and the introductory lecture notes compiled by~\citet{Barak16:Proofs}.

\section{Hard games for welfare maximization with SoS}
\label{sec:hard-sos}

We begin by formally stating the definition of an $\soshard$ game. As we explained earlier, although we are interested in proving lower bounds for the unrestricted version of the problem, here we incorporate the additional constraint of maximizing welfare, which will be justified in the construction of~\Cref{sec:stitching}.

\begin{definition}
    \label{def:soshard}
    A two-player game with utilities in $[-1, 1]$ is called $(d, T, \epsilon_0, \epsilon)$-$\soshard$ if
    \begin{itemize}
        \item there is a degree-$d$ pseudo-distribution on $(\vx^{(1)}, \dots, \vx^{(T)}, \vy^{(1)}, \dots, \vy^{(T)})$, given in~\Cref{def:tilmu}, that satisfies the constraints of a $T$-sparse pseudo-CE together with the constraint that both players receive utility at least $\delta \in (0, \frac{1}{2})$; and
        \item any uniform $T$-sparse $\epsilon_0$-CE attains welfare at most $2 \delta - 2 \epsilon$.
    \end{itemize}
\end{definition}

\subsection{High precision}
\label{sec:wel-highprec}

We first treat the regime in which $\epsilon = \poly(1/n)$. Here, following~\citet{Kothari18:Sum}, we construct an $\soshard$ game via a reduction from the problem of computing an \emph{independent set} of a graph; an SoS lower bound for the latter problem was shown by~\citet{Tulsiani09:CSP} (\Cref{theorem:Tulsiani-IS}). It is worth noting that the following construction is quite different from the one devised by~\citet{Kothari18:Sum}, even when $T = 1$.

In this context, suppose that we are given an (undirected) $n$-node graph $G = ([n], E)$. The problem is to determine whether $G$ contains an independent set of size $k \in \N$, where $k$ is also given as part of the input. We recall that a subset of the nodes $S \subseteq [n]$ is \emph{independent} if for any $i, j \in S$, $\{i, j\} \notin E$; of course, the reduction below can be equivalently phrased in terms of cliques (by replacing $G$ by its complement), but we use the language of independent sets to be consistent with~\citet{Tulsiani09:CSP}. Let $\mat{A}$ denote the adjacency matrix of $G$; namely, $\mat{A}_{i, j} = 1$ if $\{i, j\} \in E$, and $\mat{A}_{i, j} = 0$ otherwise; by convention, we assume that the diagonal elements of $\mat{A}$ are equal to $1$. Then, we construct the following $(2n) \times (2n)$ two-player game:
\begin{equation}
    \label{eq:indep-setgame}
    \mat{R} \defeq \frac{1}{2} \begin{pmatrix}
        \mat{1}_{n \times n} - \mat{A} + (\gamma + 1) \mat{I}_n & - k \mat{I}_n \\
        k \mat{I}_n & \mat{0}_{n \times n}
    \end{pmatrix}
    \quad \text{and} \quad
    \mat{C} \defeq \frac{1}{2} \begin{pmatrix}
        \mat{1}_{n \times n} - \mat{A} + (\gamma + 1) \mat{I}_n & k \mat{I}_n \\
        - k \mat{I}_n & \mat{0}_{n \times n}
    \end{pmatrix}.
\end{equation}
We will eventually have to rescale the utilities of~\eqref{eq:indep-setgame} proportionally to $1/k$ to ensure that the entries of $\mat{R}$ and $\mat{C}$ are in $[-1, 1]$. The first property of $(\mat{R}, \mat{C})$ is that an independent set in $G$ with size $k$ induces a Nash equilibrium with high welfare, as specified below. (We clarify that $\vec{u}(S) \in \Delta^n$ denotes the uniform distribution over $S \subseteq [n]$.)

\begin{lemma}
    \label{lemma:completeness}
    If $G$ contains an independent set $S \subseteq [n]$ with size $k$, then the pair of strategies $( ( \vu(S), \vec{0}_n), ( \vu({S}), \vec{0}_n ) )$ is a Nash equilibrium of $(\mat{R}, \mat{C})$ with welfare $ 2 \delta \defeq 1 + \gamma/k$.
\end{lemma}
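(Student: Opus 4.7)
The plan is to verify the claim by direct computation, using the decomposition of the $(2n)$-dimensional action space into the two blocks $[n]$ and $\{n+1, \dots, 2n\}$ from the construction in~\eqref{eq:indep-setgame}. Since both strategies $(\vu(S), \vec{0}_n)$ are supported on the top block, only the top-left sub-blocks of $\mat{R}$ and $\mat{C}$ contribute to the equilibrium payoff. Using $\langle \vu(S), \mat{1}_{n \times n} \vu(S)\rangle = 1$, $\langle \vu(S), \mat{I}_n \vu(S)\rangle = 1/k$, and $\langle \vu(S), \mat{A}\vu(S)\rangle = 1/k$---the last equality by independence of $S$ together with the convention $\mat{A}_{i,i} = 1$---I obtain for each player the utility $\frac{1}{2}(1 - 1/k + (\gamma+1)/k) = \frac{1}{2}(1 + \gamma/k) = \delta$, whence the welfare is $2\delta = 1 + \gamma/k$ as claimed.

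Next I would verify best-response by checking pure-strategy deviations, which suffices by linearity. Fix Player $x$; the symmetric roles of $\mat{R}$ and $\mat{C}$ (the top-left blocks agree, and the off-diagonal blocks of $\mat{R}+\mat{C}$ vanish appropriately for the fixed opponent strategy) reduce the Player-$y$ case to the same computation. For a deviation to $i \in [n]$ (top block), the payoff against $(\vu(S), \vec{0}_n)$ equals
\[
\tfrac{1}{2k}\bigl(k - d_S(i) + (\gamma+1)\bone[i \in S]\bigr), \qquad d_S(i) \defeq \sum_{j \in S} \mat{A}_{i,j}.
\]
If $i \in S$, independence and $\mat{A}_{i,i} = 1$ give $d_S(i) = 1$, so the quantity equals $\delta$ exactly, tying the equilibrium value. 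If $i \notin S$, it equals $\frac{1}{2}(1 - d_S(i)/k) \le \frac{1}{2} \le \delta$, which holds as long as $\gamma \ge 0$.

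For a deviation to $n + i' \in \{n+1, \dots, 2n\}$ (bottom block), only the bottom-left block $\frac{k}{2}\mat{I}_n$ of $\mat{R}$ contributes, yielding payoff $\frac{k}{2}\,\vu(S)_{i'}$, which is $\frac{1}{2}$ if $i' \in S$ and $0$ otherwise, hence at most $\delta$. So no pure deviation strictly improves Player $x$, and neither does any mixed deviation; the same conclusion applies to Player $y$. The only delicate point is the bookkeeping around the diagonal convention $\mat{A}_{i,i}=1$, which is exactly what makes the ``on-support'' deviation $i \in S$ produce equality rather than a small deficit relative to $\delta$; this is not a real obstacle but merely the reason the $(\gamma+1)$ shift in~\eqref{eq:indep-setgame} is calibrated as it is.
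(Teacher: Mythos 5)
Your proof is correct and follows the same route as the paper's: compute the on-support payoff, then bound pure deviations. The paper's own proof is a two-sentence assertion of exactly these two steps, so you have simply filled in the arithmetic that it leaves implicit (the $\langle \vu(S), \mat{A}\vu(S)\rangle = 1/k$ bookkeeping from the diagonal convention, the $d_S(i)$ case split for top-block deviations, the $\frac{k}{2}\vu(S)_{i'}$ payoff for bottom-block deviations, and the symmetry reduction for Player $y$); nothing is missing.
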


\begin{proof}
    The fact that each player obtains utility $\delta$ under the above pair of strategies follows directly from the definition of $(\mat{R}, \mat{C})$ since $\mat{A}_{i, j} = 0$ for any pair $i, j \in S$ with $i \neq j$. Further, each possible deviation from either Player $x$ or Player $y$ is easily seen to give utility at most $\delta$, concluding the proof.
\end{proof}

We will use~\Cref{lemma:completeness} later in conjunction with a basic fact on SoS reductions to construct a pseudo-NE that ``pretends'' to enjoy a welfare of $2 \delta$. First, we turn to the soundness of the reduction, which uses ideas from the recent lower bound concerning welfare maximization with sparse CCE~\citep{Anonymous24:Barriers}; as we shall see, obtaining tight lower bounds for sparse CE requires some further ideas.

The first observation is that in $(\mat{R}, \mat{C})$, with the exception of the (top-left) $[n] \times [n]$ component, the rest of the game is zero-sum. This means that a (C)CE with high welfare cannot afford to allocate too much probability mass outside of that component, as formalized below.

\begin{lemma}
    \label{lemma:bad-prob}
    Let $\vmu = \frac{1}{T} \sum_{t=1}^T \vx^{(t)} \otimes \vy^{(t)}$ be an $\epsilon$-CE in $(\mat{R}, \mat{C})$ with welfare at least $1$. Then, if $\hvx^{(t)}_{i \leq n} \defeq \nicefrac{\vx_{i \leq n}^{(t)}}{\sum_{i=1}^n \vx_i^{(t)}}$ and $\hvy^{(t)}_{j \leq n} \defeq \nicefrac{\vy_{j \leq n}^{(t)}}{\sum_{j=1}^n \vy_j^{(t)}}$, $\hvmu \defeq \frac{1}{T} \sum_{t=1}^T \hvx^{(t)} \otimes \hvy^{(t)}$ is an $(\epsilon + 4 \gamma k)$-CE of $(\mat{R}, \mat{C})$.
\end{lemma}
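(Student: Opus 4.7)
The plan is to exploit the welfare lower bound to force most of the mass of each $\vx^{(t)}$ and $\vy^{(t)}$ onto the top block $[n]$, and then transfer the CE property of $\vmu$ to $\hvmu$ by lifting any deviation on $[n]$ to one on $[2n]$ via the identity. The engine of the argument is the observation that $\mat{R} + \mat{C}$ is supported entirely on the top-left $[n] \times [n]$ block (the $\pm\frac{k}{2}\mat{I}_n$ off-diagonals cancel) with entries at most $\gamma + 1$, so the assumption $\sw(\vmu) \geq 1$ forces
\[
\frac{1}{T}\sum_{t=1}^T (1 - p^{(t)} q^{(t)}) \;\leq\; \frac{\gamma}{\gamma+1} \;\leq\; \gamma, \quad \text{where } p^{(t)} \defeq \|\vx^{(t)}_{\leq n}\|_1, \; q^{(t)} \defeq \|\vy^{(t)}_{\leq n}\|_1.
\]

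Given any $\phi_x \in \Phi$ on $[n]$, I would extend it to $\tilde{\phi}_x \in \Phi$ on $[2n]$ by the identity on $[n+1,2n]$ (a straightforward inspection shows this preserves membership in both the internal and $\widehat{\Phi}$ branches of $\Phi$), and apply the $\epsilon$-CE property of $\vmu$. Using the block structure of $\mat{R}$, the deviation gain decomposes as
\[
\langle \tilde{\phi}_x(\vx^{(t)}) - \vx^{(t)}, \mat{R}\vy^{(t)}\rangle \;=\; A_t - \tfrac{k}{2}\tilde{B}_t,
\]
where $A_t \defeq \langle \phi_x(\vx^{(t)}_{\leq n}) - \vx^{(t)}_{\leq n}, \mat{B}\vy^{(t)}_{\leq n}\rangle$ (with $\mat{B}$ the top-left block of $\mat{R}$) is the ``real'' deviation gain in the restricted game, and $\tilde{B}_t \defeq \langle \phi_x(\vx^{(t)}_{\leq n}) - \vx^{(t)}_{\leq n}, \vy^{(t)}_{>n}\rangle$ is the penalty from the $-\frac{k}{2}\mat{I}_n$ block. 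A H\"older estimate gives $|\tilde{B}_t| \leq 2p^{(t)}(1 - q^{(t)})$, which combined with the welfare bookkeeping yields $\frac{k}{2}\sum_t |\tilde{B}_t| \leq Tk\gamma$, so the CE inequality collapses to $\sum_t A_t \leq T(\epsilon + k\gamma)$.

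Finally, since the deviation gain of $\hvmu$ along $\phi_x$ is exactly $\hat{A}_t = A_t/(p^{(t)} q^{(t)})$, I would close the gap using the normalization slack. Entries of $\mat{B}$ lie in $[0,(\gamma+1)/2]$ and $\|\phi_x(\vx^{(t)}_{\leq n}) - \vx^{(t)}_{\leq n}\|_1 \leq 2p^{(t)}$, hence $|A_t| \leq (\gamma+1) p^{(t)} q^{(t)}$, which gives $\sum_t |\hat{A}_t - A_t| \leq (\gamma+1)\sum_t (1 - p^{(t)} q^{(t)}) \leq (\gamma+1)T\gamma$. Combined with Step~2, we get $\frac{1}{T}\sum_t \hat{A}_t \leq \epsilon + k\gamma + (\gamma+1)\gamma \leq \epsilon + 4k\gamma$, using $k \geq 1$ and $\gamma \leq 1$. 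The symmetric argument handles column deviations $\phi_y$ by swapping the roles of $p^{(t)}$ and $q^{(t)}$ and of the $\pm\frac{k}{2}\mat{I}_n$ blocks of $\mat{C}$.

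The main obstacle I expect is verifying that the penalty $\tilde{B}_t$ in Step~2 is controllable by the welfare slack: it must be bounded by a quantity of the shape $p^{(t)}(1 - q^{(t)})$ rather than $p^{(t)}$ or $(1-q^{(t)})$ alone, and this hinges on the $-\frac{k}{2}\mat{I}_n$ block of $\mat{R}$ interacting with the row deviation \emph{only} through the bad column coordinates $\vy^{(t)}_{>n}$---had it touched $\vy^{(t)}_{\leq n}$, the resulting $O(k)$ term would not vanish with $\gamma$. A minor subtlety is that some $p^{(t)}q^{(t)}$ may be zero; by Step~1, at most $T\gamma$ rounds can be of this form, so fixing $\hvx^{(t)}, \hvy^{(t)}$ arbitrarily there contributes only $O(T\gamma)$ to the final bound and is absorbed into the $4k\gamma$ slack.
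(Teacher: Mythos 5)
Your normalization bookkeeping and the control of the penalty term $\tilde{B}_t$ are fine as far as they go, but there is a genuine gap in the class of deviations you handle. To certify that $\hvmu$ is an $(\epsilon+4\gamma k)$-CE of the full $(2n)\times(2n)$ game $(\mat{R},\mat{C})$, you must control the deviation gain of $\hvmu$ for \emph{every} $\phi_x\in\Phi$, where $\Phi$ consists of maps $[2n]\to[2n]$. Since $\hvmu$ is supported on $[n]\times[n]$, the relevant restriction is $\phi_x|_{[n]}:[n]\to[2n]$; crucially, $\phi_x$ may send some $i\leq n$ to $i'\in[n+1,2n]$. Your decomposition $\langle\tilde\phi_x(\vx^{(t)})-\vx^{(t)},\mat{R}\vy^{(t)}\rangle = A_t-\tfrac{k}{2}\tilde B_t$ implicitly assumes $\phi_x(\vx^{(t)}_{\leq n})\in\R^n$, i.e.\ $\phi_x$ maps $[n]$ into $[n]$; when $\phi_x$ sends mass into $[n+1,2n]$ you pick up a third term $+\tfrac{k}{2}\langle \vec b,\vy^{(t)}_{\leq n}\rangle$ from the bottom-left $\tfrac{k}{2}\mat{I}_n$ block of $\mat{R}$, where $\vec b$ is the part of $\phi_x(\vx^{(t)}_{\leq n})$ landing on $[n+1,2n]$. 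This is precisely the kind of deviation Lemma~\ref{lemma:ref-ub} leans on ($i\mapsto i+n$), so a version of Lemma~\ref{lemma:bad-prob} that silently excludes them would leave the downstream argument vacuous. Your worry that the $\tfrac{k}{2}\mat{I}_n$ block ``touching'' $\vy^{(t)}_{\leq n}$ produces an uncontrollable $O(k)$ term is in fact pointing at exactly the case you left out; the term is present but it \emph{is} controllable, because $\|\vec b\|_\infty\leq p^{(t)}$ and $\|\vy^{(t)}_{\leq n}\|_1=q^{(t)}$ give $\langle\vec b,\vy^{(t)}_{\leq n}\rangle\leq p^{(t)}q^{(t)}$, so the rescaling error $\tfrac{k}{2}\langle\vec b,\vy^{(t)}_{\leq n}\rangle\bigl(\tfrac{1}{p^{(t)}q^{(t)}}-1\bigr)\leq\tfrac{k}{2}\delta^{(t)}$ again averages to $O(k\gamma)$. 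You would need to add this third branch to the decomposition.

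For contrast, the paper's own proof sidesteps the case analysis entirely: it applies the \emph{same} $\phi_x:[2n]\to[2n]$ to both $\vmu$ and $\hvmu$, writes the three-way telescoping $\langle\phi_x(\vx^{(t)})-\phi_x(\hvx^{(t)}),\mat{R}\vy^{(t)}\rangle+\langle\phi_x(\hvx^{(t)}),\mat{R}(\vy^{(t)}-\hvy^{(t)})\rangle$, and bounds it by $\tfrac{k}{2}\|\mat{M}_x(\vx^{(t)}-\hvx^{(t)})\|_1+\tfrac{k}{2}\|\vy^{(t)}-\hvy^{(t)}\|_1\leq 2k\delta^{(t)}$ using that $\mat{M}_x$ is column-stochastic and $\|\vx^{(t)}-\hvx^{(t)}\|_1=2(1-p^{(t)})$. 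This handles all deviations uniformly and avoids the blockwise bookkeeping. Your block decomposition is more explicit about where the error comes from, but as written is incomplete.
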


\begin{proof}
    We define $\delta^{(t)} \defeq 1 - \sum_{i=1}^n \sum_{j=1}^n \vx_i^{(t)} \vy_j^{(t)}$. By construction of $(\mat{R}, \mat{C})$, we have
    \[
        1 \leq \sw(\vmu) \leq (1 + \gamma) \pr_{(i, j) \sim \vmu} [i \leq n, j \leq n] = (1 + \gamma) \frac{1}{T} \sum_{t=1}^T  \sum_{i=1}^n \vx_i^{(t)} \sum_{j=1}^n \vy_j^{(t)}
    \]
    since the maximum utility in $\mat{R} + \mat{C}$ is $1 + \gamma$ and $\mat{R}_{i, j} + \mat{C}_{i, j} = 0$ when $i > n$ or $j > n$. Thus,
    \[
        \frac{1}{T} \sum_{t=1}^T \left(1 - \sum_{i=1}^n \sum_{j=1}^n \vx_i^{(t)} \vy_j^{(t)} \right) \leq \frac{\gamma}{1 + \gamma} \leq \gamma.
    \]
    That is, $\frac{1}{T} \sum_{t=1}^T \delta^{(t)} \leq \gamma$. To prove that $\hvmu$ is an $(\epsilon + 4 \gamma k)$-CE, we will bound deviations of Player $x$, and then the claim will follow by symmetry. Since $\vmu$ is assumed to be an $\epsilon$-CE in $(\mat{R}, \mat{C})$, we have
    \begin{equation}
        \label{eq:init-bound}
        \frac{1}{T} \sum_{t=1}^T \langle \phi_x(\vx^{(t)}) - \vx^{(t)}, \mat{R} \vy^{(t)} \rangle \leq \epsilon
    \end{equation}
    for any $\phi_x \in \Phi$. Let us fix some $\phi_x \in \Phi$. First, we bound
    \begin{align}
        \langle \vx^{(t)}, \mat{R} \vy^{(t)} \rangle &= \langle \vx^{(t)} - \hvx^{(t)}, \mat{R} \vy^{(t)} \rangle + \langle \hvx^{(t)}, \mat{R} (\vy^{(t)} - \hvy^{(t)}) \rangle +  \langle \hvx^{(t)}, \mat{R} \hvy^{(t)} \rangle \notag \\
        &\leq \frac{k}{2} \|\vx^{(t)} - \hvx^{(t)} \|_1 + \frac{k}{2} \|\vy^{(t)} - \hvy^{(t)} \|_1 + \langle \hvx^{(t)}, \mat{R} \hvy^{(t)} \rangle \label{align:max-util} \\
        &= k \left( 1 - \sum_{i=1}^n \vx_i^{(t)} + 1 - \sum_{j=1}^n \vy_j^{(t)} \right) + \langle \hvx^{(t)}, \mat{R} \hvy^{(t)} \rangle \notag \\
        &\leq 2 k \left( 1 - \sum_{i=1}^n \sum_{j=1}^n \vx_i^{(t)} \vy_j^{(t)} \right) + \langle \hvx^{(t)}, \mat{R} \hvy^{(t)} \rangle,\notag \\
        &= 2k \delta^{(t)} + \langle \hvx^{(t)}, \mat{R} \hvy^{(t)} \rangle,
    \end{align}
    where~\eqref{align:max-util} uses the fact that the maximum utility in $\mat{R}$ is $k/2$. So, summing over all $t \in [T]$,
    \begin{equation}
        \label{eq:bound1}
        \frac{1}{T} \sum_{t=1}^T \langle \vx^{(t)}, \mat{R} \vy^{(t)} \rangle \leq 2 k \gamma + \frac{1}{T} \sum_{t=1}^T \langle \hvx^{(t)}, \mat{R} \hvy^{(t)} \rangle.
    \end{equation}
    Similarly, we have
    \begin{align}
        \langle \phi_x(\vx^{(t)}), \mat{R} \vy^{(t)} \rangle &= \langle \phi_x(\vx^{(t)}) - \phi_x(\hvx^{(t)}), \mat{R} \vy^{(t)} \rangle + \langle \phi_x(\hvx^{(t)}), \mat{R} (\vy^{(t)} - \hvy^{(t)}) \rangle + \langle \phi_x(\hvx^{(t)}), \mat{R} \hvy^{(t)} \rangle \notag \\
        &\geq - \frac{k}{2} \| \mat{M}_x  (\vx^{(t)} - \hvx^{(t)} ) \|_1 - \frac{k}{2} \|\vy^{(t)} - \hvy^{(t)} \|_1 + \langle \phi_x(\hvx^{(t)}), \mat{R} \hvy^{(t)} \rangle \label{align:mat-rep} \\
        &\geq - \frac{k}{2} \| \vx^{(t)} - \hvx^{(t)} \|_1 - \frac{k}{2} \|\vy^{(t)} - \hvy^{(t)} \|_1 + \langle \phi_x(\hvx^{(t)}), \mat{R} \hvy^{(t)} \rangle \label{align:col-stochastic} \\
        &\geq - 2 k \delta^{(t)} + \langle \phi_x(\hvx^{(t)}), \mat{R} \hvy^{(t)} \rangle,\label{align:bound2}
    \end{align}
    where in~\eqref{align:mat-rep} we denote by $\mat{M}_x$ the matrix representation of $\phi_x$ and \eqref{align:col-stochastic} follows from the fact that $\mat{M}_x$ is column-stochastic. Summing~\eqref{align:bound2} over all $t \in [T]$,
    \begin{equation}
        \label{eq:realbound2}
        \frac{1}{T} \sum_{t=1}^T \langle \phi_x(\vx^{(t)}), \mat{R} \vy^{(t)} \rangle \geq - 2 k \gamma + \frac{1}{T} \sum_{t=1}^T \langle \phi_x(\hvx^{(t)}), \mat{R} \hvy^{(t)} \rangle.
    \end{equation}
    Thus, combining~\eqref{eq:bound1} and~\eqref{eq:realbound2} with~\eqref{eq:init-bound} concludes the proof. 
\end{proof}

Thus, by selecting $\gamma \ll 1/k$, we can essentially work under the assumption that the underlying CCE is supported solely on $[n] \times [n]$. The next lemma explains the role of the $\frac{k}{2} \mat{I}_n$ component in the off-diagonal of $(\mat{R}, \mat{C})$.

\begin{lemma}
    \label{lemma:ub-prob}
    Let $\vmu \defeq \frac{1}{T} \sum_{t=1}^T \vx^{(t)} \otimes \vy^{(t)}$ be an $\epsilon$-CCE of $(\mat{R}, \mat{C})$ supported only on $[n] \times [n]$. Then, for any $i, j \in [n]$ and $t \in [T]$, $\vx_i^{(t)}, \vy_j^{(t)} \leq (1 + \gamma + 2 \epsilon) T/k$.
\end{lemma}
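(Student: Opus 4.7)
The plan is to exploit the $\tfrac{k}{2}\mat{I}_n$ blocks appearing in the off-diagonal corners of $\mat{R}$ (bottom-left) and $\mat{C}$ (top-right). These make it highly attractive for either player to switch unilaterally to the ``twin'' action outside $[n]$ whenever the opposing mixture places appreciable mass on the corresponding action in $[n]$. As a preliminary, I would record that since $\vmu$ is supported on $[n]\times[n]$ and each term $\vx^{(t)}\otimes\vy^{(t)}$ is entrywise nonnegative, every $\vx^{(t)}$ and every $\vy^{(t)}$ must \emph{individually} be supported on $[n]$ as well; this is what lets the inner products with $\vec{e}_{n+j}$ or $\vec{e}_{n+i}$ collapse cleanly in the next step.

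To bound $\vx^{(t)}_j$, fix $j\in[n]$ and test the $\epsilon$-CCE condition against the external deviation sending Player $y$ to the fixed column $n+j$. Because each $\vx^{(t)}$ is supported on $[n]$, the top-right block $\tfrac{k}{2}\mat{I}_n$ of $\mat{C}$ yields $\langle \vx^{(t)}, \mat{C}\,\vec{e}_{n+j}\rangle = \tfrac{k}{2}\vx^{(t)}_j$, so the deviation's average payoff is exactly $\tfrac{k}{2T}\sum_{t=1}^T \vx^{(t)}_j$. On the other side of the inequality, the equilibrium payoff $\tfrac{1}{T}\sum_{t=1}^T \langle\vx^{(t)}, \mat{C}\vy^{(t)}\rangle$ is bounded above by the largest entry of the top-left block of $\mat{C}$, which is $\tfrac{\gamma+1}{2}$ (achieved on the diagonal, using the convention $\mat{A}_{i,i}=1$). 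Combining these through the $\epsilon$-CCE inequality gives $\tfrac{k}{2T}\sum_{t=1}^T \vx^{(t)}_j \leq \tfrac{\gamma+1}{2}+\epsilon$, i.e.\ $\sum_t \vx^{(t)}_j \leq (1+\gamma+2\epsilon)T/k$. Nonnegativity of each summand then upgrades this to the per-$t$ bound asserted in the lemma.

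The analogous bound on $\vy^{(t)}_i$ follows by the symmetric argument: apply the same reasoning with Player $x$ externally deviating to row $n+i$, invoking the bottom-left block $\tfrac{k}{2}\mat{I}_n$ of $\mat{R}$ in place of the top-right block of $\mat{C}$. There is no real obstacle here --- the game has been engineered precisely so that this single pair of external deviations extracts the stated bound; the only delicate point is the per-$t$ support reduction flagged at the outset, which is essential so that the deviation payoff isolates a single coordinate $\vx^{(t)}_j$ (or $\vy^{(t)}_i$) rather than picking up unwanted contributions from the zero bottom-right block.
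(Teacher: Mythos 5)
Your proposal is correct and follows essentially the same route as the paper: bound each player's equilibrium utility by the maximum entry $\tfrac{1+\gamma}{2}$ of the top-left block, test the CCE inequality against the external deviation to the twin action in the off-diagonal $\tfrac{k}{2}\mat{I}_n$ block, and drop from the sum over $t$ to a single term by nonnegativity. The preliminary support-reduction observation you flag is correct but is not actually needed, since the $(n+j)$-th column of $\mat{C}$ (resp.\ $(n+i)$-th row of $\mat{R}$) is already zero outside $[n]$, so the deviation payoff collapses to a single coordinate regardless.
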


\begin{proof}
    Since $\vmu$ is supported only on $[n] \times [n]$, it follows that the utility of each player under $\vmu$ is at most $(1 + \gamma)/2$. Now, if Player $y$ was to deviate to action $n + i$, the obtained utility would be at least $\frac{k}{2 T} \sum_{t=1}^T \vx_i^{(t)}$. But since $\vmu$ is an $\epsilon$-CCE, we have
    \[
        \frac{1+ \gamma}{2} \geq \frac{k}{2T} \max_{t \leq T} \vx_i^{(t)} - \epsilon,
    \]
    and the bound with respect to strategies of Player $x$ follows. The bound for Player $y$ follows symmetrically.
\end{proof}

We call attention to the fact that the lemma above applies to (approximate) CCE as well. In fact, obtaining tight lower bounds for CE will require strengthening~\Cref{lemma:ub-prob} by suitably leveraging internal deviations (formalized in~\Cref{lemma:ref-ub}). First, to better illustrate the key idea behind that refinement, we will see how~\Cref{lemma:ub-prob} already precludes a sparsity up to $\approx \sqrt{n}$. Before we proceed, we state a simple claim that will be used in the proof.

\begin{claim}
    \label{claim:basic}
    Suppose that $0 \leq \vx_i \leq \frac{1}{\ell}$ for any $i \in [n]$ and $\sum_{i=1}^n \vx_i = p$. Then, $\sum_{i=1}^n \vx_i^2 \leq \frac{p}{\ell}$.
\end{claim}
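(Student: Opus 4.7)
The plan is to exploit the pointwise bound $\vx_i \leq 1/\ell$ directly inside the quadratic sum. Since each coordinate is nonnegative and bounded, we have $\vx_i^2 = \vx_i \cdot \vx_i \leq \vx_i \cdot (1/\ell)$ for every $i \in [n]$. Summing this coordinatewise bound over $i \in [n]$ gives $\sum_{i=1}^n \vx_i^2 \leq (1/\ell) \sum_{i=1}^n \vx_i = p/\ell$, which is exactly the desired conclusion.

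Equivalently, I would present this as a one-line instance of the standard inequality $\|\vx\|_2^2 \leq \|\vx\|_\infty \cdot \|\vx\|_1$, specialized to the hypotheses $\|\vx\|_\infty \leq 1/\ell$ and $\|\vx\|_1 = p$ (using that $\vx \geq 0$). There is essentially no obstacle here: the claim is a completely routine consequence of the $L^\infty$ bound, and the only thing to be careful about is noting that nonnegativity of the coordinates is what licenses both $\vx_i^2 \leq \vx_i/\ell$ and the identification $\|\vx\|_1 = \sum_i \vx_i = p$. Given the brevity, I would state it inline rather than formatted as a multi-step argument.
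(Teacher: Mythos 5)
Your proof is correct and is surely the intended argument: the paper states \Cref{claim:basic} without proof, and the one-line bound $\vx_i^2 \leq \vx_i/\ell$ followed by summation is exactly what any reader would supply. No gaps.
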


\begin{lemma}
    \label{lemma:IS}
    Suppose that there is a uniform $T$-sparse $\frac{1}{4}$-CE in $(\mat{R}, \mat{C})$ supported only on $[n] \times [n]$ with welfare at least $1 + 3\gamma/(4 k)$. Then, if $\gamma = 1/(256 k T^3)$, there is an independent set in $G$ with size $k/(16T^2)$.
\end{lemma}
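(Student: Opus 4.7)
The plan is to extract an independent set of $G$ directly from the support of a single well-chosen product component $(\vx^{(t^\star)},\vy^{(t^\star)})$ of $\vmu$. As a first step, I would rewrite the welfare in closed form: since $\vmu$ is supported on $[n]\times[n]$, only the top-left block of $(\mat{R},\mat{C})$ is active, and a direct computation gives $\sw(\vmu) = 1 + \gamma\, p_D - p_E$, where $p_D = \sum_i \vmu_{i,i}$ is the diagonal mass and $p_E = \sum_{i\neq j,\,\{i,j\}\in E}\vmu_{i,j}$ is the ``bad'' edge mass. Decomposing this as the average of per-round welfares $W^{(t)} = 1 + \gamma I^{(t)} - E^{(t)}$, with $I^{(t)} \defeq \langle\vx^{(t)},\vy^{(t)}\rangle$ and $E^{(t)} \defeq \sum_{i\neq j,\,\{i,j\}\in E}\vx^{(t)}_i\vy^{(t)}_j$, an averaging argument produces a round $t^\star$ with $\gamma I^{(t^\star)} - E^{(t^\star)} \geq 3\gamma/(4k)$, and hence $I^{(t^\star)} \geq 3/(4k)$ while $E^{(t^\star)} \leq \gamma I^{(t^\star)} \leq 2T\gamma/k$ (using \Cref{lemma:ub-prob} with $\epsilon = 1/4$ to bound $\vx^{(t)}_i,\vy^{(t)}_j \leq 2T/k$ and hence $I^{(t^\star)} \leq 2T/k$).

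Next I would fix the threshold $\lambda \defeq 1/(4k)$ and define $S \defeq \{i \in [n] : \vx^{(t^\star)}_i \geq \lambda \text{ and } \vy^{(t^\star)}_i \geq \lambda\}$. A standard truncation estimate shows that indices outside $S$ contribute at most $2\lambda = 1/(2k)$ to $I^{(t^\star)}$, because $\sum_{i : \vx^{(t^\star)}_i < \lambda}\vx^{(t^\star)}_i\vy^{(t^\star)}_i \leq \lambda\,\|\vy^{(t^\star)}\|_1 = \lambda$, and symmetrically with the roles of $\vx^{(t^\star)}$ and $\vy^{(t^\star)}$ swapped. Hence indices in $S$ contribute at least $1/(4k)$ to $I^{(t^\star)}$, and since each such term is at most $(2T/k)^2 = 4T^2/k^2$, this forces $|S| \geq k/(16T^2)$, matching the cardinality target of the lemma.

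Finally I would verify that $S$ is independent in $G$. If there were $i \neq j \in S$ with $\{i,j\}\in E$, then $\vx^{(t^\star)}_i\vy^{(t^\star)}_j \geq \lambda^2 = 1/(16k^2)$ would already contribute more to $E^{(t^\star)}$ than the available budget $E^{(t^\star)} \leq 2T\gamma/k = 1/(128 k^2 T^2)$ permits (using $\gamma = 1/(256 kT^3)$), a contradiction for any $T \geq 1$. The main delicate step is the joint calibration of $\lambda$ and $\gamma$: $\lambda$ must be small enough that the mass-on-$S$ estimate survives (keeping $|S|$ large), yet large enough that a single in-$S$ edge would exceed $E^{(t^\star)}$; the hypothesis $\gamma = 1/(256 kT^3)$ is precisely what makes this balance go through, while the uniform $\ell_\infty$ bound from \Cref{lemma:ub-prob} is what prevents any single index from absorbing too much of $I^{(t^\star)}$ and thereby deflating $|S|$. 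Everything else reduces to routine averaging and $\ell_\infty/\ell_1$ inequalities.
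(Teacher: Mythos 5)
Your proof is correct and reaches the same conclusion, but the route to bounding the out-of-$S$ mass and the threshold calibration are genuinely different from the paper's, and both differences work in your favor. The paper thresholds at the $T$-dependent value $1/(16kT)$ and then bounds the contribution from indices outside $S$ via Cauchy–Schwarz together with a dedicated helper claim ($\Cref{claim:basic}$), obtaining $\sum_{i\notin S}\vx_i^{(t)}\vy_i^{(t)} \leq \sqrt{(4T/k)\cdot(1/(16kT))} = 1/(2k)$. You instead take a $T$-independent threshold $\lambda = 1/(4k)$ and bound the same quantity by the elementary truncation estimate $\sum_{i\notin S}\vx_i\vy_i \leq \lambda\|\vy\|_1 + \lambda\|\vx\|_1 = 2\lambda = 1/(2k)$, which is the same numerical bound without invoking Cauchy–Schwarz or $\Cref{claim:basic}$. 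Because your threshold is larger than the paper's whenever $T \geq 1$, your contradiction in the independence step has much more slack: you compare $\lambda^2 = 1/(16k^2)$ against the edge-mass budget $\leq 1/(128 k^2 T^2)$, a gap of a factor $8T^2$, whereas the paper's threshold makes the two sides of the inequality exactly match up to the $3\gamma/(4k)$ surplus. The remaining ingredients — averaging over $t$, the $\ell_\infty$ bound from $\Cref{lemma:ub-prob}$ (with $\epsilon = 1/4$ and $\gamma \leq 1/2$ giving $2T/k$), and the per-round welfare identity $W^{(t)} = 1 + \gamma I^{(t)} - E^{(t)}$ — coincide with the paper's. Net: your version is a modest simplification that avoids one inequality and one helper claim while keeping the final cardinality bound $k/(16T^2)$ unchanged.
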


\begin{proof}
    Since $\sw(\vmu) \geq 1 + 3 \gamma/(4 k)$ (by assumption), it follows that there is some $t \in [T]$ such that $\sw(\vx^{(t)} \otimes \vy^{(t)}) \geq 1 + 3 \gamma/(4 k)$. Further, by construction of $(\mat{R}, \mat{C})$,
    \[
        \sw(\vx^{(t)} \otimes \vy^{(t)}) \leq 1 + \gamma \sum_{i=1}^n \vx_i^{(t)} \vy_i^{(t)},
    \]
    which in turn implies that $\sum_{i=1}^n \vx_i^{(t)} \vy_i^{(t)} \geq 3/(4 k)$. By~\Cref{lemma:ub-prob} and our choice of parameters, it follows that $\vx_i^{(t)}, \vy_j^{(t)} \leq 2 T/k$ for any $i, j \in [n]$. We now select $S \defeq \{ i \in [n] : \vx_i^{(t)}, \vy_i^{(t)} \geq 1/(16 k T) \}$. Then,
    \begin{align}
        \frac{3}{4k} \leq \sum_{i=1}^n \vx_i^{(t)} \vy_i^{(t)} &= \sum_{i \in S } \vx_i^{(t)} \vy_i^{(t)} + \sum_{i \notin S} \vx_i^{(t)} \vy_i^{(t)} \notag \\
        &= \sum_{i \in S } \vx_i^{(t)} \vy_i^{(t)} + \sum_{i \notin S} \max( \vx_i^{(t)}, \vy_i^{(t)}) \min(\vx_i^{(t)}, \vy_i^{(t)}) \notag \\
        &\leq |S| \frac{4 T^2}{k^2} + \sqrt{ \sum_{i \notin S} \max(\vx_i^{(t)}, \vy_i^{(t)})^2 \sum_{i \notin S} \min(\vx_i^{(t)}, \vy_i^{(t)})^2} \label{align:cs-b} \\
        &\leq |S| \frac{4 T^2}{k^2} + \sqrt{ \frac{4 T}{k} \frac{1}{16 k T}} \leq |S| \frac{4 T^2}{k^2} + \frac{1}{2 k},\label{align:claimp}
    \end{align}
    where~\eqref{align:cs-b} uses the fact that $\vx_i^{(t)}, \vy_j^{(t)} \leq 2 T / k$ for any $i,j \in [n]$ and the Cauchy-Schwarz inequality, and~\eqref{align:claimp} uses~\Cref{claim:basic}; in particular, for the application of~\Cref{claim:basic}, we note that $\sum_{i \notin S} \max( \vx_i^{(t)}, \vy_i^{(t)}) \leq 2$; $\sum_{i \notin S} \min( \vx_i^{(t)}, \vy_i^{(t)}) \leq 1$; and $\min(\vx_i^{(t)}, \vy_i^{(t)}) \leq 1/(16 k T)$ by definition of $S$. Rearranging~\eqref{align:claimp} implies that $|S| \geq k/(16  T^2)$.

    We will finally argue that $S$ indeed induces an independent set in $G$. For the sake of contradiction, suppose that there are $i, j \in S$ with $i \neq j$ such that $\mat{A}_{i, j} = 1$. Then, by the property of $S$,
    \[
        \sw(\vx^{(t)} \otimes \vy^{(t)}) \leq 1 + \gamma \frac{2T}{k} - \frac{1}{128 k^2 T^2}.
    \]
    Taking $\gamma = 1/(256 k T^3)$ contradicts the fact that $\sw(\vx^{(t)} \otimes \vy^{(t)}) \geq 1 + 3 \gamma/(4k)$.
\end{proof}

As a result, given that approximating the maximum independent set to a factor of $n^{1 - o(1)}$ is intractable~\citep{Tulsiani09:CSP}, \Cref{lemma:IS} allow us to preclude a sparsity up to $\approx \sqrt{n}$. We will next show how to tighten that lower bound. First, it is worth noting that for CCE, one can improve the welfare bound of~\Cref{lemma:completeness} to $1 + \gamma T/k$~\citep{Anonymous24:Barriers}---that is, the welfare lower bound increases with the number of mixtures. As a result, a similar argument to~\Cref{lemma:IS} allows one to extract an independent set in $G$ with size $\Omega(k/T)$, as opposed to $\Omega(k/T^2)$. 

As we alluded to earlier, while the proof of~\Cref{lemma:IS} only makes use of external deviations (via~\Cref{lemma:ub-prob}), the key lies in appropriately leveraging internal deviations. Let us first provide some basic intuition. \Cref{lemma:ub-prob} makes use of the obvious inequality $\sum_{t=1}^T \vx_i^{(t)} \geq \max_{t \leq T } \vx_i^{(t)}$, which is only tight when at most a single product distribution of $\vmu$ is supported on $i$. While this case does not have any clear implications for CCE, it does so for CE: when Player $x$ gets a recommendation to play action $i$, Player $x$ can infer the underlying product distribution, and thereby deviate accordingly to the $\frac{k}{2} \mat{I}_n$ component in the off-diagonal of $(\mat{R}, \mat{C})$. (In the language of internal deviations, Player $x$ can deviate differently when at $i$.) In turn, this now provides a strong upper bound on $\vy_i^{(t)}$. In other words, when $\vx_i^{(t)}$ is large, we can sharpen our upper bound (compared to~\Cref{lemma:ub-prob}) on $\vy_i^{(t)}$. This is illustrated in~\Cref{fig:CEs} when $T = 8$ (leftmost image), $T = 4$ (center image), and $T = 2$ (rightmost) image. In particular, let us focus on the correlated distribution portrayed in the leftmost image. We think of each row representing a different product distribution, so that $\vx_i^{(i)} = 1$. Player $x$ can deviate very effectively as any recommendation betrays the underlying product distribution, thereby forcing Player $y$ to spread out its probability mass for each individual product distribution. As we will formalize, this argument enables providing a tighter bound on $\vx_i^{(t)} \vy_i^{(t)}$, instead of bounding each of those terms separately (\Cref{lemma:ref-ub}). A careful examination of the proof of~\Cref{lemma:IS}  reveals that such a bound suffices, as formalized in~\Cref{lemma:ref-IS}.

\begin{figure}[!ht]
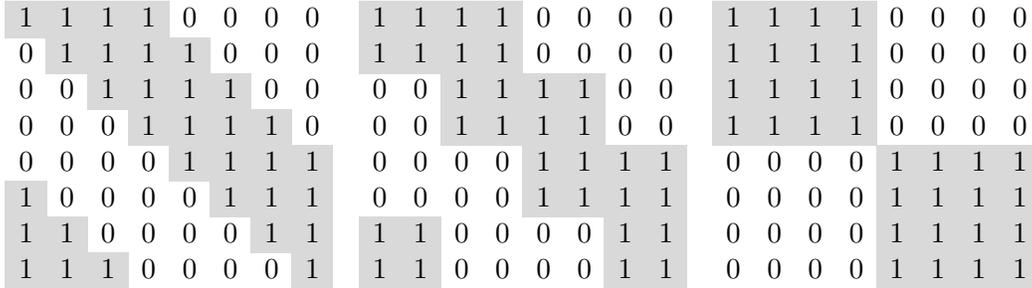

    \[
\begin{array}{ccc}
\begin{array}{cccccccc}
\cellcolor{lightgray}1 & \cellcolor{lightgray}1 & \cellcolor{lightgray}1 & \cellcolor{lightgray}1 & 0 & 0 & 0 & 0 \\
0 & \cellcolor{lightgray}1 & \cellcolor{lightgray}1 & \cellcolor{lightgray}1 & \cellcolor{lightgray}1 & 0 & 0 & 0 \\
0 & 0 & \cellcolor{lightgray}1 & \cellcolor{lightgray}1 & \cellcolor{lightgray}1 & \cellcolor{lightgray}1 & 0 & 0 \\
0 & 0 & 0 & \cellcolor{lightgray}1 & \cellcolor{lightgray}1 & \cellcolor{lightgray}1 & \cellcolor{lightgray}1 & 0 \\
0 & 0 & 0 & 0 & \cellcolor{lightgray}1 & \cellcolor{lightgray}1 & \cellcolor{lightgray}1 & \cellcolor{lightgray}1 \\
\cellcolor{lightgray}1 & 0 & 0 & 0 & 0 & \cellcolor{lightgray}1 & \cellcolor{lightgray}1 & \cellcolor{lightgray}1 \\
\cellcolor{lightgray}1 & \cellcolor{lightgray}1 & 0 & 0 & 0 & 0 & \cellcolor{lightgray}1 & \cellcolor{lightgray}1 \\
\cellcolor{lightgray}1 & \cellcolor{lightgray}1 & \cellcolor{lightgray}1 & 0 & 0 & 0 & 0 & \cellcolor{lightgray}1 \\
\end{array}
&
\begin{array}{cccccccc}
\cellcolor{lightgray}1 & \cellcolor{lightgray}1 & \cellcolor{lightgray}1 & \cellcolor{lightgray}1 & 0 & 0 & 0 & 0 \\
\cellcolor{lightgray}1 & \cellcolor{lightgray}1 & \cellcolor{lightgray}1 & \cellcolor{lightgray}1 & 0 & 0 & 0 & 0 \\
0 & 0 & \cellcolor{lightgray}1 & \cellcolor{lightgray}1 & \cellcolor{lightgray}1 & \cellcolor{lightgray}1 & 0 & 0 \\
0 & 0 & \cellcolor{lightgray}1 & \cellcolor{lightgray}1 & \cellcolor{lightgray}1 & \cellcolor{lightgray}1 & 0 & 0 \\
0 & 0 & 0 & 0 & \cellcolor{lightgray}1 & \cellcolor{lightgray}1 & \cellcolor{lightgray}1 & \cellcolor{lightgray}1 \\
0 & 0 & 0 & 0 & \cellcolor{lightgray}1 & \cellcolor{lightgray}1 & \cellcolor{lightgray}1 & \cellcolor{lightgray}1 \\
\cellcolor{lightgray}1 & \cellcolor{lightgray}1 & 0 & 0 & 0 & 0 & \cellcolor{lightgray}1 & \cellcolor{lightgray}1 \\
\cellcolor{lightgray}1 & \cellcolor{lightgray}1 & 0 & 0 & 0 & 0 & \cellcolor{lightgray}1 & \cellcolor{lightgray}1 \\
\end{array}
&
\begin{array}{cccccccc}
\cellcolor{lightgray}1 & \cellcolor{lightgray}1 & \cellcolor{lightgray}1 & \cellcolor{lightgray}1 & 0 & 0 & 0 & 0 \\
\cellcolor{lightgray}1 & \cellcolor{lightgray}1 & \cellcolor{lightgray}1 & \cellcolor{lightgray}1 & 0 & 0 & 0 & 0 \\
\cellcolor{lightgray}1 & \cellcolor{lightgray}1 & \cellcolor{lightgray}1 & \cellcolor{lightgray}1 & 0 & 0 & 0 & 0 \\
\cellcolor{lightgray}1 & \cellcolor{lightgray}1 & \cellcolor{lightgray}1 & \cellcolor{lightgray}1 & 0 & 0 & 0 & 0 \\
0 & 0 & 0 & 0 & \cellcolor{lightgray}1 & \cellcolor{lightgray}1 & \cellcolor{lightgray}1 & \cellcolor{lightgray}1 \\
0 & 0 & 0 & 0 & \cellcolor{lightgray}1 & \cellcolor{lightgray}1 & \cellcolor{lightgray}1 & \cellcolor{lightgray}1 \\
0 & 0 & 0 & 0 & \cellcolor{lightgray}1 & \cellcolor{lightgray}1 & \cellcolor{lightgray}1 & \cellcolor{lightgray}1 \\
0 & 0 & 0 & 0 & \cellcolor{lightgray}1 & \cellcolor{lightgray}1 & \cellcolor{lightgray}1 & \cellcolor{lightgray}1 \\
\end{array}
\end{array}
\]
    \caption{Different correlated distributions when $k = 4$ (up to the normalization constant $32$).}
    \label{fig:CEs}
\end{figure}

\begin{lemma}
    \label{lemma:ref-ub}
    Let $\vmu \defeq \frac{1}{T} \sum_{t=1}^T \vx^{(t)} \otimes \vy^{(t)}$ be an $\epsilon$-CE of $(\mat{R}, \mat{C})$, with $\epsilon \leq \frac{1}{k^2}$ and $\gamma \leq \frac{1}{4}$, supported only on $[n] \times [n]$. Then, for any $i \in [n]$ and $t \in [T]$, $\vx_i^{(t)} \vy_i^{(t)} \leq 4 T / k^2$.
\end{lemma}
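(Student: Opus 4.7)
The plan is to exploit two complementary $\epsilon$-CE conditions---an external deviation by Player $y$ (as in~\Cref{lemma:ub-prob}) that controls the marginal $\sum_t \vx_i^{(t)}$, and a new \emph{internal} deviation by Player $x$ that directly captures the product $\vx_i^{(t)} \vy_i^{(t)}$. This internal deviation is exactly what yields the sharper $1/k^2$ scaling as opposed to the $1/k$ scaling from~\Cref{lemma:ub-prob}.

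First, I would recall from the proof of~\Cref{lemma:ub-prob} (applied now to $\vmu$ as a CE and with Player $y$ deviating externally to action $n+i$) that
\[
    \sum_{t=1}^T \vx_i^{(t)} \;\leq\; \frac{T(1 + \gamma + 2\epsilon)}{k}.
\]
Next, I would consider the internal deviation $\phi_i \in \Phi_{\mathrm{int}}$ defined by $\phi_i(i) = n + i$ and $\phi_i(j) = j$ for $j \neq i$. Because $\vy^{(t)}$ is supported on $[n]$, row $i$ of $\mat{R}$ gives at most $\tfrac{1}{2}[(\gamma+1)\vy_i^{(t)} + (1 - \vy_i^{(t)})] = \tfrac{1}{2}(1 + \gamma \vy_i^{(t)})$, whereas row $n+i$ gives exactly $(k/2)\vy_i^{(t)}$ (since $\mat{R}_{n+i, j} = (k/2)\mathbf{1}[j = i]$ for $j \leq n$). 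Plugging these bounds into the $\epsilon$-CE inequality $\frac{1}{T}\sum_t \langle \phi_i(\vx^{(t)}) - \vx^{(t)}, \mat{R} \vy^{(t)} \rangle \leq \epsilon$ and rearranging yields
\[
    (k - \gamma) \sum_{t=1}^T \vx_i^{(t)} \vy_i^{(t)} \;\leq\; \sum_{t=1}^T \vx_i^{(t)} + 2T\epsilon.
\]

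Combining the two inequalities and using $\gamma \leq 1/4$ and $\epsilon \leq 1/k^2$ (so $k - \gamma \geq 3k/4$) then reduces to a short algebraic verification giving $\sum_t \vx_i^{(t)} \vy_i^{(t)} \leq 4T/k^2$. The desired pointwise bound follows immediately, because every term of the sum is nonnegative and hence $\vx_i^{(t)} \vy_i^{(t)} \leq \sum_{t'} \vx_i^{(t')} \vy_i^{(t')} \leq 4T/k^2$.

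The conceptual crux---and the main departure from the CCE-style argument of~\Cref{lemma:ub-prob}---is the choice of deviation. An external deviation of Player $x$ to action $n+i$ couples only to the marginals $\sum_t \vx_i^{(t)}$ and $\sum_t \vy_i^{(t)}$ separately, which forces the $1/k$-type bound. The internal deviation $i \mapsto n+i$ fires only when Player $x$ is recommended $i$, so it uses Player $x$'s posterior on $\vy$ conditional on that recommendation; since the block $\tfrac{k}{2} \mat{I}_n$ pays off only when Player $y$ plays column $i$ as well, the gain of the deviation is proportional to the diagonal sum $\sum_t \vx_i^{(t)} \vy_i^{(t)}$. The only delicate point I anticipate is verifying that the row-$i$ payoff term $\tfrac{1}{2}(\mat{1} - \mat{A} + (\gamma+1)\mat{I})_{i, :} \vy^{(t)}$ is genuinely bounded by $\tfrac{1}{2}(1 + \gamma \vy_i^{(t)})$; this uses the normalization $\mat{A}_{ii} = 1$ together with the fact that $\vy^{(t)}$ is a distribution on $[n]$, and is what ensures that the only coefficient of $\vy_i^{(t)}$ surviving on the right-hand side is $\gamma$ (rather than the much larger $\gamma + 1$).
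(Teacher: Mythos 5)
Your proof is correct and follows essentially the same approach as the paper: both use the external deviation of Player $y$ to action $n+i$ (inherited from the proof of \Cref{lemma:ub-prob}) to bound $\sum_t \vx_i^{(t)}$, and both use the internal deviation $i \mapsto n+i$ of Player $x$ to extract a bound on the diagonal products $\sum_t \vx_i^{(t)}\vy_i^{(t)}$. The one genuine difference is in how you finish. The paper bounds the row-$i$ payoff crudely by $\tfrac{1+\gamma}{2}$ and is then forced into a case split on whether $\vx_i^{(t)} > 4T/k^2$, because the error term $\tfrac{2T\epsilon}{k\vx_i^{(t)}}$ blows up when $\vx_i^{(t)}$ is small. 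You instead keep the $\vy_i^{(t)}$-dependence in the row-$i$ payoff, bounding it by $\tfrac{1}{2}(1 + \gamma\vy_i^{(t)})$, which lets you absorb the $\gamma$-contribution into the left-hand side to get $(k-\gamma)\sum_t \vx_i^{(t)}\vy_i^{(t)} \le \sum_t \vx_i^{(t)} + 2T\epsilon$; combining with the $\sum_t \vx_i^{(t)}$ bound is then a single unconditional inequality, no case split needed. This is a modest clean-up and in fact gives the slightly stronger conclusion $\sum_t \vx_i^{(t)}\vy_i^{(t)} \le 4T/k^2$ (a bound on the sum, not merely the maximum term), which the paper's argument also has but doesn't state. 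The only thing worth flagging: when you say the final step is a ``short algebraic verification,'' you should actually carry it out, since the constants are tight enough that it is not entirely obvious---one finds the required inequality $\tfrac{1}{k-\gamma}\bigl(\tfrac{1+\gamma+2\epsilon}{k} + 2\epsilon\bigr) \le \tfrac{4}{k^2}$ holds for $k \ge 2$ under $\gamma \le \tfrac14$, $\epsilon \le \tfrac{1}{k^2}$, and the case $k = 1$ is vacuous since then $4T/k^2 \ge 1 \ge \vx_i^{(t)}\vy_i^{(t)}$ trivially.
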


\begin{proof}
    In the proof~\Cref{lemma:ub-prob}, we have already seen that for any $i \in [n]$,
    \begin{equation}
        \label{eq:devy}
        \sum_{t=1}^T \vx_i^{(t)} \leq T \frac{1 + \gamma + 2 \epsilon}{k}.
    \end{equation}
    Let us fix some $i \in [n]$. We consider the internal deviation of Player $x$ in which $i \mapsto i + n$ and $i' \mapsto i'$ for any $i' \neq i$. The utility under that deviation can be expressed as 
    \[
        \frac{k}{2} \frac{1}{T} \sum_{t=1}^T \vx_i^{(t)} \vy_i^{(t)} + \frac{1}{T} \sum_{t=1}^T \sum_{i' \neq i} \sum_{j=1}^n \vx_{i'}^{(t)} \mat{R}_{i', j} \vy_j^{(t)}.
    \]
    Given that $\vmu$ is assumed to be an $\epsilon$-CE, we have
    \begin{align*}
        \frac{k}{2} \frac{1}{T} \sum_{t=1}^T \vx_i^{(t)} \vy_i^{(t)} + \frac{1}{T} \sum_{t=1}^T \sum_{i' \neq i} \sum_{j=1}^n \vx_{i'}^{(t)} \mat{R}_{i', j} \vy_j^{(t)} \leq  \frac{1}{T} \sum_{t=1}^T \sum_{i' = 1}^n \sum_{j=1}^n \vx_{i'}^{(t)} \mat{R}_{i', j} \vy_j^{(t)} + \epsilon \\
        \iff \frac{k}{2} \frac{1}{T} \sum_{t=1}^T \vx_i^{(t)} \vy_i^{(t)} \leq \frac{1}{T} \sum_{t=1}^T \vx_i^{(t)} \sum_{j=1}^n \mat{R}_{i, j} \vy_j^{(t)} + \epsilon.
    \end{align*}
    Thus, for any $t \in [T]$,
    \begin{equation}
        \label{eq:eps-care}
        \vy_i^{(t)} \frac{ \vx_i^{(t)} }{ \sum_{t=1}^T \vx_i^{(t)} } \leq \frac{1 + \gamma }{k} + T \frac{2\epsilon}{k \vx_i^{(t)}}.
    \end{equation}
    If $\vx_i^{(t)} \leq 4 T/k^2$, then clearly $\vx_i^{(t)} \vy_i^{(t)} \leq 4 T/k^2$. We can thus treat the case where $\vx_i^{(t)} > 4 T/k^2$. Continuing from~\eqref{eq:eps-care},
    \begin{equation}
        \label{eq:final-devx}
        \vy_i^{(t)} \frac{ \vx_i^{(t)} }{ \sum_{t=1}^T \vx_i^{(t)} } \leq \frac{1 + \gamma }{k} + \frac{\epsilon k}{2} \leq \frac{1.5 + \gamma}{k},
    \end{equation}
    where we used the assumption that $\epsilon \leq \frac{1}{k^2}$. Combining~\eqref{eq:devy} and~\eqref{eq:final-devx}, we have
    \[
        \vx_i^{(t)} \vy_i^{(t)} \leq 4 \frac{T}{k^2},
    \]
    as desired.
\end{proof}
In contrast, the upper bound on $\vx_i^{(t)} \vy_i^{(t)}$ implied by~\Cref{lemma:ub-prob} scales as $\Theta(T^2/k^2)$. We will now show how to leverage~\Cref{lemma:ref-ub} to improve~\Cref{lemma:IS}.

\begin{lemma}
    \label{lemma:ref-IS}
    Suppose that there is a uniform $T$-sparse $\frac{1}{4k^2}$-CE in $(\mat{R}, \mat{C})$ supported only on $[n] \times [n]$ with welfare at least $1 + 3\gamma/(4 k)$. Then, if $\gamma = 1/(256 k T^3)$, there is an independent set in $G$ with size $k/(16 T)$.
\end{lemma}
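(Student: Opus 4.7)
The plan is to follow the proof of~\Cref{lemma:IS} almost verbatim, replacing only the crude per-index product bound $\vx_i^{(t)} \vy_i^{(t)} \leq (2T/k)^2$---which was responsible for the extra factor of $T$ in the size of the extracted independent set---by the sharper joint estimate $\vx_i^{(t)} \vy_i^{(t)} \leq 4T/k^2$ provided by~\Cref{lemma:ref-ub}. I first verify that both lemmas apply under the strengthened tolerance: $\epsilon = 1/(4k^2)$ satisfies $\epsilon \leq 1/k^2$, and $\gamma = 1/(256 k T^3) \leq 1/4$. Hence~\Cref{lemma:ref-ub} yields the joint bound, while~\Cref{lemma:ub-prob} continues to give $\vx_i^{(t)}, \vy_i^{(t)} \leq 2T/k$ for every $i \in [n]$ and $t \in [T]$.

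Next, as in the proof of~\Cref{lemma:IS}, I pick an index $t \in [T]$ with $\sw(\vx^{(t)} \otimes \vy^{(t)}) \geq 1 + 3\gamma/(4k)$, which forces $\sum_{i=1}^n \vx_i^{(t)} \vy_i^{(t)} \geq 3/(4k)$, and I retain the same threshold set $S \defeq \{i \in [n] : \vx_i^{(t)}, \vy_i^{(t)} \geq 1/(16 kT)\}$. The only substantive change appears in the contribution of $S$: invoking~\Cref{lemma:ref-ub} gives $\sum_{i \in S} \vx_i^{(t)} \vy_i^{(t)} \leq |S| \cdot 4T/k^2$, which saves a factor of $T$ relative to the estimate used in~\Cref{lemma:IS}. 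On $S^c$, the Cauchy--Schwarz plus~\Cref{claim:basic} calculation is identical---it only requires $\vx_i^{(t)}, \vy_i^{(t)} \leq 2T/k$ together with the threshold $1/(16kT)$---and still contributes at most $1/(2k)$. Combining, $3/(4k) \leq |S| \cdot 4T/k^2 + 1/(2k)$, so $|S| \geq k/(16T)$.

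Finally, to conclude that $S$ is an independent set of $G$, I replay the edge-contradiction step of~\Cref{lemma:IS} without modification: if some $i \neq j$ in $S$ had $\mat{A}_{i, j} = 1$, then $\sw(\vx^{(t)} \otimes \vy^{(t)}) \leq 1 + \gamma \sum_i \vx_i^{(t)} \vy_i^{(t)} - \vx_i^{(t)} \vy_j^{(t)} - \vx_j^{(t)} \vy_i^{(t)} \leq 1 + \gamma \cdot (2T/k) - 1/(128 k^2 T^2)$, where $\sum_i \vx_i^{(t)} \vy_i^{(t)} \leq 2T/k$ follows from~\Cref{lemma:ub-prob} and $\sum_j \vy_j^{(t)} \leq 1$. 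The two nontrivial terms on the right cancel by the choice $\gamma = 1/(256 k T^3)$, contradicting $\sw(\vx^{(t)} \otimes \vy^{(t)}) \geq 1 + 3\gamma/(4k) > 1$. I do not anticipate any real obstacle here: the entire improvement is algebraic, and the genuinely new ingredient---using the well-chosen internal deviation to obtain the joint per-index bound---has already been carried out in~\Cref{lemma:ref-ub}.
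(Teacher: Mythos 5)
Your proof is correct and follows the paper's own argument essentially verbatim: select the welfare-heavy index $t$, keep the same threshold set $S$, substitute the sharper per-index bound $\vx_i^{(t)}\vy_i^{(t)} \leq 4T/k^2$ from~\Cref{lemma:ref-ub} for the contribution of $S$ (while keeping the original Cauchy--Schwarz estimate on $S^c$), and then replay the edge-contradiction step. The only difference is that you make explicit the verification that $\epsilon = 1/(4k^2) \leq 1/k^2$ and $\gamma \leq 1/4$ and spell out the final contradiction, both of which the paper leaves implicit.
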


\begin{proof}
    The argument follows the proof of~\Cref{lemma:IS} in conjunction with the refinement of~\Cref{lemma:ref-ub}. By~\Cref{lemma:ub-prob} and our choice of parameters, it follows that $\vx_i^{(t)}, \vy_j^{(t)} \leq 2 T / k$ for any $i, j \in [n]$. By selecting $S \defeq \{i \in [n] : \vx_i^{(t)}, \vy_i^{(t)} \geq 1/(16 k T) \}$, we have
    \begin{align*}
        \frac{3}{4k} &\leq \sum_{i \in S} \vx_i^{(t)} \vy_i^{(t)} + \sum_{i \notin S} \vx_i^{(t)} \vy_i^{(t)} \\
        &\leq |S| \frac{4T}{k^2} + \frac{1}{2k},
    \end{align*}
    where we used~\Cref{lemma:ref-ub}. This implies that $|S| \geq k/(16 T)$. Further, as in the proof of~\Cref{lemma:IS}, it follows that $S$ is an independent set in $G$.
\end{proof}

To finish the construction of an $\soshard$ game, we rely on the following SoS lower bound for the independent set problem due to~\citet{Tulsiani09:CSP}; although it was phrased as an integrality gap in terms of the objective value of the SDP, the decision version we state below can be readily extracted (see the lecture notes of~\citet{Barak16:Proofs}).

\begin{theorem}[\citep{Tulsiani09:CSP}]
    \label{theorem:Tulsiani-IS}
    For a large enough $n \in \N$ and constants $c_1, c_2 > 0$, there is a family of $n$-node graphs such that
    \begin{itemize}
    \item there is a degree-$2^{c_2 \sqrt{\log n \log \log n} }$ pseudo-expectation that satisfies the constraints of an independent set with size $k = k(n)$ on $G$, and
        \item every independent set in $G$ has size at most $k 2^{c_1 \sqrt{ \log n \log \log n }} / n $.
    \end{itemize}
\end{theorem}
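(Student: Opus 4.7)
The plan is to follow Tulsiani's SoS-preserving reduction framework, whose linchpin is a degree-$\Omega(n)$ Lasserre/SoS integrality gap for random $3$-XOR due to Grigoriev and Schoenebeck: at appropriate clause density, a typical random $3$-XOR instance admits a degree-$\Omega(n)$ pseudo-expectation claiming full satisfiability, while in reality no assignment satisfies more than a $1/2 + o(1)$ fraction of clauses. Every subsequent step transports this base gap through a chain of reductions while losing only a polynomial factor in degree.

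First, I would convert the $3$-XOR gap into a graph gap via an FGLSS-style construction. Vertices are (constraint, locally-satisfying assignment) pairs, with edges between pairs whose partial assignments conflict on a shared variable; independent sets then correspond to mutually consistent partial assignments. In parallel with the combinatorial reduction, I would construct a degree-$\Omega(d)$ pseudo-expectation on the vertex-indicator variables $\{z_v\}$ by pulling back from the CSP pseudo-expectation $\tilE$: roughly, $\tilE'\bigl[\prod_v z_v\bigr]$ equals the pseudo-probability that the partial assignments indexed by the $v$'s are jointly consistent and satisfy their clauses. Normalization, the edge constraints $z_u z_v = 0$, and the size constraint $\sum_v z_v \geq k$ all reduce to properties of $\tilE$; positivity follows because each pulled-back square polynomial in the $z_v$'s can be rewritten as a low-degree combination of squares in the base variables.

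To stretch the gap from the $O(1)$ produced by a single round of this reduction up to $n/2^{c_1 \sqrt{\log n \log \log n}}$, I would apply gap amplification on top of it, via either tensoring the base CSP with itself (requiring consistency across many coordinates) or composing with PCP-style long-code reductions as in UGC-based independent-set hardness. After $r = \Theta(\sqrt{\log n/\log\log n})$ layers of amplification the instance size grows to $N = n^{\Theta(r)}$, the integrality gap grows to $2^{\Theta(r \log\log n)}$, and the pseudo-expectation degree drops by at most a factor $r^{O(1)}$; re-expressing the parameters in terms of $N$ then matches the stated degree $2^{c_2 \sqrt{\log N \log\log N}}$ and gap $2^{c_1 \sqrt{\log N \log\log N}}$. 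The main obstacle is verifying that each of these reductions is genuinely SoS-preserving, i.e.\ that positivity of the pulled-back pseudo-expectation survives at the claimed degree; this hinges on expressing every step (the clauses of the amplified CSP, the long-code predicate, the FGLSS conflict graph) as an explicit polynomial operation on the base variables so that sum-of-squares certificates transfer through it automatically, which is the technical core of Tulsiani's framework.
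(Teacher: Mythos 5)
The statement is not proved in the paper at all: it is imported as an external result from Tulsiani (2009), with a remark that the decision form stated here can be ``readily extracted'' from Tulsiani's integrality-gap formulation, citing Barak's lecture notes. There is therefore no in-paper proof to compare against; what you have written is an attempt to reconstruct Tulsiani's own argument.

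Taken on those terms, your skeleton is right at the top and bottom but misdescribes the middle. Tulsiani does begin from the Grigoriev--Schoenebeck degree-$\Omega(n)$ Lasserre lower bound for random XOR, and does push it through a Lasserre-preserving FGLSS reduction; and you correctly identify that the technical core is proving each step SoS-preserving. But the amplification is not done by iterated tensoring of the CSP, and ``UGC-based long-code reductions'' are out of place: no Unique Games assumption may enter, since the target is an unconditional SoS lower bound, not an NP-hardness reduction. Instead, Tulsiani performs a \emph{single} local reduction from XOR to a Max $q$-CSP over a Samorodnitsky--Trevisan-type predicate with arity $q$ chosen as a growing function, and then applies FGLSS once to that instance; the FGLSS graph on $m\cdot 2^{q}$ vertices already exhibits a gap of roughly $2^{q}/q^{O(1)}$, and the $\sqrt{\log n \log\log n}$ exponents fall out of the single-shot choice of $q$ against the number of base variables, not from ``$r$ layers.'' Relatedly, your parameter arithmetic does not close: with $N=n^{\Theta(r)}$ and $r=\Theta(\sqrt{\log n/\log\log n})$ one has $\log N=\Theta(\sqrt{\log^{3}n/\log\log n})$, and the quantity $\sqrt{\log N\,\log\log N}$ is not $\Theta(r\log\log n)$, so the claimed ``re-expressing in terms of $N$'' does not actually produce the stated exponents. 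If you want to carry this out, you should follow Tulsiani's independent-set section directly rather than substituting an amplification scheme of your own design.
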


We now return to the proof of completeness. To construct a pseudo-NE that satisfies the desired utility constraints starting from~\Cref{theorem:Tulsiani-IS}, we will use a standard approach in the context of SoS reductions~\citep{Tulsiani09:CSP} (see also~\citep[Fact 5.6]{Kothari18:Sum} and the lecture notes of~\citet{Barak16:Proofs} for further background). Namely, let $\Psi$ be the system corresponding to the independent set problem with size $k$, $\{ \vz_i^2 = \vz_i \}_{i \in [n]}$, $\{ \sum_{i=1}^n \vz_i = k \}$, and $\{ \vz_i \vz_j = 0 \}_{ \{i, j\} \in E }$, and let $\tilde{\nu}$ be a degree-$d$ pseudo-distribution that satisfies $\Psi$. In accordance with~\Cref{lemma:completeness}, we construct a pseudo-distribution $\tilde{\rho}$ on $(\vx, \vy)$ of $(\mat{R}, \mat{C})$ based on the mapping $\vx_i = \frac{1}{k} \vz_i $ for $i \leq n$ and $0$ otherwise, and $\vy_j = \frac{1}{k} \vz_j$ for $j \leq n$ and $0$ otherwise; we denote this mapping by $(\vx, \vy) = (\vx(\vz), \vy(\vz))$. This yields the associated pseudo-expectation
\begin{equation}
    \label{eq:map1}
    f \mapsto \tilE_{\tilde{\rho}} [ f(\vx, \vy) ] = \tilE_{\tilde{\nu}} [ f(\vx(\vz), \vy(\vz))].
\end{equation}
Now, let $\Psi'$ be the system corresponding to $1$-sparse CE in which both players obtain a utility of at least $\delta$. Following~\Cref{lemma:completeness}, it is a straightforward exercise to show that $\tilde{\rho}$ is an $\Omega(d)$-degree pseudo-distribution that satisfies the constraints of $\Psi'$; this is essentially the same property shown by~\citet{Kothari18:Sum} for Nash equilibria. Finally, based on $\tilde{\rho}$, we construct a pseudo-distribution $\tilmu$ on $(\vx^{(1)}, \dots, \vx^{(T)}, \vy^{(1)}, \dots, \vy^{(T)})$, such that
\begin{equation}
    \label{eq:map2}
    f \mapsto \tilE_{\tilmu} [ f(\vx^{(1)}, \dots, \vx^{(T)}, \vy^{(1)}, \dots, \vy^{(T)})] = \tilE_{\tilde{\rho}} [ f(\vx, \dots, \vx, \vy, \dots, \vy)].
\end{equation}
Then, it follows that $\tilmu$ is a degree-$\Omega(d)$ pseudo-distribution on $(\vx^{(1)}, \dots, \vx^{(T)}, \vy^{(1)}, \dots, \vy^{(T)})$ that satisfies the constraints of $T$-sparse CE together with the constraint that both players obtain a utility of at least $\delta$. We gather the transformations above in the following definition.

\begin{definition}
    \label{def:tilmu}
    Let $\tilde{\nu}$ be a pseudo-distribution that satisfies the constraints of the independent set problem with size $k$ on $G$. We define a pseudo-distribution $\tilde{\mu}$ on $(\vx^{(1)}, \dots, \vx^{(T)}, \vy^{(1)}, \dots, \vy^{(T)})$, with respect to game $(\mat{R}(G), \mat{C}(G))$ in~\eqref{eq:indep-setgame}, according to~\eqref{eq:map1} and~\eqref{eq:map2}.
\end{definition}

We are now ready to establish the existence of an $\soshard$ game. We start from a graph described in~\Cref{theorem:Tulsiani-IS}. When $T = n/(32 \cdot 2^{ c_1 \sqrt{\log n \log \log n}}) = n^{1 - o(1)}$ and $\gamma = 1/(256 k T^3)$, \Cref{lemma:bad-prob,lemma:ref-IS} imply that any uniform $T$-sparse $O(1/k^2)$-CE in the induced game~\eqref{eq:indep-setgame} must have welfare at most $2 \delta - \gamma/(4k)$. Further, $\tilmu$ described in~\Cref{def:tilmu} satisfies all the properties of~\Cref{def:soshard} with degree given by~\Cref{theorem:Tulsiani-IS}, leading to the following result. (Rescaling~\eqref{eq:indep-setgame} by $1/k$ only affects constants $c_2$ and $c_3$ below; doing so also rescales the value of $\delta$ in~\Cref{def:soshard}.)

\begin{theorem}[$\soshard$ game for high precision]
    \label{theorem:soshard-high}
    For some absolute constants $c_1, c_2, c_3$, there is a $(2^{c_1 \sqrt{\log n \log \log n} }, n^{1 - o(1)}, n^{-c_2}, n^{-c_3})$-$\soshard$ game. 
\end{theorem}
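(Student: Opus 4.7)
The plan is to instantiate the general template assembled in this subsection: take a graph $G$ with a strong SoS/independent-set gap, plug it into the game construction~\eqref{eq:indep-setgame}, and then read off the two bullets of Definition~\ref{def:soshard} from the completeness reduction together with the soundness lemmas already proved. Concretely, I would invoke Theorem~\ref{theorem:Tulsiani-IS} to obtain, for every sufficiently large $n$, an $n$-node graph $G$ that admits a degree-$d^\star \defeq 2^{c_2 \sqrt{\log n \log \log n}}$ pseudo-expectation satisfying the independent-set axioms with parameter $k=k(n)$, while every genuine independent set in $G$ has size at most $k \cdot 2^{c_1 \sqrt{\log n \log \log n}}/n$. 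I then feed $G$ into~\eqref{eq:indep-setgame} with $\gamma \defeq 1/(256\, k T^3)$ and $T \defeq n/(32 \cdot 2^{c_1 \sqrt{\log n \log \log n}}) = n^{1-o(1)}$; in particular $\gamma \le 1/4$ for large $n$, meeting the hypothesis of Lemma~\ref{lemma:ref-ub}.

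For the completeness bullet of Definition~\ref{def:soshard}, I would apply the SoS transport of Definition~\ref{def:tilmu}: push the Tulsiani pseudo-distribution $\tilde\nu$ through $\vx_i = \vy_i = \vz_i/k$ on the top block (and zero on the bottom block) per~\eqref{eq:map1}, then duplicate $T$ times per~\eqref{eq:map2}. The actual equilibrium of Lemma~\ref{lemma:completeness} gives a constant-degree SoS proof that every inequality required for a $T$-sparse pseudo-CE with per-player utility at least $\delta \defeq (1+\gamma/k)/2$ is a formal consequence of the independent-set axioms, so the transported pseudo-distribution $\tilmu$ inherits degree $\Omega(d^\star)$ and satisfies the first bullet of~\Cref{def:soshard}.

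For soundness, suppose toward contradiction that some uniform $T$-sparse $\epsilon_0$-CE $\vmu$ of~\eqref{eq:indep-setgame} attains welfare strictly greater than $2\delta - \gamma/(4k) = 1 + 3\gamma/(4k)$. I would first apply Lemma~\ref{lemma:bad-prob} to restrict attention to $\hvmu$ supported on $[n]\times[n]$; with $\epsilon_0 = O(1/k^2)$ and $\gamma = O(1/(k^2 T^3))$, the resulting CE gap is still $O(1/k^2)$ and the welfare remains above the threshold of Lemma~\ref{lemma:ref-IS}. That lemma then yields an independent set in $G$ of size at least $k/(16T)$; by our choice of $T$ this is strictly larger than $k \cdot 2^{c_1 \sqrt{\log n \log \log n}}/n$, contradicting Theorem~\ref{theorem:Tulsiani-IS}. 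Thus every uniform $T$-sparse $\epsilon_0$-CE has welfare at most $2\delta - 2\epsilon$ with $\epsilon = \Theta(\gamma/k) = \Theta(1/(k^2 T^3))$.

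Finally I would rescale both payoff matrices by $1/k$ so that all entries lie in $[-1,1]$, as required by Definition~\ref{def:soshard}. This only rescales the parameters $\delta, \epsilon_0, \epsilon$ linearly and does not change the SoS degree bound. Since $k \le n$ and $T = n^{1-o(1)}$, both $\epsilon_0 = n^{-c_2}$ and $\epsilon = n^{-c_3}$ hold for appropriate absolute constants, yielding the claimed $(d^\star, T, n^{-c_2}, n^{-c_3})$-$\soshard$ game. The only nontrivial bit of care is the parameter bookkeeping through the rescaling: one must verify that the $O(1/k^2)$ approximation threshold from Lemma~\ref{lemma:ref-IS} survives dilation by $1/k$ as a polynomial-in-$n$ quantity, and that the welfare slack $\gamma/(4k)$ does the same; every other ingredient is plug-and-play from Lemmas~\ref{lemma:completeness}, \ref{lemma:bad-prob}, \ref{lemma:ref-IS} and Theorem~\ref{theorem:Tulsiani-IS}.
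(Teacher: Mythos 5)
Your proposal is correct and follows essentially the same path the paper takes: it invokes Theorem~\ref{theorem:Tulsiani-IS} for the SoS/independent-set gap, sets $T = n/(32\cdot 2^{c_1\sqrt{\log n\log\log n}})$ and $\gamma = 1/(256kT^3)$, obtains completeness by transporting $\tilde\nu$ via Definition~\ref{def:tilmu}, obtains soundness via Lemmas~\ref{lemma:bad-prob} and~\ref{lemma:ref-IS}, and rescales by $1/k$ at the end. (One minor slip: you write ``$\gamma = O(1/(k^2T^3))$'' mid-argument, but $\gamma = 1/(256kT^3)$; the bound you actually need, $\gamma k = O(1/k^2)$, still holds since $k^2 = O(T^3)$ in this parameter range, so the argument is unaffected.)
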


\subsection{Low precision}

Our approach in the low-precision regime is similar, but instead relies on SoS lower bounds for \emph{planted clique}---a central problem in average-case complexity. We should note that~\citet{Kothari18:Sum} had to rely on a different reduction~\citep{Braverman15:Approximating,Deligkas18:Inapproximability} since the result we use below (\Cref{theorem:sos-planted}) was not available at the time.

In particular, we will use a result by~\citet{Pang21:SOS} (see also~\citep{Meka15:Sum,Deshpande15:Improved,Hopkins16:Integrality,Barak19:Nearly} for some earlier results). It is assumed that an $n$-node graph $G$ is drawn from an Erd\H{o}s-R\'enyi family $G(n, \frac{1}{2})$, meaning that each edge is included independently with probability $\frac{1}{2}$. An important property is that in such graphs, there is not even a $2.1 \log n$ clique with high probability---that is, with probability at least $1 - n^{-c}$. Yet, \citet{Pang21:SOS} established the following result.

\begin{theorem}[\citep{Pang21:SOS}]
    \label{theorem:sos-planted}
    With high probability over $G(n, \frac{1}{2})$, there is a degree-$\Omega(\log n / \log \log n)$ pseudo-distribution satisfying the constraints of $k$-clique for any $k \leq n^{1/2 - O(1)}$. 
\end{theorem}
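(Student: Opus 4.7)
The plan is to prove this via the \emph{pseudo-calibration} framework introduced by Barak, Hopkins, Kelner, Kothari, Moitra, and Potechin, and then sharpened by Pang. The output is a degree-$d$ pseudo-distribution $\tilde{\nu}$ on $\vz \in \{0,1\}^n$ that satisfies the $k$-clique system $\Psi_{\text{clique}} = \{\vz_i^2 = \vz_i\}_{i \in [n]} \cup \{\sum_i \vz_i = k\} \cup \{\vz_i \vz_j = 0 : \{i,j\} \notin E(G)\}$. First I would define a pair of distributions on $(G, \vz)$: a null distribution $\nu_{\text{null}}$ in which $G \sim G(n, 1/2)$ and $\vz$ is arbitrary, and a planted distribution $\nu_{\text{pl}}$ in which one samples $G \sim G(n, 1/2)$, plants a $k$-clique on a uniformly random $k$-subset $S \subseteq [n]$, and sets $\vz = \bone_S$. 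Pseudo-calibration then prescribes the moments $\tilE_{\tilde{\nu}}[f(\vz)] $ as the low-degree projection (in the Fourier basis of $G$) of $\E_{\nu_{\text{pl}}}[f(\vz) \mid G]$: concretely, for each multilinear monomial $\vz^I$ with $|I| \leq d/2$, one sets
\[
\tilE_{\tilde{\nu}}[\vz^I] \defeq \sum_{T \subseteq \binom{[n]}{2}, \, |T| + |I| \leq \tau} \E_{\nu_{\text{pl}}}[\vz^I \cdot \chi_T(G)] \cdot \chi_T(G),
\]
where $\chi_T(G) \defeq \prod_{e \in T} (-1)^{\mathbbm{1}[e \notin G]}$ are the parity characters and $\tau = \Theta(d)$ is the truncation parameter.

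Next I would verify the polynomial constraints. By construction the symbolic expectation $\E_{\nu_{\text{pl}}}[\vz^I \mid G]$ exactly annihilates every monomial containing a non-edge or a repeated index, and it integrates to $\binom{k}{|I|}/\binom{n}{|I|}$; these identities propagate into $\tilde{\nu}$ because they involve only rearrangements of the Fourier expansion that commute with low-degree truncation. Hence $\tilde{\nu}$ is a syntactic solution to $\Psi_{\text{clique}}$ provided it is positive. The main obstacle is therefore the \emph{positivity} condition $\tilE_{\tilde{\nu}}[p(\vz)^2] \geq 0$ for every polynomial $p$ of degree $\leq d/2$, which is equivalent to showing that the \emph{moment matrix} $M \in \R^{\binom{[n]}{\leq d/2} \times \binom{[n]}{\leq d/2}}$ with $M[I,J] = \tilE_{\tilde{\nu}}[\vz^{I \cup J}]$ is positive semidefinite with high probability over $G \sim G(n,1/2)$.

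To prove PSDness I would decompose $M$ into a sum of \emph{graph matrices} $M_\alpha$, indexed by ``shapes'' $\alpha$ (bipartite multigraphs with labeled boundary). Each entry of $M_\alpha$ is a character $\chi_{E(\alpha)}$ summed over non-isomorphic embeddings of $\alpha$ into $[n]$, scaled by the appropriate Fourier coefficient of $\E_{\nu_{\text{pl}}}[\vz^I \chi_T(G)]$. The key technical ingredient is the norm bound due to Medarametla--Potechin (refined by Ahn--Medarametla--Potechin and Pang): $\|M_\alpha\| \leq n^{\text{vw}(\alpha)/2} \cdot (\log n)^{O(|V(\alpha)|)}$ with high probability, where $\text{vw}(\alpha)$ is the minimum vertex separator of the shape. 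Setting $k = n^{1/2 - \Omega(1)}$ ensures that the ``trivial'' shape dominates and every non-trivial shape contributes a correction of norm $n^{-\Omega(1)}$ times the leading term. To close the PSDness argument, one writes $M = \Lambda (\mathrm{Id} + \Delta) \Lambda^\top$ for an explicit invertible $\Lambda$ and a remainder $\Delta$ with $\|\Delta\| < 1$; this is where the degree bound $d \leq c \log n / \log \log n$ is required, since the number of shapes (and hence the accumulated error) grows like $d^{O(d)}$, and a larger $d$ would overwhelm the slack $n^{-\Omega(1)}$.

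Finally, collecting the parameters: for $k \leq n^{1/2 - \epsilon}$ and $d \leq c(\epsilon) \log n / \log \log n$, the norm-bound computation shows $\|\Delta\| = o(1)$ with high probability over $G(n,1/2)$, so $M \succeq 0$ and $\tilde{\nu}$ is a valid degree-$d$ pseudo-distribution satisfying $\Psi_{\text{clique}}$. I expect the hardest part to be the graph matrix norm analysis and the careful charging argument that controls the cumulative contribution of shapes of all sizes up to $d$; the combinatorial bookkeeping of which shapes can appear (respecting the constraint $\vz_i^2 = \vz_i$ and the non-edge annihilation) is what ultimately determines the sharp threshold $d = \Theta(\log n/\log \log n)$ appearing in the statement.
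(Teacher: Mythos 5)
This statement is a citation to the result of Pang~\citep{Pang21:SOS}; the paper does not re-prove it, so there is no in-paper proof to compare against. Your sketch correctly identifies the overall framework --- pseudo-calibration against a planted distribution, a graph-matrix decomposition of the moment matrix, and norm bounds of the Medarametla--Potechin flavor to control the off-diagonal shapes --- and the parameter reasoning (the accumulated $(\log n)^{O(|V(\alpha)|)}$ factors forcing $d=O(\log n/\log\log n)$, with the $n^{-\Omega(1)}$ slack coming from $k\leq n^{1/2-\Omega(1)}$) is the right intuition.

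However, the sentence asserting that the clique constraints ``propagate into $\tilde{\nu}$ because they involve only rearrangements of the Fourier expansion that commute with low-degree truncation'' hides precisely the technical heart of the cited result. Low-degree truncation does \emph{not} commute with all of the constraints in general; the original Barak--Hopkins--Kelner--Kothari--Moitra--Potechin construction produces a pseudo-distribution that satisfies the $k$-clique system only \emph{approximately} (most notably the linear constraint $\sum_i \vz_i = k$ and, relatedly, the normalization $\tilE[1]=1$), and showing one can upgrade this to an \emph{exact} feasible pseudo-distribution --- which is what \Cref{def:pseudo-CE} and \Cref{def:tilmu} in this paper actually require --- is the contribution that distinguishes~\citep{Pang21:SOS} from the earlier work. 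Making the constraints exact requires modifying the pseudo-calibrated moments and then re-verifying PSDness of the corrected moment matrix; this is not a bookkeeping ``rearrangement'' but a genuine perturbation argument whose control is where most of the work lies. As written, your plan would deliver (at best) the BHKKMP-style approximate feasibility, not the exact statement being cited. To repair the plan you would need to explicitly describe the correction step (e.g., projecting the moment matrix onto the affine subspace cut out by the linear constraints while preserving PSDness via a norm bound on the projection error), and show that the resulting perturbation is small relative to the $n^{-\Omega(1)}$ spectral slack coming from $k\leq n^{1/2-\Omega(1)}$.
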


It is likely that the degree above can be improved to $\Omega(\log n)$~\citep{Pang21:SOS}, but to our knowledge that remains open. Now, the reduction we described earlier in~\Cref{sec:wel-highprec} based on~\eqref{eq:indep-setgame} can be readily applied here as well. In particular, although it was formulated based on the independent set problem, one can simply replace $\mat{1}_{n \times n} - \mat{A}$ by $\mat{A} - \mat{I}_n$ to translate the previous lemmas in terms of the clique problem. As a result, using~\Cref{theorem:sos-planted} with $k = \log^{O(1)} n$ we arrive at the following conclusion. (As before, rescaling the utilities by $1/k$ affects constants $c_2$ and $c_3$ below.)

\begin{theorem}[$\soshard$ game for low precision]
    \label{theorem:soshard-low}
    For some absolute constants $c_1, c_2, c_3$, there is a $(c_1 \log n / \log \log n, \log n, (\log n)^{-c_2}, (\log n)^{-c_3} )$-$\soshard$ game.
\end{theorem}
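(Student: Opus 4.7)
The plan is to mimic the construction of~\Cref{theorem:soshard-high}, with two key swaps: (i) replace the reduction from independent set by its ``complementary'' version for clique, and (ii) invoke the SoS lower bound for planted clique (\Cref{theorem:sos-planted}) in place of~\citet{Tulsiani09:CSP}. Concretely, sample $G \sim G(n, 1/2)$. By a standard first-moment calculation, with probability at least $1 - n^{-\Omega(1)}$, the graph $G$ contains no clique of size larger than $2.1 \log n$. At the same time, by~\Cref{theorem:sos-planted}, there is a degree-$\Omega(\log n/\log\log n)$ pseudo-distribution on $\vz$ that satisfies the constraints of $k$-clique for any $k \leq n^{1/2 - O(1)}$.

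For the reduction, I would take~\eqref{eq:indep-setgame} verbatim but replace the block $\mat{1}_{n\times n} - \mat{A}$ by $\mat{A} - \mat{I}_n$, so that the top-left component now rewards pairs of actions joined by an edge in $G$ rather than by a non-edge. With this single substitution, I would check that the analogs of~\Cref{lemma:completeness,lemma:bad-prob,lemma:ub-prob,lemma:ref-ub,lemma:ref-IS} go through essentially verbatim: a clique $S$ of size $k$ induces a Nash equilibrium with welfare $1 + \gamma/k$; any CE with welfare close to $1$ must place almost all mass on the $[n]\times[n]$ block; the $(k/2)\mat{I}_n$ off-diagonal terms still upper bound individual and joint probabilities; and the set $S = \{ i : \vx_i^{(t)}, \vy_i^{(t)} \geq 1/(16kT)\}$ now indexes a clique in $G$. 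None of the geometric arguments depend on whether edges encode adjacency or non-adjacency, so I would only need to rewrite the final contradiction step (in the clique analog of~\Cref{lemma:ref-IS}) in terms of $\mat{A}_{i,j}=0$ rather than $\mat{A}_{i,j}=1$.

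I would then choose the parameters so that everything balances. Set $T = \log n$ and $k = C\log^2 n$ with $C$ a large enough absolute constant, and $\gamma = 1/(256 k T^3) = \Theta(1/\log^5 n)$. The soundness analog yields a clique in $G$ of size at least $k/(16 T) = C\log n / 16$, which exceeds $2.1\log n$ provided $C$ is chosen sufficiently large. This contradicts the clique bound on $G$, so no uniform $T$-sparse $O(1/k^2)$-CE attains welfare $\geq 1 + 3\gamma/(4k)$; equivalently, any such CE has welfare at most $2\delta - \gamma/(4k) = 2\delta - \Theta(1/\log^7 n)$. For completeness, I would lift the $k$-clique pseudo-distribution of~\Cref{theorem:sos-planted} to a pseudo-distribution on $(\vx, \vy)$ via the mapping of~\eqref{eq:map1} and then stretch to $T$ copies via~\eqref{eq:map2} as in~\Cref{def:tilmu}; exactly as in the high-precision regime, the resulting $\tilmu$ has degree $\Omega(\log n/\log\log n)$ and satisfies the constraints of $T$-sparse pseudo-CE with both utilities $\geq \delta = (1 + \gamma/k)/2$. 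Finally, I would rescale utilities by $1/k = \Theta(1/\log^2 n)$ so that entries lie in $[-1,1]$; this only rescales $\delta, \gamma, \epsilon, \epsilon_0$ by polylogarithmic factors, which can be absorbed into the constants $c_2, c_3$.

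The main obstacle is a bookkeeping one rather than a conceptual one: one has to verify that the internal-deviation argument driving~\Cref{lemma:ref-ub} still produces a $\Theta(T/k^2)$ upper bound on $\vx_i^{(t)}\vy_i^{(t)}$ after the substitution $\mat{1}_{n\times n} - \mat{A} \mapsto \mat{A} - \mat{I}_n$, and that the approximation slack $\epsilon_0 = (\log n)^{-c_2}$ is small enough relative to $1/k^2 = \Theta(1/\log^4 n)$ for the corresponding step of the proof to close. Since the same internal deviation $i \mapsto i + n$ leveraging the $(k/2)\mat{I}_n$ off-diagonal block is used, this amounts to checking that the replacement of the top-left block only changes the additive term $\sum_{j=1}^n \mat{R}_{i,j}\vy_j^{(t)}$ by a quantity bounded by $(1+\gamma)/k$ after the normalization step~\eqref{eq:final-devx}, which is precisely what was established (modulo the sign of the edge indicator) in the high-precision regime.
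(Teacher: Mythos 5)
Your proposal is correct and follows the same route the paper takes: replace the block $\mat{1}_{n\times n} - \mat{A}$ by $\mat{A} - \mat{I}_n$ to translate the independent-set reduction of~\Cref{sec:wel-highprec} into a clique reduction, appeal to~\Cref{theorem:sos-planted} for the degree-$\Omega(\log n/\log\log n)$ pseudo-distribution, and observe that \Cref{lemma:completeness,lemma:bad-prob,lemma:ub-prob,lemma:ref-ub,lemma:ref-IS} carry over verbatim since none of them use the sign of the edge indicator other than in the final contradiction of~\Cref{lemma:ref-IS}, then rescale by $1/k$. The paper states this more tersely (choosing $k = \log^{O(1)} n$ without pinning down the constant), but your explicit choice $k = C\log^2 n$, $T = \log n$, and the check that $k/(16T) = C\log n/16 > 2.1\log n$ for large $C$ are exactly the bookkeeping the paper leaves implicit.
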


As an alternative to the reduction presented in~\Cref{sec:wel-highprec}, it is worth pointing out a result due to~\citet{Hazan11:How} (\emph{cf.}~\citep{Austrin13:Inapproximability}) which precludes finding a Nash equilibrium maximizing welfare even when $\epsilon = \Theta(1)$ (under the planted clique conjecture). It can be shown that their reduction can be extended to $T$-sparse CE when $T = O(1)$~\citep{Anonymous24:Barriers}.
\section{Hard games under enumeration}
\label{sec:enum}

In this section, we construct a family of $\enumhard$ games with respect to (coarse) correlated equilibria by relying on analogous lower bounds for Nash equilibria~\citep{Kothari18:Sum,Daskalakis09:Oblivious}. We begin by stating precisely the definition of $\enumhard$ games.

\begin{definition}
    \label{def:enumhard}
    A family of (two-player) games $\mathcal{F} = \{(\mat{R}^S, \mat{C}^S) \}_{S \in \mathcal{S}}$ with utilities in $[-1, 1]$ is $\epsilon$-$\enumhard$ with respect to the set of CCE if for any $S, S' \in \mathcal{S}$ with $S \neq S'$, if $\vmu$ is an $\epsilon$-CCE of $(\mat{R}^S, \mat{C}^S)$ and $\vmu'$ is an $\epsilon$-CCE of $(\mat{R}^{S'}, \mat{C}^{S'})$, then $\vmu \neq \vmu'$.
\end{definition}

The rationale behind this definition is the following. Consider a potentially randomized algorithm $\mathcal{A}$ that is able to output an $\epsilon$-CCE with some constant probability, but can only access the game via the verification oracle of~\Cref{def:VO}. If there is an underlying family of $\epsilon$-$\enumhard$ games per~\Cref{def:enumhard}, then each call to the verification oracle can only exclude a single game at a time---when it returns \reject. But $\mathcal{A}$ has to be able to identify the underlying game within the class of games $\mathcal{F}$ with constant probability in order to output an $\epsilon$-CCE. As a result, $\mathcal{A}$ must invoke the verification oracle a number of times proportional to $|\mathcal{F}|$ in order to be successful with constant probability. In other words, the key quantity of interest behind~\Cref{def:enumhard} is the cardinality of $\mathcal{F}$ as a function of $\epsilon$.

We begin with the regime where the approximation parameter $\epsilon$ is a constant (low precision), and we then proceed with the regime where $\epsilon = \poly(1/n)$ (high precision). 

\subsection{Low precision}

In the regime where the approximation parameter $\epsilon$ is a constant, we will adapt the lower bound of~\citet{Daskalakis09:Oblivious}, which in turn is based on the earlier construction by~\citet{Althofer94:Sparse}; the main motivation behind the lower bound of~\citet{Daskalakis09:Oblivious} was to prove that the quasipolynomial-time algorithm of~\citet{Lipton03:Playing} is essentially optimal for the class of ``oblivious algorithms.'' The main result pertaining to $\enumhard$ in the low-precision regime is summarized below; \Cref{item:default} is an auxiliary property not related to $\enumhard$ games, but will be useful later when combining games (\Cref{sec:stitching}).

\begin{theorem}[$\enumhard$ games for low precision]
    \label{theorem:enumhard-low}
    There is a family $\mathcal{F}$ of $n^{\Omega(\log n)}$ two-player $n \times n$ games $\{ (\mat{R}^S, \mat{C}^S) \}_{S \in \mathcal{S}}$ and a sufficiently small $\epsilon = \Theta(1)$ such that
    \begin{enumerate}
        \item $\mathcal{F}$ is $\epsilon$-$\enumhard$, and \label{item:enum-hard}
        \item for any $S \in \mathcal{S}$, there is a strategy $\vxdef \in \Delta^n$ such that $\langle \vxdef, \mat{R}^S \vy \rangle \geq \frac{1}{2}$ for any $\vy \in \Delta^n$, and a strategy $\vydef \in \Delta^n$ such that $\langle \vx, \mat{C}^S \vydef \rangle \geq \frac{1}{2}$ for any $\vx \in \Delta^n$. \label{item:default}
    \end{enumerate}
\end{theorem}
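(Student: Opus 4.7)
The plan is to recycle the classical hard family of \citet{Daskalakis09:Oblivious} for Nash equilibria---which, in turn, builds on Althöfer's sparsification~\citep{Althofer94:Sparse}---and lift it from NE to CCE by exploiting the (essentially) zero-sum nature of the construction.

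First, I would start from the known family of $n^{\Omega(\log n)}$ symmetric two-player zero-sum games in which the sets of constant-precision approximate Nash equilibria are pairwise disjoint. Concretely, for a small constant $\epsilon > 0$, one takes suitably random $\pm 1$ payoff matrices of polynomial dimension; an anti-concentration plus union-bound argument over the $n^{O(\log n)}$ possible mixed strategies of $O(\log n / \epsilon^2)$-support then yields pairwise disjointness of the $\epsilon$-NE sets. I would then affinely shift and rescale each game $(\mat{R}^S, -\mat{R}^S)$ into a constant-sum game with $\mat{R}^S + \mat{C}^S \equiv \mat{1}_{n \times n}$, entries in $[-1,1]$, and value $1/2$ per player; this affine transformation preserves both Nash and coarse correlated equilibria, and hence disjointness.

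Second, I would invoke the zero-sum collapse (\Cref{fact:zerosum}): in any two-player constant-sum game, the marginals of any $\epsilon$-CCE are themselves $\epsilon$-NE strategies. Hence, if some $\vmu$ were simultaneously an $\epsilon$-CCE of two games $(\mat{R}^S, \mat{C}^S)$ and $(\mat{R}^{S'}, \mat{C}^{S'})$ from the family with $S \neq S'$, its common marginals $(\vx, \vy)$ would form an $\epsilon$-NE of both games, contradicting the disjointness established in the first step. This delivers \Cref{item:enum-hard}. The auxiliary property \Cref{item:default} is essentially free of charge: take $\vxdef$ (resp.\ $\vydef$) to be the maxmin/security strategy of Player $x$ (resp.\ Player $y$) in $(\mat{R}^S, \mat{C}^S)$; the minimax theorem combined with the value $1/2$ of the constant-sum game yields $\langle \vxdef, \mat{R}^S \vy \rangle \geq 1/2$ and $\langle \vx, \mat{C}^S \vydef \rangle \geq 1/2$ uniformly over $\vx, \vy \in \Delta^n$.

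The principal technical obstacle is that the hard NE family of \citet{Daskalakis09:Oblivious} is not literally zero-sum as stated---it only becomes so after accounting for a small number of strictly dominated rows/columns, as previewed in the technical overview. To handle this, I would adjoin polynomially few strictly dominated actions to symmetrize the payoffs into true constant-sum form; these actions necessarily receive negligible mass under any $\epsilon$-CCE (they can be deviated away from with large utility gain), so the zero-sum collapse still applies and disjointness survives, at the cost of at most a constant-factor loss in $\epsilon$. A secondary subtlety is tracking the approximation parameter through the collapse: the reduction from $\epsilon$-CCE to $\epsilon$-NE of the marginals is lossless in the exact case and loses only an additive $O(\epsilon)$ in the approximate case, so---after possibly shrinking the constant by a fixed factor---the conclusion goes through.
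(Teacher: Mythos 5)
Your high-level strategy matches the paper's: take the hard NE family of \citet{Daskalakis09:Oblivious} / \citet{Althofer94:Sparse}, exploit the (almost) constant-sum structure via \Cref{fact:zerosum} to reduce the CCE claim to an NE claim, and account for the dominated actions that break exact constant-sumness. These are exactly the ingredients the paper uses to prove~\Cref{theorem:enumhard-low}. However, there is a real gap in the concrete parts.

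First, your description of the hard family as ``random $\pm 1$ payoff matrices'' with an anti-concentration plus union-bound argument over sparse-support strategies does not match, and likely does not work. The construction the paper uses (following \citet{Daskalakis09:Oblivious}) is explicit: set $n = \binom{\ell}{\ell/2}$ and index games by subsets $S \subseteq [n]$ of size $\ell$, with columns indexed by the size-$\ell/2$ subsets of $S$. The crucial property is \emph{not} merely that $\eps$-NE sets of distinct games are disjoint; it is the stronger quantitative fact (\Cref{lemma:WNE}, \Cref{lemma:near-uni}) that the Player-$x$ marginal of any $\eps$-CCE of $(\mat{R}^S, \mat{C}^S)$ is $O(\eps)$-close to $\vu(S)$ in $\ell_1$, combined with the counting lemma (\Cref{lemma:game-collection}) that a subfamily of size $n^{\Omega(\log n)}$ of subsets has pairwise $\ell_1$-far uniform distributions $\vu(S)$. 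A generic random $\pm1$ zero-sum family produces optimal strategies that are all close to uniform over $[n]$, so you would not get the required separation; the subset structure is doing essential work here. Your proposal never articulates the ``marginal concentrates on $\vu(S)$'' property, which is the crux.

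Second, the handling of~\Cref{item:default} via maxmin/security strategies presupposes that the full $n \times n$ game is constant-sum with value $1/2$, but it is not: the payoff sum is $1$ on rows in $S$ and $0$ on dominated rows outside $S$. The paper instead exhibits explicit default strategies: $\vxdef = \vu(S)$ (which secures exactly $1/2$ for Player $x$ against every column by the balanced $(\ell, \ell/2)$-design) and $\vydef = \vu$, the uniform distribution on $[n]$ (which secures at least $1/2$ for Player $y$ against every row). Relatedly, your plan to ``adjoin'' actions to symmetrize into constant-sum form is backwards---the dominated rows are already present in the construction and are the obstruction; the correct move, which the paper carries out in~\Cref{lemma:near-uni}, is to show they receive only $O(\eps)$ mass, renormalize onto $S$, and then apply \Cref{fact:zerosum} to the induced $1$-sum subgame $(\mat{R}^S_{S,:}, \mat{C}^S_{S,:})$, followed by \Cref{lemma:WNE}.
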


In this context, we let $\ell$ be an even integer and define $n \defeq \binom{\ell}{\ell/2}$. For each subset $S \subseteq [n]$ with cardinality $\ell$, we construct an $n \times n$ game $(\mat{R}^S, \mat{C}^S)$ indexed by $S$. Before we describe the game, we note that the number of subsets of $S$ with cardinality $\ell/2$ is exactly $n$; the strategy of Player $y$ can be thought of as indexing each of those subsets. Under column $j$, corresponding to $S_j \subseteq S$ with $|S_j| = \ell/2$, the utilities of $(\mat{R}^S, \mat{C}^S)$ are defined as follows.

\begin{itemize}
    \item For each $i \notin S$, Player $x$ gets utility $-1$ and Player $y$ gets utility $1$; 
    \item for each $i \in S \cap S_j$, Player $x$ gets utility $1$ and Player $y$ gets utility $0$; and 
    \item otherwise, Player $x$ gets utility $0$ and Player $y$ gets utility $1$.
\end{itemize}

To clarify this construction, in~\Cref{figure:dask} we give an example of $\mat{R}^S$ when $n = 6$, $\ell = 4$, and $S = \{1, 2, 3, 4 \}$~\citep[Figure 1]{Daskalakis09:Oblivious}.

\begin{figure}[!ht]
    \small
    \centering
    \begin{tabular}{ccccccc}
        & \{1, 2\} & \{1, 3\} & \{1, 4\} & \{2, 3\} & \{2, 4\} & \{3, 4\} \\
        1 & 1 & 1 & 1 & 0 & 0 & 0 \\
        2 & 1 & 0 & 0 & 1 & 1 & 0 \\
        3 & 0 & 1 & 0 & 1 & 0 & 1 \\
        4 & 0 & 0 & 1 & 0 & 1 & 1 \\
        \cellcolor{lightgray}5 & \cellcolor{lightgray}-1 & \cellcolor{lightgray}-1 & \cellcolor{lightgray}-1 & \cellcolor{lightgray}-1 & \cellcolor{lightgray}-1 & \cellcolor{lightgray}-1 \\
        \cellcolor{lightgray}6 & \cellcolor{lightgray}-1 & \cellcolor{lightgray}-1 & \cellcolor{lightgray}-1 & \cellcolor{lightgray}-1 & \cellcolor{lightgray}-1 & \cellcolor{lightgray}-1 \\
    \end{tabular}
    \caption{An example of $\mat{R}^S$ when $n = 6$, $\ell = 4$, and $S = \{1, 2, 3, 4\}$. Entries highlighted in gray correspond to strictly dominated actions for Player $x$.}
    \label{figure:dask}
\end{figure}

In particular, we see that actions for Player $x$ outside of $S$ are strictly dominated by ones in $S$. In an approximate CCE, it can be shown that Player $x$ will not assign a significant probability mass to such actions. Conditioning on $i \in S$, we see that the induced game is $1$-sum. This is quite crucial for our purposes as CCE and Nash equilibria behave similarly in such games, in the following precise sense.

\begin{fact}
    \label{fact:zerosum}
    Suppose that a two-player game $(\mat{R}, \mat{C})$ is constant-sum. If $\vmu = \frac{1}{T} \sum_{t=1}^T \vx^{(t)} \otimes \vy^{(t)}$ is an $\epsilon$-CCE of $(\mat{R}, \mat{C})$, then the pair $(\Bar{\vx}^{(T)}, \bar{\vy}^{(T)})$ is a $2 \epsilon$-Nash equilibrium of $(\mat{R}, \mat{C})$, where $\Bar{\vx}^{(T)} = \frac{1}{T} \sum_{t=1}^T \vx^{(t)}$ and $\Bar{\vy}^{(T)} = \frac{1}{T} \sum_{t=1}^T \vy^{(t)}$.
\end{fact}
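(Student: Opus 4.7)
The plan is to reduce the CCE conditions of $\vmu$ to inequalities solely involving Player $x$'s payoff matrix $\mat{R}$, using the constant-sum property $\mat{C} = c\,\mat{1}_{n\times n} - \mat{R}$, where $c \in \R$ is the constant of the game. Let $V \defeq \langle \bar{\vx}^{(T)}, \mat{R} \bar{\vy}^{(T)} \rangle$ denote Player $x$'s utility under the product of marginals, and $U \defeq \frac{1}{T}\sum_{t=1}^T \langle \vx^{(t)}, \mat{R} \vy^{(t)} \rangle$ denote Player $x$'s expected utility under $\vmu$. The goal is to sandwich $V$ between two quantities that differ by at most $2\epsilon$.

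First I would expand the CCE gap for Player $x$ with respect to pure external deviations $i' \in [n]$, which after averaging over $t$ yields
\[
U \;\geq\; \max_{i' \in [n]} \langle \vec{e}_{i'}, \mat{R} \bar{\vy}^{(T)} \rangle \;-\; \epsilon.
\]
Next, I would perform the same step for Player $y$, but immediately rewrite every occurrence of $\mat{C}$ using $\mat{C} = c\,\mat{1}_{n\times n} - \mat{R}$. Since $\vx^{(t)}, \vy^{(t)}, \vec{e}_{j'} \in \Delta^n$, the constant $c$ cancels from both sides, leaving
\[
U \;\leq\; \min_{j' \in [n]} \langle \bar{\vx}^{(T)}, \mat{R} \vec{e}_{j'} \rangle \;+\; \epsilon.
\]

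Combining the two displays gives a $2\epsilon$-wide band around $U$ containing $\min_{j'} \langle \bar{\vx}^{(T)}, \mat{R} \vec{e}_{j'} \rangle$ and $\max_{i'} \langle \vec{e}_{i'}, \mat{R} \bar{\vy}^{(T)} \rangle$. The elementary sandwich
\[
\min_{j'} \langle \bar{\vx}^{(T)}, \mat{R} \vec{e}_{j'} \rangle \;\leq\; V \;\leq\; \max_{i'} \langle \vec{e}_{i'}, \mat{R} \bar{\vy}^{(T)} \rangle,
\]
valid because $V$ is a convex combination of either family of extreme points, then yields $V \geq \max_{i'} \langle \vec{e}_{i'}, \mat{R} \bar{\vy}^{(T)} \rangle - 2\epsilon$, which is exactly Player $x$'s $2\epsilon$-Nash condition. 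For Player $y$, I would use the constant-sum identity $\langle \bar{\vx}^{(T)}, \mat{C} \bar{\vy}^{(T)} \rangle = c - V$ and $\max_{j'} \langle \bar{\vx}^{(T)}, \mat{C} \vec{e}_{j'} \rangle = c - \min_{j'} \langle \bar{\vx}^{(T)}, \mat{R} \vec{e}_{j'} \rangle$, so that the upper bound $V \leq \min_{j'} \langle \bar{\vx}^{(T)}, \mat{R} \vec{e}_{j'} \rangle + 2\epsilon$ translates directly to Player $y$'s $2\epsilon$-Nash condition.

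There is no real obstacle here; the only subtlety is being careful that the CCE gap is measured against $U$ rather than against $V$, which is why one factor of $\epsilon$ is ``spent'' on each side of the sandwich and the final slack is $2\epsilon$ rather than $\epsilon$. The constant-sum hypothesis is used in exactly one place: to rewrite Player $y$'s deviation benefit in terms of $\mat{R}$ so that both CCE inequalities bracket the same scalar $U$.
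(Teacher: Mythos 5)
Your proof is correct and takes essentially the same approach as the paper: both combine the two $\epsilon$-CCE constraints and use the constant-sum structure to turn Player $y$'s deviation bound into a bound in terms of $\mat{R}$, so that the two constraints squeeze a common scalar by $2\epsilon$. The paper finishes by noting the sum of the two (nonnegative) Nash gaps is at most $2\epsilon$, while you finish via the minimax sandwich $\min_{j'} \langle \bar{\vx}^{(T)}, \mat{R} \vec{e}_{j'} \rangle \leq V \leq \max_{i'} \langle \vec{e}_{i'}, \mat{R} \bar{\vy}^{(T)} \rangle$; these are algebraically equivalent ways to split the same combined inequality.
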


This fact follows readily from the definitions, as we explain below.

\begin{proof}[Proof of~\Cref{fact:zerosum}]
    We can assume that $(\mat{R}, \mat{C})$ is zero-sum since one can always shift the utilities without altering the set of equilibria. Then, there is $(\vx, \vy) \in \Delta^n \times \Delta^n$ such that
    \begin{align*}
        0 = \E_{(i,j) \sim \vmu} [ \mat{R}_{i, j} + \mat{C}_{i, j}] &\geq \langle \vx, \mat{R} \bar{\vy}^{(T)} \rangle + \langle \bar{\vx}^{(T)}, \mat{C} \vy \rangle - 2 \epsilon \\
        &\geq \max_{\vx \in \Delta^n} \langle \vx, \mat{R} \bar{\vy}^{(T)} \rangle - \langle \bar{\vx}^{(T)}, \mat{R} \bar{\vy}^{(T)} \rangle + \max_{\vy \in \Delta^n} \langle \bar{\vx}^{(T)}, \mat{C} \vy \rangle - \langle \bar{\vx}^{(T)}, \mat{C} \bar{\vy}^{(T)} \rangle - 2 \epsilon,
    \end{align*}
    where we used the $\epsilon$-CCE constraint for both players in order to deviate to $(\vx, \vy)$. The last inequality implies that $(\bar{\vx}^{(T)}, \bar{\vy}^{(T)})$ is a $2\epsilon$-Nash equilibrium of $(\mat{R}, \mat{C})$, as claimed.
\end{proof}

We can thus follow a similar approach to~\citet{Daskalakis09:Oblivious} in order to show that the marginal of any CCE of $(\mat{R}^S, \mat{C}^S)$ for Player $x$ must be close to the uniform distribution over $S$. Below, we use the notation $\vu(S) \in \Delta^n$ for the uniform distribution over $S \subseteq [n]$; the domain of $\vu(S)$ will be clear from the context, and it is thus left unspecified.

\begin{lemma}
    \label{lemma:near-uni}
    If $\vmu = \frac{1}{T} \sum_{t=1}^T \vx^{(t)} \otimes \vy^{(t)}$ is an $\epsilon$-CCE of $(\mat{R}^S, \mat{C}^S)$ for a sufficiently small $\epsilon = \Theta(1)$, then
    \[
        \left\| \frac{1}{T} \sum_{t=1}^T \vx^{(t)} - \vu(S) \right\|_1 \leq O(\epsilon).
    \]
\end{lemma}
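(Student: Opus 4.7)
My plan is to exploit the structure of $(\mat{R}^S, \mat{C}^S)$: rows $i \notin S$ give Player $x$ utility $-1$ against every column, while the subgame restricted to rows in $S$ is constant-sum (since $\mat{R}^S_{i,j} + \mat{C}^S_{i,j} = 1$ for every $i \in S$ and $j$) with unique maximin strategy $\vu(S)$ for Player $x$, achieving value $1/2$. All bounds below use only the external deviations required by the CCE definition. Throughout, write $\bar{\vx}^{(T)} \defeq \tfrac{1}{T}\sum_t \vx^{(t)}$, $p \defeq \sum_{i \in S} \bar{\vx}^{(T)}_i$, and $q_i \defeq \sum_j \bar{\vy}^{(T)}_j \bone[i \in S_j]$ for $i \in S$.

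\textbf{Step 1 (mass outside $S$ is $O(\epsilon)$).} Because each $i \in S$ lies in exactly $\binom{\ell-1}{\ell/2-1} = \tfrac{1}{2}\binom{\ell}{\ell/2}$ of the columns' subsets, $\tfrac{1}{|S|}\sum_{i \in S} q_i = 1/2$, so $\max_{i \in S} q_i \geq 1/2$. Player $x$'s value against $\bar{\vy}^{(T)}$ is $U - (1-p)$, where $U \defeq \sum_{i \in S} \bar{\vx}^{(T)}_i q_i$. Applying the CCE inequality to the deviation to $i^\star \in \arg\max_{i \in S} q_i$ gives $q_{i^\star} \leq U - (1-p) + \epsilon$, and combined with $U \leq p\, q_{i^\star}$ this rearranges to $(1-p)(1 + q_{i^\star}) \leq \epsilon$, hence $1 - p \leq 2\epsilon/3$.

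\textbf{Step 2 (minimum subset sum $\geq 1/2 - O(\epsilon)$).} Applying the same CCE inequality separately to each deviation $i \in S$ and averaging over $i$ yields $U \geq 1/2 + (1-p) - \epsilon$. For Player $y$, the utility against pure column $j$ is $1 - \sum_{i \in S_j} \bar{\vx}^{(T)}_i$ while its current utility is $1 - U$, so the CCE inequality gives $\sum_{i \in S_j} \bar{\vx}^{(T)}_i \geq U - \epsilon$ for every $j$. Combining, $\min_j \sum_{i \in S_j} \bar{\vx}^{(T)}_i \geq 1/2 - 2\epsilon$.

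\textbf{Step 3 (combinatorial lemma and conclusion).} Normalize $\tilde{\vx} \defeq \bar{\vx}^{(T)}|_S / p$ to a distribution on $S$, and set $A \defeq \tfrac{1}{2}\|\tilde{\vx} - \vu(S)\|_1$. I claim the purely combinatorial inequality
\[
\min_{T \subseteq S,\ |T| = \ell/2} \sum_{i \in T} \tilde{\vx}_i \;\leq\; \tfrac{1}{2} - \tfrac{A}{2}.
\]
To prove it, take $T$ to be the $\ell/2$ coordinates on which $\tilde{\vx}$ is smallest, and split on whether the number of indices with $\tilde{\vx}_i \leq 1/\ell$ is at least $\ell/2$ or not. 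In either case a simple averaging argument shows that the ``deficit'' $\sum_{i \in T}(1/\ell - \tilde{\vx}_i)_+$ captures at least $A/2$ of the total deficit $A$. Combined with Step 2, this gives $p/2 - pA/2 \geq 1/2 - 2\epsilon$, which together with $1 - p \leq 2\epsilon/3$ yields $A = O(\epsilon)$. The conclusion follows from the triangle-inequality bound $\|\bar{\vx}^{(T)} - \vu(S)\|_1 \leq 2(1-p) + p\|\tilde{\vx} - \vu(S)\|_1 = O(\epsilon)$.

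The main obstacle is Step 3. A naive argument based on swapping one coordinate between $\ell/2$-subsets only yields $\max_i y_i - \min_i y_i = O(\epsilon)$ (with $y_i = \tilde{\vx}_i - 1/\ell$), which leads to the far weaker bound $\|y\|_1 = O(\epsilon \ell)$; since eventually $\ell = \Theta(\log n)$, this is insufficient for the desired $O(\epsilon)$ guarantee. The sharpening requires committing to a single well-chosen subset so that the entire $\ell_1$-deficit is extracted at once via averaging, rather than accumulating local increments across many subsets.
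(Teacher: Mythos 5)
Your overall plan is a genuinely different route from the paper's (which passes to the constant-sum subgame on $S\times[n]$, invokes~\Cref{fact:zerosum} to convert the CCE into an approximate Nash equilibrium, and then cites the Daskalakis--Papadimitriou bound in~\Cref{lemma:WNE}). Your version argues directly from the CCE inequalities and replaces the cited lemma by the combinatorial claim in Step~3, which is correct and would make the argument more self-contained.

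However, Step~1 as written has a real gap. The CCE constraint compares the \emph{correlated} payoff $\E_{(i,j)\sim\vmu}[\mat{R}^S_{i,j}] = -(1-p) + V$, where $V \defeq \frac{1}{T}\sum_t\sum_{i\in S}\sum_j \vx^{(t)}_i \vy^{(t)}_j\,\bone[i\in S_j]$, against the deviation payoff $q_{i^\star}$. You substitute the product-of-marginals quantity $U - (1-p)$ with $U = \sum_{i\in S}\bar{\vx}^{(T)}_i q_i$, which is \emph{not} the correlated payoff unless $\vmu$ happens to be a product distribution. The inequality $U \le p\,q_{i^\star}$ is true for $U$ but there is no analogous bound $V \le p\,q_{i^\star}$ (positive correlation between the $\vx^{(t)}$ and $\vy^{(t)}$ can make $V$ much larger than $U$), so the rearrangement $(1-p)(1+q_{i^\star})\le\epsilon$ does not go through. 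This is exactly the subtlety that distinguishes CCE from Nash, and it is the reason the paper works via~\Cref{fact:zerosum} rather than treating $\bar{\vx}^{(T)}\otimes\bar{\vy}^{(T)}$ directly.

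Fortunately the gap is repairable within your own framework, so the alternative proof survives. Replace $U$ by $V$ everywhere. Step~2 then reads: averaging the deviations $i\in S$ for Player~$x$ gives $V \ge \tfrac12 + (1-p) - \epsilon$; the Player~$y$ deviation to column $j$ gives $\sum_{i\in S_j}\bar{\vx}^{(T)}_i \ge V - \epsilon \ge \tfrac12 + (1-p) - 2\epsilon$ for every $j$. Now Step~1's conclusion follows by averaging the last inequality over all $j$ (each $i\in S$ lies in exactly half the $S_j$), which yields $p/2 \ge \tfrac12 + (1-p) - 2\epsilon$ and hence $1-p \le 4\epsilon/3$; the unsupported bound $V\le p\,q_{i^\star}$ is never needed. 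The combinatorial claim in Step~3 is correct: with $T$ the $\ell/2$ smallest coordinates, if at least $\ell/2$ coordinates satisfy $\tilde{\vx}_i \le 1/\ell$ then every deficit in $T$ dominates every deficit in $\bar T$ and $\sum_{i\in T}(1/\ell-\tilde{\vx}_i) \ge A/2$; otherwise $T$ contains the entire deficit $A$ while every excess inside $T$ is dominated by every excess in $\bar T$, so $\sum_{i\in T}g_i \le A/2$ and again $\sum_{i\in T}(1/\ell-\tilde{\vx}_i) \ge A/2$. Plugging $\min_j\sum_{i\in S_j}\tilde{\vx}_i = \tfrac{1}{p}\min_j\sum_{i\in S_j}\bar{\vx}^{(T)}_i \ge \tfrac{1}{p}(\tfrac12-2\epsilon)$ into the combinatorial bound gives $A\le 4\epsilon$, and the triangle inequality finishes as you describe. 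So: your proof strategy works and is a nice, self-contained alternative, but you must be careful to phrase all CCE inequalities in terms of the correlated expectation $V$, not $U$, and to derive the mass bound on $[n]\setminus S$ from the corrected Step~2 rather than the $U\le p\,q_{i^\star}$ shortcut.
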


We will rely on the following characterization, which can be extracted from~\citep[Lemma 2]{Daskalakis09:Oblivious}. We caution that~\citet{Daskalakis09:Oblivious} work with the stronger notion of \emph{well-supported} Nash equilibria, but the claim below follows readily from their argument.

\begin{lemma}[\citep{Daskalakis09:Oblivious}]
    \label{lemma:WNE}
    If $(\vx, \vy)$ is an $\epsilon$-NE of $(\mat{R}_{S, :}^S, \mat{C}_{S, :}^S)$ for $\epsilon < 1$, then $\| \vx - \vu(S) \|_1 \leq 8 \epsilon$.
\end{lemma}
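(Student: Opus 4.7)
}
The plan is to exhibit one beneficial deviation for each player in the restricted game $(\mat{R}_{S,:}^S, \mat{C}_{S,:}^S)$ whose magnitude scales linearly with $\|\vx - \vu(S)\|_1$, and then to combine the resulting pair of $\epsilon$-NE inequalities. The first step is to unpack the structure of the restricted game: once we keep only the rows indexed by $S$, for every row $i \in S$ and every column $j$ (indexing some $S_j \subseteq S$ with $|S_j| = \ell/2$), the two players' payoffs sum to exactly $1$. So the restricted game is constant-sum, Player $x$'s utility under $(\vx,\vy)$ equals $\sum_j \vy_j \sum_{i \in S_j}\vx_i$, and Player $y$'s utility is $1$ minus that. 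Let $\vd \defeq \vx - \vu(S)$ and $\tau \defeq \|\vd\|_1$, so $\sum_{i\in S}\vd_i = 0$; the goal becomes $\tau \leq 8\epsilon$.

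Next, I would write down the two deviations. For Player $x$, the deviation to $\vu(S)$ yields $\sum_j \vy_j(|S_j|/\ell) = 1/2$ against any $\vy$, because $|S_j|/|S| = 1/2$ for every column; Player $x$'s $\epsilon$-NE constraint therefore gives $\langle \vx,\mat{R}_{S,:}^S\vy\rangle \geq \tfrac12 - \epsilon$. For Player $y$, let $S^\star \subseteq S$ be a size-$\ell/2$ subset picking out the $\ell/2$ smallest coordinates of $\vx$; the column indexed by $S^\star$ is a valid action and yields Player $y$'s utility $1 - \sum_{i \in S^\star}\vx_i = \tfrac12 - \sum_{i\in S^\star}\vd_i$.

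The crux of the argument is the purely combinatorial claim
\[
  \sum_{i\in S^\star}\vd_i \;=\; \sum_{k=1}^{\ell/2} \vd_{(k)} \;\leq\; -\tau/4,
\]
where $\vd_{(1)} \leq \dots \leq \vd_{(\ell)}$ is the sorted sequence; this is the step I expect to be the main obstacle to write cleanly. I would prove it by splitting on the number $n$ of strictly negative coordinates of $\vd$, noting that the positives sum to $\tau/2$ and the negatives sum to $-\tau/2$. If $n \geq \ell/2$, the bottom half consists entirely of negative entries; being the $\ell/2$ largest in magnitude among $n$ negatives totalling $-\tau/2$, their sum is at most $-(\ell/(2n))(\tau/2) \leq -\tau/4$. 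If $n < \ell/2$, the bottom half contains all negatives (contributing $-\tau/2$) together with the $\ell/2 - n$ smallest positives, whose sum is at most $\tfrac{\ell/2 - n}{\ell - n}\cdot \tfrac{\tau}{2} \leq \tau/4$ by the ``smallest $k$ out of $m$ is at most the $(k/m)$-fraction of the total'' pigeonhole bound (and any zero entries are absorbed into $n$ on the favorable side).

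Finally, combining the two inequalities gives the conclusion in one line. Using constant-sum, Player $y$'s $\epsilon$-NE constraint and the deviation to the column $S^\star$ yield
\[
  1 - \langle \vx, \mat{R}_{S,:}^S\vy\rangle \;\geq\; \tfrac12 + \tau/4 - \epsilon,
\]
i.e., $\langle\vx, \mat{R}_{S,:}^S\vy\rangle \leq \tfrac12 - \tau/4 + \epsilon$. Chained with Player $x$'s inequality $\langle\vx, \mat{R}_{S,:}^S\vy\rangle \geq \tfrac12 - \epsilon$, this forces $\tau/4 \leq 2\epsilon$, i.e.\ $\|\vx - \vu(S)\|_1 \leq 8\epsilon$, as claimed.
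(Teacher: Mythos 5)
Your proof is correct, and it fills a gap the paper deliberately leaves open: the paper cites~\citep[Lemma 2]{Daskalakis09:Oblivious} without reproducing an argument, remarking only that the claim ``follows readily'' from the (well-supported NE) version there. Your self-contained derivation is sound. The structural observations check out: after restricting to rows in $S$, every payoff pair sums to $1$, so the game is $1$-sum and Player $y$'s utility equals $1 - \langle\vx, \mat{R}^S_{S,:}\vy\rangle$; every size-$\ell/2$ subset of $S$ is a legal column; and the uniform strategy over $S$ gives Player $x$ exactly $1/2$ against any $\vy$ because $|S_j \cap S|/|S| = 1/2$ for every column. The combinatorial core — that the $\ell/2$ smallest coordinates of $\vd = \vx - \vu(S)$ have sum at most $-\tau/4$ where $\tau = \|\vd\|_1$ — is the right bound and your case split is valid: when the number $m_-$ of strictly negative entries is at least $\ell/2$, the ``average of the smallest $k$ out of $m$ is at most the overall average'' inequality gives $\sum_{k\leq \ell/2}\vd_{(k)} \leq -\ell\tau/(4m_-) \leq -\tau/4$ since $m_- \leq \ell$; otherwise the bottom half absorbs all negatives (total $-\tau/2$) plus the smallest $\ell/2 - m_-$ of the $\ell - m_-$ non-negatives, whose sum is at most $\frac{\ell/2-m_-}{\ell-m_-}\cdot\frac{\tau}{2} \leq \tau/4$. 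Feeding these two deviations into the pair of $\epsilon$-NE inequalities and using constant-sum yields $\tau/4 \leq 2\epsilon$, i.e.\ $\tau \leq 8\epsilon$, matching the stated constant. Two cosmetic suggestions: avoid reusing the symbol $n$ (which in the paper denotes $\binom{\ell}{\ell/2}$) for the count of negative entries, and state the bound on the ``smallest positives'' as a bound on the smallest \emph{non-negatives} so the treatment of zero entries is unambiguous — your parenthetical remark about absorbing zeros into $n$ conflicts slightly with having defined $n$ as the count of \emph{strictly} negative entries. Neither affects correctness.
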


Above, $\mat{R}^S_{S, :}$ denotes the submatrix of $\mat{R}^S$ containing all columns of $\mat{R}^S$ and the rows corresponding to $S$. We are now ready to prove~\Cref{lemma:near-uni}.

\begin{proof}[Proof of~\Cref{lemma:near-uni}]
    First, by construction of $(\mat{R}^S, \mat{C}^S)$, we have
    \begin{align}
        \frac{1}{T} \sum_{t=1}^T \langle \vx^{(t)}, \mat{C}_S \vy^{(t)} \rangle &= \frac{1}{T} \sum_{t=1}^T \sum_{i \notin S} \vx^{(t)}_i + \frac{1}{T} \sum_{t=1}^T \langle \vx^{(t)}_S, \mat{C}^{S}_{S, :} \vy^{(t)} \rangle \notag \\ &\geq \frac{1}{T} \sum_{t=1}^T \sum_{i \notin S} \vx_i^{(t)} + \frac{1}{2 T} \sum_{t=1}^T \sum_{i \in S} \vx_i^{(t)} - \epsilon,\label{align:col-bound}
    \end{align}
    where the inequality above follows by switching to the uniform strategy over $[n]$ for Player $y$ in the definition of CCE. Thus, since $(\mat{R}^{S}_{S, :}, \mat{C}^{S}_{S, :})$ is a $1$-sum game, \eqref{align:col-bound} can be expressed as
    \[
        \frac{1}{T} \sum_{t=1}^T \langle \vx^{(t)}_S, \mat{R}^{S}_{S, :} \vy^{(t)} \rangle = 1 - \frac{1}{T} \sum_{t=1}^T \langle \vx^{(t)}_S, \mat{C}^{S}_{S, :} \vy^{(t)} \rangle \leq 1 - \frac{1}{2T} \sum_{t=1}^T \sum_{i \in S} \vx_i^{(t)} + \epsilon.
    \]
    So,
    \begin{align*}
        \frac{1}{T} \sum_{t=1}^T \langle \vx^{(t)}, \mat{R}^S \vy^{(t)} \rangle &= - \frac{1}{T} \sum_{t=1}^T \sum_{i \notin S} \vx_i^{(t)}  + \frac{1}{T} \sum_{t=1}^T \langle \vx^{(t)}_S, \mat{R}^{S}_{S, :} \vy^{(t)} \rangle \\
        &\leq - \frac{1}{2T} \sum_{t=1}^T \sum_{i \notin S} \vx_i^{(t)} + \frac{1}{2} + \epsilon.
    \end{align*}
    Moreover, we observe that $\langle \vu(S), \mat{R}^S \vy \rangle \geq \frac{1}{2}$ for any $\vy \in \Delta^n$, implying that $\frac{1}{T} \sum_{t=1}^T \langle \vx^{(t)}, \mat{R}^S \vy^{(t)} \rangle \geq \frac{1}{2} - \epsilon$ (via the CCE constraint with respect to Player $x$). We conclude that 
    \[
        \frac{1}{T} \sum_{t=1}^T \sum_{i \notin S} \vx_i^{(t)} \leq 4 \epsilon.
    \]
    We will now argue that $\frac{1}{T} \sum_{t=1}^T \hvx_S^{(t)} \otimes \vy^{(t)}$ is a $17 \epsilon$-CCE of $(\mat{R}^S_{S, :}, \mat{C}^S_{S, :})$, where $\hvx^{(t)} \in \Delta^n$ is such that $\hvx_S^{(t)} = \frac{1}{\| \vx^{(t)}_S \|_1} \vx^{(t)}_S$ for all $t \in [T]$. Indeed, we have
    \begin{align*}
        \left\| \frac{1}{T} \sum_{t=1}^T \vx^{(t)} - \frac{1}{T} \sum_{t=1}^T \hvx^{(t)} \right\|_1  &\leq 4 \epsilon + \frac{1}{T} \left\| \sum_{t=1}^T \sum_{i \in S} \left( \frac{1}{\| \vx_S^{(t)} \|_1} \vx_i^{(t)} - \vx_i^{(t)} \right) \right\|_1 \leq 8 \epsilon \\
        &\leq 4 \epsilon + \frac{1}{T} \sum_{t=1}^T \left( \frac{1}{\| \vx_S^{(t)} \|_1} - 1 \right) \sum_{i \in S} \vx_i^{(t)} \\
        &= 4 \epsilon + \frac{1}{T} \sum_{t=1}^T \sum_{i \notin S} \vx_i^{(t)} \\
        &= 8 \epsilon,
    \end{align*}
    and the claim follows since the entries of $\mat{R}^S$ and $\mat{C}^S$ are in $[-1, 1]$. Now, given that the game $(\mat{R}^S_{S, :}, \mat{C}^S_{S, :})$ is $1$-sum, \Cref{fact:zerosum} implies that $( \frac{1}{T} \sum_{t=1}^T \hvx_S^{(t)}, \frac{1}{T} \sum_{t=1}^T \vy^{(t)} )$ is a $17 \epsilon$-Nash equilibrium of $(\mat{R}^S_{S, :}, \mat{C}^S_{S, :})$. Together with~\Cref{lemma:WNE}, we have
    \[
        \left\| \frac{1}{T} \sum_{t=1}^T \vx^{(t)} - \vu(S) \right\|_1 \leq 8 \epsilon + \left\| \frac{1}{T} \sum_{t=1}^T \hvx_S^{(t)} - \vu(S) \right\|_1 = O(\epsilon).
    \]
\end{proof}

In conjunction with the previous lemma, the next observation will give rise to a large family of $\enumhard$ games.

\begin{lemma}[\citep{Daskalakis09:Oblivious}]
    \label{lemma:game-collection}
    For $\mathcal{S} \defeq \{S \subseteq [n], |S| = \ell \}$, it holds that $|\mathcal{S}| = \Omega(n^{ 0.8 \log_2 n })$. Further, there is a sufficiently small $\epsilon = \Theta(1)$ and a subset $\mathcal{S}' \subseteq \mathcal{S}$ of size $\Omega( n^{(0.8 - 2 \epsilon) \log_2 n } )$ such that for any $S, S' \in \mathcal{S}'$, $\|\vu(S) - \vu(S') \|_1 \geq \epsilon$.
\end{lemma}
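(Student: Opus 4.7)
The lemma has two pieces: a cardinality bound on $\mathcal{S}$, and a packing statement for total variation distance between uniform distributions on $\ell$-element subsets. Throughout, I would freely use that $n = \binom{\ell}{\ell/2}$ and the standard central binomial estimate $2^\ell/(\ell+1) \leq n \leq 2^\ell$, which yields $\ell = (1+o(1))\log_2 n$.

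For the cardinality, I would apply the crude lower bound $|\mathcal{S}| = \binom{n}{\ell} \geq (n/\ell)^\ell$. Taking logarithms,
\[
    \log_2 |\mathcal{S}| \geq \ell\bigl(\log_2 n - \log_2 \ell\bigr) = (1 - o(1))(\log_2 n)^2,
\]
so that $|\mathcal{S}| \geq n^{0.8 \log_2 n}$ once $n$ is sufficiently large.

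For the packing claim, the plan is to recast the distance combinatorially. Since $|S| = |S'| = \ell$, a coordinate-wise expansion gives
\[
    \|\vu(S) - \vu(S')\|_1 = \frac{2}{\ell}\bigl(\ell - |S \cap S'|\bigr),
\]
so the constraint $\|\vu(S) - \vu(S')\|_1 \geq \epsilon$ is equivalent to $|S \cap S'| \leq (1 - \epsilon/2)\ell$. It therefore suffices to extract a subfamily $\mathcal{S}' \subseteq \mathcal{S}$ in which every two distinct sets have intersection at most $(1 - \epsilon/2)\ell$. I would obtain this by a greedy (Tur\'an-type) packing argument: pick any $S \in \mathcal{S}$, add it to $\mathcal{S}'$, delete all of its bad neighbours from $\mathcal{S}$, and iterate. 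For any fixed $S$, the number of bad neighbours is, after reparametrizing by $j \defeq \ell - |S \cap S'|$,
\[
    M \defeq \sum_{j = 0}^{\lfloor \epsilon \ell/2 \rfloor} \binom{\ell}{j}\binom{n - \ell}{j} \leq \bigl(\epsilon\ell/2 + 1\bigr)\bigl(e^2 \ell n/(\epsilon\ell/2)^2\bigr)^{\epsilon \ell / 2},
\]
using $\binom{a}{j} \leq (ea/j)^j$ on both factors and observing that the summand is monotone increasing on the range in question, so the sum is dominated (up to the number of terms) by its largest one. Taking logarithms and plugging in $\ell = (1+o(1))\log_2 n$ yields $\log_2 M \leq (\epsilon + o(1))(\log_2 n)^2$, which I would loosen to $2\epsilon(\log_2 n)^2$ for large enough $n$. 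Greedy elimination then gives $|\mathcal{S}'| \geq |\mathcal{S}|/M = \Omega(n^{(0.8 - 2\epsilon)\log_2 n})$, as claimed.

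The only obstacle I anticipate is bookkeeping: keeping the asymptotic comparison between $\ell$ and $\log_2 n$ consistent throughout and choosing the binomial bounds loose enough to remain transparent yet tight enough to reproduce the stated exponent $(0.8 - 2\epsilon)$. Beyond that, no new ideas are required beyond standard sphere-packing in the Johnson-style metric on equal-size subsets.
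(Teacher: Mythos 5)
Your argument is correct and follows the standard route for this kind of packing lemma: identify the combinatorial meaning of $\|\vu(S)-\vu(S')\|_1$ as a Johnson distance, bound the number of subsets within a given intersection threshold, and greedily extract an $\epsilon$-separated subfamily. The paper itself provides no proof for this statement — it is cited directly from Daskalakis, Papadimitriou (2009), whose Lemma 3 is established by exactly this kind of sphere-packing/greedy-deletion argument — so you have essentially reconstructed their proof. Two small points of bookkeeping worth noting but not gaps: (i) your reduction $\|\vu(S)-\vu(S')\|_1 = \tfrac{2}{\ell}(\ell-|S\cap S'|)$ and the monotonicity of $\binom{\ell}{j}\binom{n-\ell}{j}$ on $j \le \epsilon\ell/2$ both rely on $n \approx 2^\ell \gg \ell^2$, which you correctly have from $n = \binom{\ell}{\ell/2}$; (ii) your bound $\log_2 M \le (\epsilon+o(1))(\log_2 n)^2$ is actually loose by a factor of $2$ (the leading term of your estimate is $\tfrac{\epsilon}{2}(\log_2 n)^2$), but since the target exponent is $(0.8-2\epsilon)\log_2 n$ and the crude $|\mathcal{S}|$ bound already gives an exponent close to $\log_2 n$, the slack is harmless and in fact matches the looseness built into the lemma's statement.
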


We are now ready to prove~\Cref{theorem:enumhard-low}.

\begin{proof}[Proof of~\Cref{theorem:enumhard-low}]
    We let $\mathcal{S}'$ be as in~\Cref{lemma:game-collection}, and we define $\mathcal{F} \defeq \{ (\mat{R}^S, \mat{C}^S) \}_{S \in \mathcal{S'}}$. \Cref{item:default} follows immediately by taking $\vxdef = \vu(S)$ and $\vydef = \vu$ (where $\vu$ denotes the uniform distribution over $[n]$). We turn to~\Cref{item:enum-hard}. The claim on the cardinality of $\mathcal{F}$ follows from~\Cref{lemma:game-collection}. Now, for the sake of contradiction, suppose that there are two games $(\mat{R}^S, \mat{C}^S)$ and $(\mat{R}^{S'}, \mat{C}^{S'})$, with $S, S' \in \mathcal{S}'$, so that an $\epsilon$-CCE of the former coincides with some $\epsilon$-CCE of the latter. Then, their marginals for Player $x$ would also be the same, and~\Cref{lemma:near-uni} together with the triangle inequality imply that $\|\vu(S) - \vu(S') \|_1 = O(\epsilon)$. When $\epsilon = \Theta(1)$ is sufficiently small, this leads to a contradiction in light of~\Cref{lemma:game-collection}.
\end{proof}

\subsection{High precision}

In this regime, the starting point is the $m$-dimensional generalization of the matching pennies game, namely $(\mat{I}_m, - \mat{I}_{m})$. This game has the following property.

\begin{lemma}
    \label{lemma:gen-match}
    If $\vmu = \frac{1}{T} \sum_{t=1}^T \vx^{(t)} \otimes \vy^{(t)} \in \Delta([m] \times [m]) $ is an $\epsilon$-CCE of $(\mat{I}_m, -\mat{I}_m)$, then
    \[
        \left\| \frac{1}{T} \sum_{t=1}^T \vx^{(t)} - \vu \right\|_1 \leq 2 m^2 \epsilon.
    \]
\end{lemma}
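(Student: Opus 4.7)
The plan is to leverage the CCE constraints with the simplest possible deviations—pure strategies $\vec{e}_i$—to pin down both the marginal of Player $x$ and the overall coincidence probability under $\vmu$. Let me introduce the shorthand $\bar{\vx} \defeq \frac{1}{T} \sum_{t=1}^T \vx^{(t)}$, $\bar{\vy} \defeq \frac{1}{T} \sum_{t=1}^T \vy^{(t)}$, and
\[
    p \defeq \E_{(i,j) \sim \vmu}[\bone[i=j]] = \frac{1}{T} \sum_{t=1}^T \langle \vx^{(t)}, \vy^{(t)} \rangle,
\]
which is the expected payoff of Player $x$ (and the negative of Player $y$'s payoff) under $\vmu$.

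First, I would apply the $\epsilon$-CCE constraint for Player $x$ against the external deviation that always plays action $i$: this deviation gives expected utility $\langle \vec{e}_i, \mat{I}_m \bar{\vy} \rangle = \bar{\vy}_i$, so $p \geq \bar{\vy}_i - \epsilon$ for every $i \in [m]$, i.e., $\max_i \bar{\vy}_i \leq p + \epsilon$. Symmetrically, the $\epsilon$-CCE constraint for Player $y$ deviating to always playing $j$ yields $-p \geq -\bar{\vx}_j - \epsilon$, i.e., $\bar{\vx}_j \geq p - \epsilon$ for every $j \in [m]$.

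Next, I would pin down $p$ using the trivial inequalities $\max_i \bar{\vy}_i \geq 1/m$ and $\min_j \bar{\vx}_j \leq 1/m$, which together with the above yield $p \in [1/m - \epsilon, \, 1/m + \epsilon]$. Plugging this back into the lower bound on $\bar{\vx}_j$ gives
\[
    \bar{\vx}_j \geq \tfrac{1}{m} - 2\epsilon \quad \text{for every } j \in [m].
\]
Since $\sum_j \bar{\vx}_j = 1$, each coordinate satisfies $\bar{\vx}_j = 1 - \sum_{j' \neq j} \bar{\vx}_{j'} \leq 1 - (m-1)(1/m - 2\epsilon) = 1/m + 2(m-1)\epsilon$. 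Hence $|\bar{\vx}_j - 1/m| \leq 2(m-1)\epsilon$ for every $j$, and summing over $j$ yields $\|\bar{\vx} - \vu\|_1 \leq 2m(m-1)\epsilon \leq 2m^2 \epsilon$, as desired.

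There is no real obstacle here: the argument is essentially a two-line computation once one writes down the CCE inequalities for pure-strategy deviations. The only thing to keep in mind is that the bound $2m^2 \epsilon$ in the lemma statement is slack (one actually gets $2m(m-1)\epsilon$, and in fact the symmetric argument on $\bar{\vy}$ would give the same bound); this looseness likely reflects a uniform notation used elsewhere in the paper rather than a delicate step in the proof.
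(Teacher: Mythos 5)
Your proof is correct, and it takes a genuinely different route from the paper's. The paper first proves the $\ell_\infty$ bound $\|\vx - \vu\|_\infty \leq 2\epsilon m$ directly for an $\epsilon$-Nash equilibrium $(\vx,\vy)$ of $(\mat{I}_m, -\mat{I}_m)$ (via a case analysis on whether $\vx_i$ is too large or too small, contradicting the $\epsilon$-NE property against the value $1/m$), and then invokes \Cref{fact:zerosum} to convert the $\epsilon$-CCE $\vmu$ into a $2\epsilon$-NE of the averaged strategies $(\bar{\vx},\bar{\vy})$. Your argument instead works directly with the CCE constraints: you use the two one-sided inequalities coming from pure deviations of each player to sandwich the coincidence probability $p$ around $1/m$, and then a single normalization argument turns the coordinatewise lower bound $\bar{\vx}_j \geq 1/m - 2\epsilon$ into a two-sided bound. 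This is a bit more elementary (no appeal to \Cref{fact:zerosum} and no separate NE lemma), and it is also slightly tighter: the paper's chain of reductions, taken at face value, loses the factor of $2$ from \Cref{fact:zerosum} and thus yields $4 m^2 \epsilon$ rather than the stated $2 m^2 \epsilon$, whereas your direct argument produces $2m(m-1)\epsilon \leq 2m^2\epsilon$ as claimed. What the paper's route buys is consistency with its broader strategy---reducing every CCE question in a constant-sum subgame to the NE case via the same fact---and an intermediate NE statement that is reused implicitly elsewhere. Both proofs are valid; yours is arguably the cleaner one for this specific lemma.
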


\begin{proof}
We will first show that in an $\epsilon$-Nash equilibrium $(\vx, \vy)$ of that game, it must be the case that $\| \vx - \vu \|_1 \leq 2 m^2 \epsilon$. Indeed, it is not hard to see that the value (that is, the utility of Player $x$ in equilibrium) of $(\mat{I}_m, - \mat{I}_m)$ is $1/m$. For the sake of contradiction, suppose that there is some coordinate $\vx_i$ such that $\vx_i > 1/m + 2 \epsilon m$. Then, Player $y$ could play the uniform strategy over $[m] \setminus \{i\}$, which secures a cost of $ (1 - \vx_i) / (m - 1) < 1/m - 2 \epsilon m / (m-1) < 1/m - 2 \epsilon$ for that player. Further, if $\vx_i < 1/m - 2 \epsilon m$, Player $y$ can secure a cost of $1/m - 2 \epsilon m < 1/m - 2 \epsilon$ by playing $i$. So, under $(\vx, \vy)$, the utility of Player $x$ is less than $1/m - \epsilon$, which is a contradiction since $(\vx, \vy)$ is an $\epsilon$-Nash equilibrium and that player can also secure $1/m$ by switching to the uniform strategy over $[m]$. This argument establishes that $\|\vx - \vu \|_\infty \leq 2 \epsilon m$, in turn implying that $\| \vx - \vu \|_1 \leq 2 \epsilon m^2$. Finally, since $(\mat{I}_m, - \mat{I}_{m})$ is a zero-sum game, the claim follows from~\Cref{fact:zerosum}.
\end{proof}

In what follows, we will shift the utilities of that game so that it becomes $1$-sum and its value is $1/2$. This can be achieved by adding $1/2 - 1/m$ to the utilities of Player $x$ and $1/2 + 1/m$ to the utilities of Player $y$, while preserving the set of equilibria (and the corresponding approximation factors). Now, to construct $\enumhard$ games in this regime, we proceed as follows. For a set $S \subseteq [n]$ of cardinality $m$, we construct an $n \times n$ game $(\mat{R}^S, \mat{C}^S)$ such that
\begin{itemize}
    \item if $i \notin S$ and $j \notin S$ both players receive utility $-1$;
    \item if $i \in S$ and $j \notin S$ Player $x$ receives utility $1/2$ while Player $y$ receives utility $-1$;
    \item if $i \notin S$ and $j \in S$ Player $x$ receives utility $-1$ while Player $y$ receives utility $1/2$; and
    \item if $i \in S$ and $j \in S$ the utilities are defined per the generalized matching pennies game (under the shift in the utilities described above).
\end{itemize}

\begin{lemma}
    \label{lemma:enum-smalleps}
    If $\vmu = \frac{1}{T} \sum_{t=1}^T \vx^{(t)} \otimes \vy^{(t)} \in \Delta([n] \times [n])$ is an $\epsilon$-CCE of $(\mat{R}^S, \mat{C}^S)$, then
    \[
        \left\| \frac{1}{T} \sum_{t=1}^T \vx^{(t)} - \vu(S) \right\|_1 \leq O(n^2 \epsilon).
    \]
\end{lemma}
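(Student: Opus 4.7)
The plan is to mirror the structure of Lemma~\ref{lemma:near-uni}: reduce to the matching-pennies sub-game on $S \times S$ and then invoke Lemma~\ref{lemma:gen-match} (via Fact~\ref{fact:zerosum} where needed). Throughout, let $p := \frac{1}{T}\sum_t \sum_{i\notin S} \vx_i^{(t)}$, $q := \frac{1}{T}\sum_t \sum_{j\notin S} \vy_j^{(t)}$, and $\bar{A} := \frac{1}{T}\sum_t \sum_{i \in S} \vx_i^{(t)} \vy_i^{(t)}$ denote the off-$S$ marginal masses and the expected matching probability on $S$, respectively.

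The first step is to show that $p, q = O(\epsilon)$. Splitting cases on whether the opponent's action lies in $S$, one verifies that $\vu(S)$ secures utility $\tfrac{1}{2}$ for either player regardless of what the opponent does, so the CCE property forces $\frac{1}{T}\sum_t \langle \vx^{(t)}, \mat{R}^S \vy^{(t)}\rangle$ and $\frac{1}{T}\sum_t \langle \vx^{(t)}, \mat{C}^S \vy^{(t)}\rangle$ to both be at least $\tfrac{1}{2} - \epsilon$. A direct upper bound on these two quantities, expanded according to the four-region structure of the game, yields only $p, q = O(1/m + \epsilon)$. To sharpen the argument I would additionally exploit pure-strategy deviations $\vec{e}_{j_0}$ for Player $y$ and $\vec{e}_{i_0}$ for Player $x$ with $i_0, j_0 \in S$; a short computation produces $\frac{1}{T}\sum_t \vx_{j_0}^{(t)} \geq \bar{A} - \epsilon$ and $\frac{1}{T}\sum_t \vy_{i_0}^{(t)} \leq \bar{A} + \epsilon$ for every $i_0, j_0 \in S$. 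Summing the first inequality over $j_0 \in S$ and using that the $\frac{1}{T}\sum_t \vx_{j_0}^{(t)}$ sum to $1 - p$ pins $\bar{A} \leq (1-p)/m + \epsilon$; symmetrically, $\bar{A} \geq (1-q)/m - \epsilon$. Substituting these tight bounds back into the exact expressions for the two averaged utilities collapses the leading $1/m$ term, leaving $p, q = O(\epsilon)$.

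With the marginal mass outside $S$ now controlled, the second step is to pass to the conditional distribution $\hvmu := \frac{1}{T}\sum_t \hvx^{(t)} \otimes \hvy^{(t)}$, where $\hvx^{(t)}, \hvy^{(t)} \in \Delta(S)$ are obtained by normalizing the restrictions of $\vx^{(t)}, \vy^{(t)}$ to $S$ (falling back to any fixed distribution whenever the restriction vanishes). For any $\vx' \in \Delta(S)$, the original CCE constraint combined with $p, q = O(\epsilon)$ implies that the analogous regret under $\hvmu$ in the shifted matching-pennies block $(\mat{R}^S_{S,S}, \mat{C}^S_{S,S})$ is at most $O(\epsilon)$---the mismatch between $\vx^{(t)}_S$ and $(1-p^{(t)})\hvx^{(t)}$ (and analogously for $\vy$) contributes only an additive $O(p) + O(q) = O(\epsilon)$. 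Since the constant shift preserves both CCE and their approximation parameters, $\hvmu$ is an $O(\epsilon)$-CCE of $(\mat{I}_m, -\mat{I}_m)$ with $m = |S| \leq n$, and Lemma~\ref{lemma:gen-match} delivers $\|\frac{1}{T}\sum_t \hvx^{(t)} - \vu(S)\|_1 = O(m^2 \epsilon) = O(n^2 \epsilon)$.

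The final step is a triangle inequality: $\|\frac{1}{T}\sum_t \vx^{(t)} - \frac{1}{T}\sum_t \hvx^{(t)}\|_1 = O(p) = O(\epsilon)$ (from the off-$S$ mass and the renormalization), so altogether $\|\frac{1}{T}\sum_t \vx^{(t)} - \vu(S)\|_1 = O(n^2 \epsilon)$. The main obstacle is the first step: in Lemma~\ref{lemma:near-uni} the off-$S$ rows are strictly dominated, making $p, q = O(\epsilon)$ immediate, while here the game penalizes off-$S$ play only softly, so one has to carefully chain CCE constraints from several different deviations (both the mixed strategy $\vu(S)$ and pure strategies in $S$) to break past the naive $O(1/m)$ barrier.
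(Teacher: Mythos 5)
Your proposal is correct and arrives at the stated bound, but the detour through pure-strategy deviations to pin $\bar A := \frac{1}{T}\sum_t \sum_{i\in S}\vx_i^{(t)}\vy_i^{(t)}$ is unnecessary, and the claim that the $\vu(S)$ deviations ``yield only $p,q = O(1/m + \epsilon)$'' is inaccurate. The key observation in the paper's proof, which you appear to have missed, is that the $S\times S$ block is $1$-sum: $\mat{R}^S_{S,S} + \mat{C}^S_{S,S} = \mat{1}_{m\times m}$. Consequently, when you add the two CCE constraints (Player $x$ deviating to $\vu(S)$, and Player $y$ deviating to $\vu(S)$), the diagonal contribution $\bar A$---which enters Player $x$'s utility through $+\mat{I}_m$ and Player $y$'s through $-\mat{I}_m$---cancels exactly, and the block contributes exactly $(1-p_t)(1-q_t) \le 1$. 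A single line of algebra then gives $-p - q + \tfrac{1}{2}\overline{(1-p)q + p(1-q)} \ge -2\epsilon$, hence $p + q \le 4\epsilon$, with no $1/m$ barrier at all. Your chain $\frac{1}{T}\sum_t \vx_{j_0}^{(t)} \ge \bar A - \epsilon$, $\frac{1}{T}\sum_t \vy_{i_0}^{(t)} \le \bar A + \epsilon$, and the substitution back into the averaged utilities does close (I checked the computations), so the argument is sound; it is simply a more laborious route forced by not exploiting the cancellation that the $1$-sum structure provides. The remaining two steps---normalizing to $\Delta(S)$, invoking \Cref{lemma:gen-match}, and finishing with a triangle inequality controlled by $p = O(\epsilon)$---match the paper's argument.
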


\begin{proof}
    By construction of $(\mat{R}^S, \mat{C}^S)$, we have
    \begin{align}
        \frac{1}{T} \sum_{t=1}^T \langle \vx^{(t)}, \mat{R}^S \vy^{(t)} \rangle &= - \frac{1}{T} \sum_{t=1}^T \sum_{i \notin S} \vx_i^{(t)} + \frac{1}{T} \sum_{t=1}^T \langle \vx_S^{(t)}, \mat{R}^S_{S, S} \vy_S^{(t)} \rangle + \frac{1}{2 T} \sum_{t=1}^T \sum_{i \in S} \vx_i^{(t)} \sum_{j \notin S} \vy_j^{(t)} \notag \\
        &\geq \frac{1}{T} \sum_{t=1}^T \langle \vu(S), \mat{R}^S \vy^{(t)} \rangle - \epsilon \notag \\
        &\geq \frac{1}{2} - \epsilon,\label{align:RSbound}
    \end{align}
    where the inequality follows from the definition of $\epsilon$-CCE, and we also used the fact that strategy $\vu(S)$ secures a utility of at least $\frac{1}{2}$ for Player $x$. Similarly,
    \begin{align}
        \frac{1}{T} \sum_{t=1}^T \langle \vx^{(t)}, \mat{C}^S \vy^{(t)} \rangle &= - \frac{1}{T} \sum_{t=1}^T \sum_{j \notin S} \vy_j^{(t)} + \frac{1}{T} \sum_{t=1}^T \langle \vx_S^{(t)}, \mat{C}^S_{S, S} \vy_S^{(t)} \rangle + \frac{1}{2T} \sum_{t=1}^T \sum_{i \notin S} \vx_i^{(t)} \sum_{j \in S} \vy_j^{(t)} \notag \\
        &\geq \frac{1}{T} \sum_{t=1}^T \langle \vx^{(t)}, \mat{C}^S \vu(S) \rangle - \epsilon \notag \\
        &\geq \frac{1}{2} - \epsilon. \label{align:CSbound}
    \end{align}
    Combining~\eqref{align:RSbound} and~\eqref{align:CSbound}, and using the fact that game $(\mat{R}^S_{S, S}, \mat{C}^S_{S, S})$ is $1$-sum, we have
    \[
        - \frac{1}{T} \sum_{t=1}^T \sum_{i \notin S} \vx_i^{(t)} - \frac{1}{T} \sum_{t=1}^T \sum_{j \notin S} \vy_j^{(t)} + \frac{1}{2 T} \sum_{t=1}^T \sum_{i \in S} \vx_i^{(t)} \sum_{j \notin S} \vy_j^{(t)} + \frac{1}{2 T} \sum_{t=1}^T \sum_{i \notin S} \vx_i^{(t)} \sum_{j \in S} \vy_j^{(t)} \geq - 2 \epsilon.
    \]
    Thus,
    \[
        \frac{1}{T} \sum_{t=1}^T \sum_{i \notin S} \vx_i^{(t)} + \frac{1}{T} \sum_{t=1}^T \sum_{j \notin S} \vy_j^{(t)} \leq 4 \epsilon.
    \]
    In turn, this implies that
    \[
        \left\| \frac{1}{T} \sum_{t=1}^T \vx^{(t)} - \frac{1}{T} \sum_{t=1}^T \hvx^{(t)} \right\|_1 \leq 8 \epsilon
    \]
    and 
    \[
        \left\| \frac{1}{T} \sum_{t=1}^T \vy^{(t)} - \frac{1}{T} \sum_{t=1}^T \hvy^{(t)} \right\|_1 \leq 8 \epsilon.
    \]
    Since the utilities of $(\mat{R}^S, \mat{C}^S)$ are in $[-1, 1]$, it follows that $\frac{1}{T} \sum_{t=1}^T \hvx_S^{(t)} \otimes \hvy_S^{(t)}$ is a $33 \epsilon$-CCE of $(\mat{R}^S_{S, S}, \mat{C}^S_{S, S})$, where $\hvx^{(t)}, \hvy^{(t)} \in \Delta^n$ are such that $\hvx_S^{(t)} = \frac{1}{\| \vx^{(t)}_S \|_1} \vx^{(t)}_S$ and $\hvy_S^{(t)} = \frac{1}{\| \vy^{(t)}_S \|_1} \vy^{(t)}_S$ for all $t \in [T]$. Together with~\Cref{lemma:gen-match} this gives the claim.
\end{proof}

Armed with this lemma, we consider the family $\mathcal{F} = \{ (\mat{R}^S, \mat{C}^S) \}_{S \subseteq [n], S \neq \emptyset}$. The second desideratum of~\Cref{def:enumhard} (\Cref{item:default}) is easily seen to be satisfied. For the first one, we note that $\|\vu(S) - \vu(S') \|_1 \geq \frac{1}{n}$ when $S \neq S'$. Therefore, taking $\epsilon$ in~\Cref{lemma:enum-smalleps} to be sufficiently small completes the construction in the high-precision regime as well.

\begin{theorem}[$\enumhard$ games for high precision]
    \label{theorem:enumhard-high}
    There is a family $\mathcal{F}$ of $2^n - 1$ two-player $n \times n$ games $\{ (\mat{R}^S, \mat{C}^S) \}_{S \in \mathcal{S}}$ and a sufficiently small $\epsilon = \Theta(1/n^3)$ such that 
    \begin{enumerate}
        \item $\mathcal{F}$ is $\epsilon$-$\enumhard$, and
        \item for any $S \in \mathcal{S}$, there is a strategy $\vxdef \in \Delta^n$ such that $\langle \vxdef, \mat{R}^S \vy \rangle \geq \frac{1}{2}$ for any $\vy \in \Delta^n$, and a strategy $\vydef \in \Delta^n$ such that $\langle \vx, \mat{C}^S \vydef \rangle \geq \frac{1}{2}$ for any $\vx \in \Delta^n$.
    \end{enumerate}
\end{theorem}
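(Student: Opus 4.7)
The plan is to verify that the family $\mathcal{F} = \{(\mat{R}^S, \mat{C}^S) : \emptyset \neq S \subseteq [n]\}$ already constructed immediately before the theorem satisfies both required properties, with $\epsilon$ taken to be $c/n^3$ for a sufficiently small constant $c > 0$. The cardinality $|\mathcal{F}| = 2^n - 1$ is immediate from enumeration.

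First I would handle the ``default strategy'' property (Item 2) by simply taking $\vxdef = \vydef = \vu(S)$, the uniform distribution over $S$. By construction of $\mat{R}^S$: for each column $j \notin S$, all rows indexed by $S$ give Player $x$ utility $1/2$, so $\vu(S)$ secures $1/2$; for $j \in S$, the submatrix $\mat{R}^S_{S,S}$ is precisely the shifted generalized matching pennies game whose row-player value (after the shift by $1/2 - 1/m$) equals $1/2$, so that $\vu(S)$ secures $\langle \vu(S), \mat{R}^S_{S,S} \vec{e}_j \rangle \geq 1/2$. Combining gives $\langle \vxdef, \mat{R}^S \vy \rangle \geq 1/2$ for every $\vy \in \Delta^n$, and the dual statement for Player $y$ is symmetric.

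For the EnumHard property (Item 1), I would argue by contradiction. Suppose $\vmu$ is simultaneously an $\epsilon$-CCE of two games $(\mat{R}^S, \mat{C}^S)$ and $(\mat{R}^{S'}, \mat{C}^{S'})$ for distinct non-empty $S, S' \subseteq [n]$. Writing $\vmu$ in any product-mixture representation $\vmu = \frac{1}{T} \sum_{t=1}^T \vx^{(t)} \otimes \vy^{(t)}$ (which always exists), its Player-$x$ marginal $\bar{\vx} \defeq \frac{1}{T}\sum_{t=1}^T \vx^{(t)}$ depends only on $\vmu$. Applying Lemma~\ref{lemma:enum-smalleps} to each game yields
\begin{equation*}
\|\bar{\vx} - \vu(S)\|_1 = O(n^2 \epsilon) \quad \text{and} \quad \|\bar{\vx} - \vu(S')\|_1 = O(n^2 \epsilon),
\end{equation*}
so by the triangle inequality $\|\vu(S) - \vu(S')\|_1 = O(n^2 \epsilon)$.

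The final step is the elementary separation bound that $\|\vu(S) - \vu(S')\|_1 \geq 2/n$ whenever $S \neq S'$ are non-empty subsets of $[n]$. I would verify this by a short case analysis: if $|S| = |S'|$, any element of the symmetric difference contributes at least $1/|S| \geq 1/n$ on each side, giving at least $2/n$; if $|S| \neq |S'|$, one can reduce via containment to the case $S \subseteq S'$, where the distance is $2(|S'| - |S|)/|S'| \geq 2/n$, and the non-containment case only adds contributions. Choosing $c$ small enough that $O(n^2 \cdot c/n^3) < 2/n$ yields the desired contradiction. The only step requiring mild care is this separation lemma for $\vu(S)$ versus $\vu(S')$ across all subset configurations, but it is a short combinatorial check rather than a substantial obstacle; the real content of the theorem is already carried by Lemma~\ref{lemma:enum-smalleps}.
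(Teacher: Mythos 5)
Your proposal is correct and matches the paper's (much terser) argument: same family $\mathcal{F}$ of all $2^n-1$ nonempty-subset games, Item~2 via $\vxdef = \vydef = \vu(S)$ (which secures exactly $1/2$ against any opponent strategy by the $1$-sum/value-$1/2$ structure), and Item~1 via Lemma~\ref{lemma:enum-smalleps} combined with the separation $\|\vu(S)-\vu(S')\|_1 = \Omega(1/n)$. Two small remarks: the separation bound can be obtained more cleanly without case analysis (pick $i^\star \in S \setminus S'$, note $\vu(S)_{i^\star} \geq 1/n$ and $\vu(S')_{i^\star} = 0$, and use that for probability vectors $\|p-q\|_1 = 2\sum_i (p_i - q_i)^+$), and the parenthetical ``which always exists'' is unnecessary---the $\vmu$ relevant to the $\enumhard$ definition are already presented as uniform $T$-sparse mixtures by the verification oracle, and in any case the conclusion of Lemma~\ref{lemma:enum-smalleps} only concerns the marginal of $\vmu$, which is representation-independent.
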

\section{Stitching games together under correlated equilibria}
\label{sec:stitching}

Given an $\soshard$ game $(\mat{R}, \mat{C})$ (\Cref{def:soshard}) and a game $(\mat{R}^S, \mat{C}^S)$ from a family of $\enumhard$ games (\Cref{def:enumhard}), both with utilities in $[-1, 1]$, this section shows how to combine the two in a single game with two key properties, to be defined in the sequel (\Cref{lemma:stitch-completeness,lemma:stitch-soundness}). The construction adapts the approach of~\citet{Kothari18:Sum}, and relies on the following $(2n) \times (2n)$ bimatrix game
\begin{equation}
    \label{eq:game-stitch}
    \mat{R}' =
    \begin{pmatrix}
        \mat{R} & - k \mat{1}_{n \times n} \\
        \delta \mat{1}_{n \times n} & \mat{R}^{S}
    \end{pmatrix} \quad
    \text{and} \quad
    \mat{C}' =
    \begin{pmatrix}
        \mat{C} & \delta \mat{1}_{n \times n} \\
        -k \mat{1}_{n \times n} & \mat{C}^S
    \end{pmatrix}.
\end{equation}
Here, we recall that $\mat{1}_{n \times n}$ is the all-ones $n \times n$ matrix, $\delta \in (0, \frac{1}{2})$ is associated with~\Cref{def:soshard}, and $k$ is a sufficiently large parameter. For the special case of Nash equilibria, it suffices to take $k$ to be an absolute constant, but in our more general setting we will take $k$ to be a function of other parameters as well. To translate our results in terms of additive approximations in a meaningful way, we will eventually have to normalize the range of $(\mat{R}', \mat{C}')$ to be in $[-1, 1]$.

The first key property of~\eqref{eq:game-stitch} is that a pseudo-CE with high welfare in $(\mat{R}, \mat{C})$ (as prescribed by~\Cref{def:soshard}) induces a pseudo-CE in $(\mat{R}', \mat{C}')$ which is the same invariantly from $(\mat{R}^S, \mat{C}^S)$---provided that the latter indeed comes from a family of $\enumhard$ games.

\begin{lemma}[Completeness]
    \label{lemma:stitch-completeness}
    There is a degree-$\Omega(d)$, uniform $T$-sparse pseudo-CE in $(\mat{R}', \mat{C}')$ for any $(\mat{R}^S, \mat{C}^S)$.
\end{lemma}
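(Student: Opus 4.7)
The plan is to pad $\tilmu$ with zeros: define the pseudo-distribution $\tilmu'$ for the stitched game by pushing $\tilmu$ forward along the $T$-fold linear map that sends each $\vx^{(t)} \in \Delta^n$ to $(\vx^{(t)}, \vec{0}_n) \in \Delta^{2n}$ and likewise each $\vy^{(t)}$ to $(\vy^{(t)}, \vec{0}_n)$. Concretely, for every polynomial $f$ in the $2T$ stitched-game variables,
\[
    \tilE_{\tilmu'}[f] \defeq \tilE_{\tilmu}\bigl[f\bigl((\vx^{(1)}, \vec{0}_n), \dots, (\vy^{(T)}, \vec{0}_n)\bigr)\bigr].
\]
Linearity of the pad transports the pseudo-distribution structure, so $\tilmu'$ has degree $\Omega(d)$; the simplex constraints and the ``each player receives utility at least $\delta$'' constraint transfer for free, since the pseudo-expected inner products only see the top-left $[n] \times [n]$ block of $(\mat{R}', \mat{C}')$, where these payoffs agree with $(\mat{R}, \mat{C})$.

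The substantive task is to verify the sparse CE inequalities~\eqref{eq:CE-prod} for every $\phi_x, \phi_y \in \Phi = \widehat{\Phi} \cup \Phi_{\textrm{int}}$ now interpreted on $[2n]$. I plan to split on whether, for each $i \in [n]$, the target $\phi_x(i)$ lies in $[n]$ or in $[n+1, 2n]$. When every such target lies in $[n]$, the stitched-game constraint pulls back verbatim to a CE constraint of $\tilmu$ against a genuine deviation in the original $\Phi$ (separately handling the sub-cases $i_0 > n$ and $i_0 \leq n$ of $\widehat{\Phi}$). When some $i \in [n]$ is mapped to $i' \in [n+1, 2n]$, the payoff at row $i'$ against any $\vy^{(t)}$ supported on $[n]$ equals the constant $\delta$ read off the $\delta \mat{1}_{n \times n}$ block of $\mat{R}'$, so after cancelling the identity-preserving portion of $\phi_x$ the inequality reduces to certifying, for every SoS polynomial $h$ of admissible degree,
\[
    \tilE_{\tilmu}\biggl[ h \cdot \frac{1}{T}\sum_{t=1}^T \sum_{i \in B} \vx^{(t)}_i \bigl( (\mat{R}\vy^{(t)})_i - \delta \bigr) \biggr] \geq 0,
\]
where $B = \{i \in [n] : \phi_x(i) > n\}$. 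The analysis for $\phi_y$ is symmetric, since the top-left blocks of $\mat{R}$ and $\mat{C}$ coincide and the $\delta \mat{1}_{n \times n}$ block sits in the mirrored position of $\mat{C}'$.

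To close the argument I would invoke the algebraic structure of $\tilmu$ inherited from~\Cref{def:tilmu}: substituting $\vx_i = \vz_i/k$, $\vy_j = \vz_j/k$, using the top-left block of $\mat{R}$ from~\eqref{eq:indep-setgame} and the value of $\delta$ arising in the underlying SoS-hardness reduction, a direct computation that leverages $\vz_i^2 = \vz_i$, $\sum_j \vz_j = k$, and $\vz_i \vz_j = 0$ for $\{i,j\} \in E$ collapses $\vx^{(t)}_i (\mat{R}\vy^{(t)})_i - \delta\,\vx^{(t)}_i$ exactly to $0$ modulo the IS equality constraints. Since this polynomial lies in the ideal generated by those equality constraints, its pseudo-expectation against any SoS polynomial of admissible degree vanishes, supplying the sought SoS certificate. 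The main obstacle I expect is not conceptual but the bookkeeping across the sub-cases of $\widehat{\Phi}$ (external-like moves with $i' > n$, separately for $i_0 < n$ and $i_0 \geq n$) and $\Phi_{\textrm{int}}$ (internal moves into the new action block), together with the parallel enumeration for Player $y$; in each one must confirm that the stitched-game CE inequality pulls back either to a genuine original-game CE constraint of $\tilmu$ or to the zero-polynomial identity just described.
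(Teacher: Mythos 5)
Your approach matches the paper's: pad $\tilmu$ by zeros to a pseudo-distribution $\tilde\nu$ on $\R^{2n}$, observe that (since $\tilde\nu$ is supported on strategies concentrated on $[n]$) any deviation into the new block faces the constant $\delta$ read off the $\delta\mat{1}_{n\times n}$ block, and reduce the remaining CE inequalities either to the original constraints of $\tilmu$ or to the vanishing of $\vx_i(\mat{R}\vy)_i - \delta\vx_i$ in the ideal generated by the independent-set constraints (which is what the paper alludes to via ``the property of $\tilmu$ in \eqref{eq:map2}'' and the appeal to the Nash-equilibrium argument of Kothari--Mehta). The case bookkeeping you flag is kept manageable precisely because $\tilmu$ is supported on diagonal tuples $\vx^{(1)} = \dots = \vx^{(T)}$, $\vy^{(1)} = \dots = \vy^{(T)}$, collapsing the sparse constraints to the $T=1$ case; your argument is correct and essentially the paper's.
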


Indeed, starting from the pseudo-distribution $\tilmu$ on $\R^n \times \dots \times \R^n \ni (\hvx^{(1)}, \dots, \hvx^{(T)}, \hvy^{(1)}, \dots, \hvy^{(T)})$ described in~\Cref{def:tilmu}, we can extend it to a pseudo-distribution $\tilde{\nu}$ on $\R^{2n} \times \dots \times \R^{2n} \ni (\vx^{(1)}, \dots, \vx^{(T)}, \vy^{(1)}, \dots, \vy^{(T)})$ such that the associated pseudo-expectation reads
\[
    f \mapsto \tilE_{\tilmu} \left[f\left((\hvx^{(1)}, \vec{0}_n), \dots, (\hvx^{(T)}, \vec{0}_n), (\hvy^{(1)}, \vec{0}_n), \dots, (\hvy^{(T)}, \vec{0}_n)\right)\right].
\]
It is a straightforward exercise to show that $\tilde{\nu}$ defined above is a degree-$\Omega(d)$ pseudo-CE in $(\mat{R}', \mat{C}')$ no matter the choice of $(\mat{R}^S, \mat{C}^S)$; by the property of $\tilmu$ in~\eqref{eq:map2}, this proof of completeness is essentially the same as for Nash equilibria~\citep{Kothari18:Sum}.

We next concentrate on the soundness of the reduction. Namely, we will prove that approximate uniform $T$-sparse CE in $(\mat{R}', \mat{C}')$ are basically approximate uniform $T$-sparse CE of $(\mat{R}^S, \mat{C}^S)$ with only a small degradation in the approximation quality. We will split the argument into a series of claims.

We first observe that (C)CE in $(\mat{R}', \mat{C}')$ have a particular structure by virtue of the $-k \mat{1}_{n \times n}$ term in the off-diagonal of $\mat{R}' + \mat{C}'$: each product distribution comprising that correlated distribution is either supported on the component corresponding to $(\mat{R}, \mat{C})$ or on $(\mat{R}^S, \mat{C}^S)$, modulo some probability mass that becomes negligible when $k \gg 1$. Below, we define $\sumx^{(t)} \defeq \sum_{i=1}^n \vx_i^{(t)}$ and $\sumy^{(t)} \defeq \sum_{j=1}^n \vy_j^{(t)}$.

\begin{lemma}
    \label{claim:off-diagonal}
    Let $\vmu = \frac{1}{T} \sum_{t=1}^T \vx^{(t)} \otimes \vy^{(t)}$ be a $\frac{1}{2}$-CCE of $(\mat{R}', \mat{C}')$ with $k \geq 12 T$. Then, for any $t \in [T]$,
    \begin{itemize}
        \item either $\sumx^{(t)} \leq 6 T/k$ and $\sumy^{(t)} \leq 6T/k$,
        \item or $\sumx^{(t)} \geq 1 - 6T/k$ and $\sumy^{(t)} \geq 1 - 6 T/k$.
    \end{itemize}
\end{lemma}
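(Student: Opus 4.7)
The plan is to leverage the large $-k$ penalty in the off-diagonal blocks of $\mat{R}'$ and $\mat{C}'$ to force both $\alpha_t(1-\beta_t)$ and $(1-\alpha_t)\beta_t$ to be $O(T/k)$ for every $t \in [T]$, where I write $\alpha_t \defeq \sumx^{(t)}$ and $\beta_t \defeq \sumy^{(t)}$ for brevity. Once such per-$t$ bounds are in hand, the dichotomy follows from a short case split using $k \geq 12T$.

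To obtain the off-diagonal bound for Player $x$, I would invoke \Cref{item:default} of the $\enumhard$ property to get a strategy $\vxdef \in \Delta^n$ with $\langle \vxdef, \mat{R}^S \vy \rangle \geq \tfrac{1}{2}$ for every $\vy \in \Delta^n$, and lift it to the deviation $\vx' \defeq (\vec{0}_n, \vxdef) \in \Delta^{2n}$. Expanding $\langle \vx', \mat{R}' \vy^{(t)} \rangle$ row-wise on the bottom block and applying positive homogeneity of $\vy \mapsto \langle \vxdef, \mat{R}^S \vy \rangle$ to the sub-distribution $\vy^{(t)}_{j > n}$ of total mass $1 - \beta_t$ gives $\langle \vx', \mat{R}' \vy^{(t)} \rangle \geq \delta \beta_t + \tfrac{1}{2}(1 - \beta_t)$. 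In parallel, $\|\mat{R}\|_\infty, \|\mat{R}^S\|_\infty \leq 1$ yields
\[
\langle \vx^{(t)}, \mat{R}' \vy^{(t)} \rangle \leq \alpha_t \beta_t - k \alpha_t(1 - \beta_t) + \delta (1 - \alpha_t) \beta_t + (1 - \alpha_t)(1 - \beta_t),
\]
the $-k \alpha_t(1-\beta_t)$ term being the key penalty. Plugging these into the $\tfrac{1}{2}$-CCE constraint for Player $x$ (valid for mixed deviations by convex combination of external-deviation constraints), substituting $\alpha_t \beta_t = \alpha_t - \alpha_t(1-\beta_t)$ in the resulting algebra, and using $0 \leq \bar\alpha \leq 1$ and $\bar\beta \geq 0$ should yield
\[
(k + 2 - \delta)\cdot \frac{1}{T}\sum_{t=1}^T \alpha_t (1 - \beta_t) \leq (1 - \delta)\bar\alpha + 1 - \tfrac{1}{2}\bar\beta \leq 2.
\]
Since the summands are nonnegative, this forces $\alpha_t(1 - \beta_t) \leq 2T/k$ for every $t$. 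A symmetric argument for Player $y$ using the default strategy $\vydef$ yields $(1 - \alpha_t)\beta_t \leq 2T/k$.

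To conclude the dichotomy, I use $k \geq 12T$ (so $6T/k \leq \tfrac{1}{2}$). If $\alpha_t > 6T/k$, then $1 - \beta_t \leq (2T/k)/\alpha_t < \tfrac{1}{3}$, so $\beta_t > \tfrac{2}{3}$; feeding $\beta_t > \tfrac{2}{3}$ back into both per-$t$ bounds gives $1 - \alpha_t \leq 3T/k$ and $1 - \beta_t \leq 3T/k$, so both coordinates lie in $[1 - 6T/k, 1]$. Conversely, if $\alpha_t \leq 6T/k$, then $\beta_t > 6T/k$ would force $1 - \alpha_t < \tfrac{1}{3}$ via $(1 - \alpha_t)\beta_t \leq 2T/k$, contradicting $\alpha_t \leq 6T/k \leq \tfrac{1}{2}$; hence $\beta_t \leq 6T/k$ too.

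The principal challenge is purely arithmetic in the first step: the CCE inequality has to be manipulated through several bookkeeping terms (the $\delta \mat{1}_{n \times n}$ blocks and the cross-terms $\alpha_t \beta_t$), and the critical maneuver is re-expressing $\alpha_t \beta_t$ so as to absorb a negative $-(2-\delta)\alpha_t(1-\beta_t)$ into the left-hand side---without this, the argument would yield only an $O(1)$ bound rather than the $O(T/k)$ bound needed for a per-$t$ conclusion. Nothing conceptually deeper appears to be required.
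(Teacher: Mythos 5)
Your proof is correct and follows essentially the same route as the paper: deviating to the default strategies $\vxdef$, $\vydef$ to exploit the $-k$ off-diagonal penalty, deducing a per-$t$ bound on off-diagonal mass, and then running a short case split with $k \geq 12T$. The only cosmetic difference is that you process the two players' CCE constraints separately (yielding $\alpha_t(1-\beta_t) \le 2T/k$ and $(1-\alpha_t)\beta_t \le 2T/k$ individually), whereas the paper adds them into a single social-welfare inequality bounding $\alpha_t(1-\beta_t) + (1-\alpha_t)\beta_t \le 3T/k$ directly.
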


\begin{proof}
    First, since $\vmu$ is a $\frac{1}{2}$-CCE of~\eqref{eq:game-stitch}, we have
    \begin{align}
        \sw(\vmu) = \frac{1}{T} \sum_{t=1}^T \langle \vx^{(t)}, (\mat{R}' + \mat{C}') \vy^{(t)} \rangle &\geq \frac{1}{T} \sum_{t=1}^T \langle (\vec{0}, \vxdef), \mat{R}' \vy^{(t)} \rangle + \frac{1}{T} \sum_{t=1}^T \langle \vx^{(t)}, \mat{C}' (\vec{0}, \vydef) \rangle - 1 \notag \\
        &\geq 2 \delta - 1,\label{align:sw-lb}
    \end{align}
    where we used the property of $\vxdef$ and $\vydef$ (per~\Cref{theorem:enumhard-high,theorem:enumhard-low}) and the fact that $\delta < \frac{1}{2}$. Further, having assumed that all entries in $(\mat{R}, \mat{C})$ and $(\mat{R}^S, \mat{C}^S)$ are in $[-1, 1]$, it follows that for any $t \in [T]$,
    \begin{equation}
        \label{eq:sw-ub}
        \sw(\vx^{(t)} \otimes \vy^{(t)}) \leq 2 - k \sumx^{(t)} (1 - \sumy^{(t)}) -k (1 - \sumx^{(t)}) \sumy^{(t)}.
    \end{equation}
    Combining~\eqref{align:sw-lb} and~\eqref{eq:sw-ub},
    \begin{align*}
        \frac{1}{T} \sum_{t=1}^T \left( \sumx^{(t)} (1 - \sumy^{(t)}) + (1 - \sumx^{(t)}) \sumy^{(t)} \right) \leq \frac{3}{k}.
    \end{align*}
    Thus, it follows that for any $t \in [T]$,
    \begin{equation}
        \label{eq:compl-bound}
        \sumx^{(t)} (1 - \sumy^{(t)}) + (1 - \sumx^{(t)}) \sumy^{(t)} \leq \frac{3T}{k}.
    \end{equation}
    For the sake of contradiction, let us suppose that $\sumx^{(t)} > \frac{3T}{k}$ and $\sumx^{(t)} < 1 - \frac{3T}{k}$. Then, $\sumx^{(t)} (1 - \sumy^{(t)}) + (1 - \sumx^{(t)}) \sumy^{(t)} > \frac{3T}{k} ( 1 - \sumy^{(t)} + \sumy^{(t)})$, contradicting~\eqref{eq:compl-bound}. Suppose now that $\sumx^{(t)} \leq \frac{3T}{k}$. We will show that the previous assertion implies that $\sumy^{(t)} \leq \frac{3T}{k}$. Indeed, by~\eqref{eq:compl-bound}, it follows that $ \frac{1}{2} \sumy^{(t)} \leq (1 - \sumx^{(t)}) \sumy^{(t)} \leq \frac{3T}{k}$, where we used the fact that $k \geq 6T$, which in turn implies that $\sumy^{(t)} \leq \frac{6T}{k}$. Finally, the case $\sumx^{(t)} \geq 1 - \frac{3T}{k}$ can be treated similarly, concluding the proof.
\end{proof}

Our next goal is to show that, in fact, the second case in the claim above can never arise. To do so, we first show that, in the contrary case, a CE in~\eqref{eq:game-stitch} gives rise to a CE in $(\mat{R}, \mat{C})$. It is worth noting that the argument below crucially hinges on $\Phi$ containing non-external deviations. The basic reason is that an external deviation in $(\mat{R}', \mat{C}')$ to some row of Player $x$ in $\mat{R}$ will incur a term proportional to $-k$ (assuming that $\setT \neq [T]$; see~\Cref{claim:induced-CE}); it thus follows that such deviations are non-profitable, and so it is unclear how to argue about deviations in $(\mat{R}, \mat{C})$ starting from a guarantee in $(\mat{R}', \mat{C}')$. One solution to this issue is to employ non-external deviations, as made precise below.

\begin{lemma}
    \label{claim:induced-CE}
    Let $\vmu = \frac{1}{T} \sum_{t=1}^T \vx^{(t)} \otimes \vy^{(t)}$ be an $\epsilon'$-CE of $(\mat{R}', \mat{C}')$ with $k \geq 12T$ and $\epsilon' \leq \frac{1}{2}$. Suppose further that $\setT \subseteq [T]$ is defined as $\setT \defeq \{ t \in [T] : \sumx^{(t)} \geq 1 - 6T/k \land \sumy^{(t)} \geq 1 - 6 T/k \}$. If $\setT \neq \emptyset$, then $\widehat{\vmu} = \frac{1}{|\setT|} \sum_{t \in \setT} \hvx^{(t)} \otimes \hvy^{(t)}$ is an $(\epsilon' T + 30 T^2/k)$-CE of $(\mat{R}, \mat{C})$, where $\hvx^{(t)}_{} = \nicefrac{\vx^{(t)}_{i \leq n}}{\sumx^{(t)}}$ and $\hvy_{}^{(t)} = \nicefrac{\vy^{(t)}_{j \leq n}}{\sumy^{(t)}}$.
\end{lemma}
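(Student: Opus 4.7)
The plan is to pull back the $\epsilon'$-CE guarantee from $(\mat{R}', \mat{C}')$ to $(\mat{R}, \mat{C})$ by lifting each deviation $\phi_x \in \Phi$ on $[n]$ to a deviation $\phi_x' \in \Phi$ on $[2n]$ that acts as $\phi_x$ on the top $n$ coordinates and as the identity on the bottom $n$, i.e.\ whose column-stochastic matrix representation is block-diagonal with $\mat{M}_x$ in the top-left and $\mat{I}_n$ in the bottom-right. A quick case check confirms $\phi_x' \in \Phi$: internal swaps on $[n]$ stay internal on $[2n]$, and threshold deviations with threshold $i_0 \leq n$ and target $i' \leq n$ remain threshold deviations on $[2n]$ with the same parameters. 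Player $y$'s deviations are handled symmetrically.

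The key algebraic identity to establish is
\[
\langle \phi_x'(\vx^{(t)}) - \vx^{(t)}, \mat{R}' \vy^{(t)} \rangle = \sumx^{(t)} \sumy^{(t)} \langle \phi_x(\hvx^{(t)}) - \hvx^{(t)}, \mat{R} \hvy^{(t)} \rangle.
\]
The vector $\phi_x'(\vx^{(t)}) - \vx^{(t)}$ is supported on the first $n$ coordinates and its entries sum to zero by column-stochasticity of $\mat{M}_x$, so its inner product with the constant vector $-k(1-\sumy^{(t)})\mat{1}_n$ arising from the $-k\mat{1}_{n \times n}$ off-diagonal block of $\mat{R}'$ vanishes. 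Only the $\mat{R}$ block survives, and the rescalings $\sumx^{(t)}, \sumy^{(t)}$ emerge by writing $\vx_{i \leq n}^{(t)} = \sumx^{(t)} \hvx^{(t)}$, $\vy_{j \leq n}^{(t)} = \sumy^{(t)} \hvy^{(t)}$ and using linearity of $\phi_x$. Averaging over $t$ and invoking the $\epsilon'$-CE property of $\vmu$ yields $\frac{1}{T} \sum_t a_t b_t \leq \epsilon'$, where $a_t \defeq \sumx^{(t)} \sumy^{(t)}$ and $b_t \defeq \langle \phi_x(\hvx^{(t)}) - \hvx^{(t)}, \mat{R} \hvy^{(t)} \rangle$.

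To finish, I split the sum according to whether $t \in \setT$ and invoke \Cref{claim:off-diagonal}, which applies since $\epsilon' \leq 1/2$ and $k \geq 12 T$. For $t \notin \setT$, both $\sumx^{(t)}$ and $\sumy^{(t)}$ are at most $6T/k$, so $a_t \leq 36 T^2/k^2$ and $|a_t b_t| \leq 72 T^2/k^2$ (using $|b_t| \leq 2$). For $t \in \setT$, $a_t \geq (1 - 6T/k)^2 \geq 1 - 12 T/k$, so $|b_t - a_t b_t| \leq 24 T/k$. Combining these bounds and using $k \geq 12 T$ to absorb $72 T^3/k^2 \leq 6 T^2/k$ yields $\sum_{t \in \setT} b_t \leq T \epsilon' + 30 T^2/k$, and dividing by $|\setT| \geq 1$ produces the stated CE gap for $\phi_x$; the argument for deviations $\phi_y$ of Player $y$ is symmetric.

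I expect the main obstacle to be the very first step—keeping the lifted deviation inside $\Phi$—and this is precisely the point flagged in the overview where the paper's inclusion of non-external deviations becomes essential. An external deviation in $(\mat{R}', \mat{C}')$ to a top row of $\mat{R}$ would incur a $-k$ penalty on every product component supported on the bottom block, which would render the $\epsilon'$-CE constraint useless for reasoning about $(\mat{R}, \mat{C})$. Once the lift is inside $\Phi$, the rest is linear-algebraic bookkeeping on the scalars $a_t$ that exploits the sharp bifurcation provided by \Cref{claim:off-diagonal}.
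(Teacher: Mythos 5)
Your proof is correct and takes essentially the same approach as the paper's. Both arguments lift each deviation $\phi_x$ on $[n]$ to a deviation on $[2n]$ that acts as $\phi_x$ on the top block and as the identity below, verify the lift stays inside $\Phi$, use the column-stochasticity of $\mat{M}_x$ to kill the $-k\mat{1}_{n\times n}$ contribution, and then split the sum over $t$ according to $\setT$ using \Cref{claim:off-diagonal} (bounding the complement by $O(T^2/k^2)$ per term and the $\setT$-terms by a $O(T/k)$ renormalization error). The only stylistic difference is that you track the exact multiplicative scalar $a_t = \sumx^{(t)}\sumy^{(t)}$ and bound $|b_t(1-a_t)|$, whereas the paper bounds the same quantity via $\ell_1$-distance triangle inequalities on $\vx^{(t)}_{i\leq n} - \hvx^{(t)}$; both yield the constants $72T^2/k^2$ and $24T/k$ and combine to the same $\epsilon' T + 30T^2/k$.
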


\begin{proof}
    By~\Cref{claim:off-diagonal}, we know that for $t \in [T] \setminus \setT$ it holds that $\sumx^{(t)} \leq 6 T/k$ and $\sumy^{(t)} \leq 6 T/k$. We assume that $\setT \neq \emptyset$. We consider any deviation function of Player $x$ captured by $\widehat{\phi}_x : [n] \to [n]$. We extend $\widehat{\phi}_x$ to $[2n] \to [2n]$ as follows.
    \[
        \phi_x(i) = 
        \begin{cases}
            \widehat{\phi}_x(i) & i \leq n,\\
            i & i \geq n+1.
        \end{cases}
    \]
    It is clear that $\phi_x$ indeed belongs to the set $\Phi$ of considered deviations (recall the definition in~\Cref{sec:prels}). Since $\vmu$ is assumed to be an $\epsilon$-CE of $(\mat{R}', \mat{C}')$, we have
    \begin{equation}
        \label{eq:def-CE-1}
        \frac{1}{T} \sum_{t=1}^T \langle \vx^{(t)}, \mat{R}' \vy^{(t)} \rangle \geq \frac{1}{T} \sum_{t=1}^T \langle \phi_x(\vx^{(t)}), \mat{R}' \vy^{(t)} \rangle - \epsilon',
    \end{equation}
where $\phi_x(\vx^{(t)}) = ( \widehat{\phi}_x(\vx_{i \leq n}^{(t)}), \vx^{(t)}_{i\geq n+1})$ (by definition of $\widehat{\phi}_x$). As a result,
\begin{align}
    \frac{1}{T} \sum_{t = 1}^T \langle \vx^{(t)} - \phi_x(\vx^{(t)}), \mat{R}' \vy^{(t)} \rangle &= \frac{1}{T} \sum_{t =1}^T \langle \vx_{i \leq n}^{(t)} - \widehat{\phi}_x(\vx_{i \leq n}^{(t)}), \mat{R} \vy^{(t)}_{j \leq n} - k \sum_{j=n+1}^{2n} \vy_{j}^{(t)} \vec{1}_n \rangle \notag \\
    &= \frac{1}{T} \sum_{t = 1}^T \langle \vx_{i \leq n}^{(t)} - \widehat{\phi}_x(\vx_{i \leq n}^{(t)}), \mat{R} \vy^{(t)}_{j \leq n} \rangle \label{align:phihat}
\end{align}
since $\vec{1}_n^\top \vx_{i \leq n}^{(t)} - \vec{1}_n^\top \widehat{\phi}_x(\vx_{i \leq n}^{(t)}) = 0$. We now treat products in $[T] \setminus \setT$. We have
\begin{align}
    \frac{1}{T} \sum_{t \notin \setT} \langle \vx_{i \leq n}^{(t)} - \widehat{\phi}_x(\vx_{i \leq n}^{(t)}), \mat{R} \vy^{(t)}_{i \leq n} \rangle &\leq \frac{1}{T} \sum_{t \notin \setT} \| \vx^{(t)}_{i \leq n} - \widehat{\phi}_x(\vx^{(t)}_{i \leq n}) \|_1 \|\vy^{(t)}_{i \leq n} \|_1
    \label{align:Rbound} \\
    &\leq \frac{2}{T} \sum_{t \notin \setT} \|\vx^{(t)}_{i \leq n} \|_1 \|\vy^{(t)}_{i \leq n} \|_1 \leq \frac{72 T^2}{k^2},\label{align:col-tri}
\end{align}
where~\eqref{align:Rbound} uses the assumption that the entries of $\mat{R}$ are in $[-1, 1]$, and~\eqref{align:col-tri} uses the triangle inequality together with $\| \widehat{\phi}_x(\vx^{(t)}_{i \leq n}) \|_1 = \| \vx_{i \leq n}^{(t)} \|_1$, as well as the fact that $\sumx^{(t)}, \sumy^{(t)} \leq 6T/k$ for all $t \notin \setT$. Further, for products in $\setT$,
\begin{equation}
    \label{eq:setT}
    \frac{1}{T} \sum_{t \in \setT} \langle \vx_{i \leq n}^{(t)} - \widehat{\phi}_x(\vx_{i \leq n}^{(t)}), \mat{R} \vy^{(t)}_{j \leq n} \rangle \leq \frac{1}{T} \sum_{t \in \setT} \langle \hvx^{(t)} - \widehat{\phi}_x(\hvx^{(t)}), \mat{R} \hvy^{(t)} \rangle + \frac{24 T}{k},
\end{equation}
where we used that
\[
    \| \vx^{(t)}_{i \leq n} - \hvx^{(t)} \|_1 \leq 1 - \sumx^{(t)} \leq \frac{6 T}{k},
\]
and similarly for $\| \vy^{(t)}_{j \leq n} - \hvy^{(t)} \|_1$. Combining~\eqref{eq:def-CE-1}, \eqref{align:phihat}, \eqref{align:col-tri}, and \eqref{eq:setT}, we get
\[
    \frac{1}{|\setT|} \sum_{t \in \setT} \langle \hvx^{(t)} - \widehat{\phi}_x(\hvx^{(t)}), \mat{R} \hvy^{(t)} \rangle \geq - \epsilon' T - \frac{30 T^2}{k}.
\]
Similar reasoning bounds deviations by Player $y$, leading to the claim.
\end{proof}

Based on this claim, we next show that, in fact, $\setT = \emptyset$. This relies on the property of the $\soshard$ game $(\mat{R}, \mat{C})$, as made precise below.

\begin{lemma}
    \label{claim:Lsmall}
    Let $\vmu = \frac{1}{T} \sum_{t=1}^T \vx^{(t)} \otimes \vy^{(t)}$ be an $\epsilon'$-CE of $(\mat{R}', \mat{C}')$ with $k \geq 60 T/\epsilon'$, $\epsilon' \leq \epsilon_0/(2T)$ and $\epsilon = 2 \epsilon'$ per~\Cref{def:soshard}. Then, it holds that $\sumx^{(t)}, \sumy^{(t)} \leq \epsilon'$ for any $t \in [T]$.
\end{lemma}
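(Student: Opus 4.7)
The plan is to proceed by contradiction, assuming some $t_0 \in [T]$ violates the claimed bound, and to derive an incompatibility with the $\epsilon'$-CE condition by combining~\Cref{claim:induced-CE} with the $\soshard$ property and a carefully chosen deviation that exploits the $\delta \mat{1}_{n \times n}$ blocks in~\eqref{eq:game-stitch}.

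First I would reduce the lemma to showing $\setT = \emptyset$. Since $k \geq 60T/\epsilon'$, we have $6T/k \leq \epsilon'/10 < \epsilon'$, so any $t$ with $\sumx^{(t)} > \epsilon'$ or $\sumy^{(t)} > \epsilon'$ falls into the ``large'' branch of~\Cref{claim:off-diagonal} and hence lies in $\setT$. Assume for contradiction that $\setT \neq \emptyset$. By~\Cref{claim:induced-CE}, $\widehat{\vmu} \defeq \frac{1}{|\setT|} \sum_{t \in \setT} \hvx^{(t)} \otimes \hvy^{(t)}$ is an $(\epsilon' T + 30 T^2/k)$-CE of $(\mat{R}, \mat{C})$. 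The parameter choices $k \geq 60T/\epsilon'$ and $\epsilon' \leq \epsilon_0/(2T)$ bound this gap by $\tfrac{3}{2}\epsilon' T \leq \tfrac{3}{4}\epsilon_0 < \epsilon_0$, so the $\soshard$ property gives $\sw(\widehat{\vmu}) \leq 2\delta - 2\epsilon = 2\delta - 4\epsilon'$. Setting $u_x \defeq \tfrac{1}{|\setT|}\sum_{t \in \setT}\langle \hvx^{(t)}, \mat{R}\hvy^{(t)}\rangle$ and $u_y$ analogously, WLOG $u_x \leq \delta - 2\epsilon'$.

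The second step is the deviation argument. I would take $\phi_x \in \widehat{\Phi}$ defined by $\phi_x(i) = n+1$ for $i \leq n$ and $\phi_x(i) = i$ for $i > n$ (the ``conditional'' variant of the external deviation to $n+1$, already used in~\Cref{claim:induced-CE}). For $t \in \setT$, since $\sumy^{(t)} \geq 1 - 6T/k$ and $\|\phi_x(\vx^{(t)})\|_1 = 1$ is supported on $[n+1, 2n]$, the deviation utility isolates the $\delta \mat{1}_{n \times n}$ block and gives $\langle \phi_x(\vx^{(t)}), \mat{R}' \vy^{(t)}\rangle \geq \delta - O(T/k)$; meanwhile the original utility decomposes into the four blocks of $\mat{R}'$, the $-k \mat{1}_{n \times n}$ block contributing non-positively and the others summing to at most $u_x(t) + O(T/k)$. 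For $t \notin \setT$, a computation mirroring the one in~\Cref{claim:induced-CE} (where the $-k\mat{1}$ contributions cancel because $\vec{1}^\top(\vx^{(t)}_{i\leq n} - \widehat{\phi}_x(\vx^{(t)}_{i \leq n})) = 0$) shows that the deviation-original difference is lower bounded by $k\sumx^{(t)}(1-\sumy^{(t)}) - O(T/k) \geq -O(T/k)$. Averaging over $t$ and invoking the $\epsilon'$-CE constraint then yields $\alpha(\delta - u_x) \leq \epsilon' + O(T/k)$, where $\alpha = |\setT|/T$.

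Finally, combining this with the symmetric bound for Player $y$ (using $\phi_y$ defined analogously) and adding, one obtains $\alpha(2\delta - u_x - u_y) \leq 2\epsilon' + O(T/k)$. Substituting the $\soshard$ bound $2\delta - u_x - u_y \geq 4\epsilon'$ gives $4\alpha\epsilon' \leq 2\epsilon' + O(T/k)$, and the nontrivial lower bound $\alpha \geq 1/T$ forces a contradiction once the constants in $O(T/k)$ are calibrated against $k \geq 60T/\epsilon'$. The main obstacle I anticipate is precisely this constant calibration: the deviation captures the $\setT$ contribution cleanly, but bounding the $[T]\setminus\setT$ contribution tightly enough to rule out all $\alpha \in (0,1]$ (rather than only $\alpha$ close to $1$) requires carefully tracking how the $-k\mat{1}$ term controls the deviation difference for $t \notin \setT$ and possibly iterating the argument or combining it with the analogous upper bound $\sw(\vmu) \leq \alpha\sw(\widehat{\vmu}) + 2(1-\alpha) + O(T/k)$ through the $(1+\delta) \bar{p} \cdot$-type lower bound from the external-to-$n{+}1$ deviation; this calibration is the technically delicate part of the proof.
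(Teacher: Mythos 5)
Your plan follows the paper's proof almost verbatim: reduce the lemma to showing $\setT = \emptyset$, assume the contrary and invoke \Cref{claim:induced-CE} to turn $\widehat{\vmu}$ into an $\epsilon_0$-CE of $(\mat{R},\mat{C})$, apply the $\soshard$ welfare bound to conclude that (WLOG) Player $x$'s conditional utility is at most $\delta-\epsilon$, and finally exhibit a deviation that moves all mass on $[n]$ to the bottom block so as to collect the $\delta\mat{1}_{n\times n}$ payoff. The one place you diverge is the deviation target: you send $i\le n$ to the fixed action $n+1$, while the paper sends it to the mixed strategy $(\vec{0},\vxdef)$ (a convex combination of deviations in $\widehat{\Phi}$). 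The paper's choice is slightly cleaner because $\langle\vxdef,\mat{R}^S\vy\rangle\ge 0$ lets it drop the cross term $-\sumx^{(t)}\langle\vxdef,\mat{R}^S\vy^{(t)}_{j\ge n+1}\rangle$ for free; with $\vec{e}_{n+1}$ you must absorb that term into the $O(T/k)$ error, which also works but costs a constant. Otherwise the two arguments are the same.

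Where there is a genuine problem is in the final calibration that you (correctly) flag as delicate. The deviation inequality only produces an advantage of order $\epsilon\cdot\frac{|\setT|}{T}$: from $\frac{1}{T}\sum_{t\in\setT}\langle\hvx^{(t)},\mat{R}\hvy^{(t)}\rangle - \delta\frac{|\setT|}{T}$ and $\frac{1}{|\setT|}\sum_{t\in\setT}\langle\hvx^{(t)},\mat{R}\hvy^{(t)}\rangle\le\delta-\epsilon$, one obtains $-\frac{|\setT|}{T}\epsilon$, not $-\epsilon$. With $\epsilon=2\epsilon'$ this is $-\frac{2|\setT|}{T}\epsilon'$, and the error budget $\Theta(T/k)\lesssim\epsilon'$ is of the same order, so a contradiction with the $\epsilon'$-CE bound requires $|\setT|/T$ to be bounded below by a constant, not merely $\setT\neq\emptyset$. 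Your proposed repair---adding the Player-$x$ and Player-$y$ bounds and invoking $\alpha=|\setT|/T\ge 1/T$---does not close the gap: $4\alpha\epsilon'\le 2\epsilon'+O(T/k)$ is satisfied by $\alpha\le 1/2$, so $\alpha\ge 1/T$ yields no contradiction for $T\ge 2$. You should be aware, though, that the paper's own write-up silently drops the factor $|\setT|/T$ in passing from the penultimate display to the bound $\le-\epsilon+36T/k$, so the concern you raise is a real one about the argument as written (and would be addressed by taking $\epsilon$ to scale as $T\epsilon'$ rather than $2\epsilon'$, or by a sharper bound on $|\setT|$); it is not a flaw specific to your reconstruction.
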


\begin{proof}
    It suffices to show that $\setT = \emptyset$. For the sake of contradiction, let us suppose otherwise. Then, by~\Cref{claim:induced-CE}, we know that $\widehat{\vmu} = \frac{1}{|\setT|} \sum_{t \in \setT} \hvx^{(t)} \otimes \hvy^{(t)}$ is an $(\epsilon' T + 30 T^2/k)$-CE of $(\mat{R}, \mat{C})$. Thus, by our choice of parameters, it follows that $\hvmu$ is an $\epsilon_0$-CE of $(\mat{R}, \mat{C})$. By the property of $(\mat{R}, \mat{C})$ as an $\soshard$ game, it follows that $\hvmu$ attains welfare at most $2 \delta - 2 \epsilon$. Thus, under $\hvmu$, one of the two players obtains a utility of at most $\delta - \epsilon$; without loss of generality, we assume that $\frac{1}{|\setT|} \sum_{t \in \setT} \langle \hvx^{(t)}, \mat{R} \hvy^{(t)} \rangle \leq \delta - \epsilon$.
    
    We next consider the deviation $\phi_x(\vx^{(t)}) = (\vec{0}, \vxdef) \sumx^{(t)} + (\vec{0}, \vx^{(t)}_{i \geq n+1})$. Then, for any $t \in [T]$,
\begin{align*}
    \langle \vx^{(t)} - \phi_x(\vx^{(t)}), \mat{R}' \vy^{(t)} \rangle &= \langle \vx^{(t)}_{i \leq n}, \mat{R} \vy^{(t)}_{j \leq n} - k \sum_{j=n+1}^{2n} \vy_j^{(t)} \vec{1}_n \rangle - \delta \sumx^{(t)} \sumy^{(t)} - \sumx^{(t)} \langle \vxdef, \mat{R}^S \vy_{j \geq n+1}^{(t)} \rangle \\
    &\leq \langle \vx^{(t)}_{i \leq n}, \mat{R} \vy^{(t)}_{j \leq n} \rangle -  \delta \sumx^{(t)} \sumy^{(t)},
\end{align*}
where we used the property of $\vxdef$ concerning $(\mat{R}^S, \mat{C}^S)$ in the inequality above. In particular, for $t \in \setT$,
\begin{align*}
    \frac{1}{T} \sum_{t \in \setT} \langle \vx^{(t)} - \phi_x(\vx^{(t)}), \mat{R}' \vy^{(t)} \rangle &\leq \frac{1}{T} \sum_{t \in \setT} \langle \hvx^{(t)} \mat{R} \hvy^{(t)} \rangle - \delta \frac{|\setT|}{T} + \frac{36 T}{k}
\end{align*}
since $\sumx^{(t)}, \sumy^{(t)} \geq 1 - 6 T/k$ for any $t \in \setT$. But given that $\frac{1}{|\setT|} \sum_{t \in \setT} \langle \hvx^{(t)}, \mat{R} \hvy^{(t)} \rangle \leq \delta - \epsilon$, we have
\[
    \frac{1}{T} \sum_{t \in \setT} \langle \vx^{(t)} - \phi_x(\vx^{(t)}), \mat{R}' \vy^{(t)} \rangle \leq - \epsilon + \frac{36 T}{k}.    
\]
Further,
\[
\frac{1}{T} \sum_{t \notin \setT} \langle \vx^{(t)} - \phi_x(\vx^{(t)}), \mat{R}' \vy^{(t)} \rangle \leq \frac{6 T}{k}.
\]
As a result, by our choice of parameters, we have
\[
    \frac{1}{T} \sum_{t = 1}^T \langle \vx^{(t)} - \phi_x(\vx^{(t)}), \mat{R}' \vy^{(t)} \rangle < - \epsilon',
\]
which is a contradiction since $\vmu$ is assumed to be an $\epsilon'$-CE.
\end{proof}

\begin{lemma}[Soundness]
    \label{lemma:stitch-soundness}
    Under the assumptions of~\Cref{claim:Lsmall}, let $\vmu = \frac{1}{T} \sum_{t=1}^T \vx^{(t)} \otimes \vy^{(t)}$ be an $\epsilon'$-CE of $(\mat{R}', \mat{C}')$. If $\hvx^{(t)} = \nicefrac{\vx^{(t)}_{i \geq n+1}}{(1 - \sumx^{(t)})}$ and $\hvy^{(t)} = \nicefrac{\vy^{(t)}_{j \geq n+1}}{(1 - \sumy^{(t)})}$ for all $t \in [T]$, then $\widehat{\vmu} = \frac{1}{T} \sum_{t=1}^T \hvx^{(t)} \otimes \hvy^{(t)}$ is a $5 \epsilon'$-CE of $(\mat{R}^S, \mat{C}^S)$. Further,
    \[
        \left\| \frac{1}{T} \sum_{t=1}^T \vx^{(t)} - \frac{1}{T} \sum_{t=1}^T \hvx^{(t)} \right\|_1 \leq 2 \epsilon'.
    \]
\end{lemma}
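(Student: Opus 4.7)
The plan is to leverage \Cref{claim:Lsmall}, which forces $\sumx^{(t)}, \sumy^{(t)} \leq \epsilon'$ for every $t \in [T]$, so that each product $\vx^{(t)} \otimes \vy^{(t)}$ of the $\epsilon'$-CE $\vmu$ places nearly all its mass on the $\{n+1,\dots,2n\} \times \{n+1,\dots,2n\}$ block---the $(\mat{R}^S, \mat{C}^S)$ portion of the stitched game. Both conclusions will then follow by careful bookkeeping with this concentration in hand.

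The $\ell_1$ claim I expect to be immediate. For each $t$, the convention of \Cref{sec:prels} gives $\|\vx^{(t)} - \hvx^{(t)}\|_1 = \sumx^{(t)} + \|\vx^{(t)}_{i \geq n+1} - \hvx^{(t)}\|_1$, and the renormalizing factor $1/(1-\sumx^{(t)})$ inflates $\vx^{(t)}_{i \geq n+1}$ by exactly the removed mass, so the second term equals $\sumx^{(t)}$ as well. Averaging and applying the triangle inequality then yields the desired $2\epsilon'$ bound.

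For the CE property of $\hvmu$, I would fix an arbitrary $\widehat{\phi}_x \in \Phi$ on $[n]$, lift it to some $\phi_x \in \Phi$ on $[2n]$, and then invoke the $\epsilon'$-CE property of $\vmu$ on $(\mat{R}', \mat{C}')$. When $\widehat{\phi}_x$ is internal, the natural lift (acting on the shifted index $n+i_0 \mapsto n+i_1$) is itself internal on $[2n]$, and a direct expansion shows $\langle \vx^{(t)} - \phi_x(\vx^{(t)}), \mat{R}'\vy^{(t)}\rangle$ equals exactly $(1-\sumx^{(t)})(1-\sumy^{(t)}) \langle \hvx^{(t)} - \widehat{\phi}_x(\hvx^{(t)}), \mat{R}^S \hvy^{(t)}\rangle$, since the difference vector is supported on the $(\mat{R}^S, \mat{C}^S)$ block and column-sums to zero, which kills the $\delta \mat{1}_{n\times n}$ off-block contribution. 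When $\widehat{\phi}_x$ is threshold (including external), I would lift to the threshold on $[2n]$ with threshold $n+i_0$ and target $n+i'$; this additionally redirects the small top-block mass $\sumx^{(t)}$, incurring $O(\epsilon')$ error terms from the $\mat{R}$, $\delta$, and $\mat{R}^S$ contributions, but the large $-k\sumx^{(t)}(1-\sumy^{(t)})$ term from the $-k\mat{1}_{n\times n}$ block appears with a favorable sign once rearranged and thus only strengthens the lower bound. Combining, $\frac{1}{T}\sum_t (1-\sumx^{(t)})(1-\sumy^{(t)}) \langle \hvx^{(t)} - \widehat{\phi}_x(\hvx^{(t)}), \mat{R}^S \hvy^{(t)}\rangle \geq -\epsilon' - O(\epsilon')$, and replacing the $(1-\sumx^{(t)})(1-\sumy^{(t)}) \in [1-2\epsilon',1]$ weight by $1$ loses another $O(\epsilon')$ (the inner product being bounded by $2$). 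A symmetric treatment for Player $y$ then completes the proof, with the claimed $5\epsilon'$ slackness coming out of the combined error.

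I expect the main obstacle to be the clean handling of threshold deviations: the set $\Phi$ on $[2n]$ does not literally contain the ``identity on the first $n$ coordinates, threshold on the last $n$'' map that would perfectly isolate a pure deviation in the $(\mat{R}^S, \mat{C}^S)$ sub-game. One must therefore lift to an honest threshold on $[2n]$ and absorb the induced top-block perturbation---the one place where the $-k$ block of the stitched game, perhaps counterintuitively, functions as an asset rather than a liability, since the massive penalty it imposes sits on the wrong side of the CE inequality.
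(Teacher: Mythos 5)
Your proposal is correct and follows essentially the same route as the paper: lift each deviation $\widehat{\phi}_x$ on $[n]$ to a deviation $\phi_x\in\Phi$ on $[2n]$ (identity on the top block for internal deviations, threshold with target $n+i'$ for threshold deviations), invoke the $\epsilon'$-CE property of $\vmu$ in $(\mat{R}',\mat{C}')$, and absorb the top-block mass using \Cref{claim:Lsmall}. The only cosmetic difference is that in the internal case you observe the exact factorization with the $(1-\sumx^{(t)})(1-\sumy^{(t)})$ weight rather than a one-sided inequality, and your remark that the $-k$ block ``sits on the wrong side of the CE inequality'' and thus only strengthens the bound is precisely the mechanism the paper relies on when handling threshold lifts.
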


\begin{proof}
    Consider any deviation $\Phi_{\text{int}} \ni \widehat{\phi}_x : \{n+1, \dots, 2n\} \to \{n+1, \dots, 2n \}$. We extend $\widehat{\phi}_x$ to $[2n] \to [2n]$ as follows.
    \[
        \phi_x(i) = 
        \begin{cases}
            i & i \leq n, \\
            \widehat{\phi}_x(i) & i \geq n+1.
        \end{cases}
    \]
    It is clear that $\phi_x$ is also an internal deviation. Since $\vmu$ is assumed to be an $\epsilon'$-CE of $(\mat{R}', \mat{C}')$, we have
\[
    \frac{1}{T} \sum_{t=1}^T \langle \vx^{(t)}, \mat{R}' \vy^{(t)} \rangle \geq \frac{1}{T} \sum_{t=1}^T \langle \phi_x(\vx^{(t)}), \mat{R}' \vy^{(t)} \rangle - \epsilon',
\]
where $\phi_x(\vx^{(t)}) = ( \vx_{i \leq n}^{(t)}, \widehat{\phi}_x(\vx^{(t)}_{i\geq n+1}))$. Thus, 
    \begin{align*}
    \frac{1}{T} \sum_{t=1}^T \langle \vx^{(t)} - \phi_x(\vx^{(t)}), \mat{R}' \vy^{(t)} \rangle &= \frac{1}{T} \sum_{t=1}^T \langle \vx^{(t)}_{i \geq n+1} - \widehat{\phi}_x(\vx^{(t)}_{i \geq n+1}), \delta \sumy^{(t)} \vec{1}_n + \mat{R}^S \vy_{j \geq n+1}^{(t)} \rangle \\
    &= \frac{1}{T} \sum_{t=1}^T \langle \vx^{(t)}_{i \geq n+1} - \widehat{\phi}_x(\vx^{(t)}_{i \geq n+1}), \mat{R}^S \vy_{j \geq n+1}^{(t)} \rangle \\
    &\leq \frac{1}{T} \sum_{t=1}^T \langle \hvx^{(t)} - \widehat{\phi}_x(\hvx^{(t)}), \mat{R}^S \hvy^{(t)} \rangle + 4 \epsilon',
\end{align*}
where we used the fact that $\sumx^{(t)}, \sumy^{(t)} \leq \epsilon'$ for all $t \in [T]$ (by~\Cref{claim:Lsmall}). When $\widehat{\phi}_x \notin \Phi_{\text{int}}$, one can extend $\widehat{\phi}_x$ to $[2n] \to [2n]$ so that
\[
    \phi_x(i) = \begin{cases}
        i' & i \leq n, \\
        \widehat{\phi}_x(i) & i \geq n+1,
    \end{cases}
\]
where $i' \geq n+1$ is selected such that $\phi_x \in \Phi$. Similar reasoning yields that
\[
    \frac{1}{T} \sum_{t=1}^T \langle \vx^{(t)} - \phi_x(\vx^{(t)}), \mat{R}' \vy^{(t)} \rangle \leq \frac{1}{T} \sum_{t=1}^T \langle \hvx^{(t)} - \widehat{\phi}_x(\hvx^{(t)}), \mat{R}^S \hvy^{(t)} \rangle + 6 \epsilon'.
\]
The proof with respect to Player $y$ is analogous, implying that $\widehat{\vmu}$ is a $7 \epsilon'$-CE for $(\mat{R}^S, \mat{C}^S)$. This completes the proof.
\end{proof}

\paragraph{Putting everything together} Finally, \Cref{theorem:game-collection-high} follows by combining~\Cref{theorem:soshard-high,theorem:enumhard-high} with~\Cref{lemma:stitch-soundness,lemma:stitch-completeness}; \Cref{theorem:game-collection-low} follows by combining~\Cref{theorem:soshard-low,theorem:enumhard-low} with~\Cref{lemma:stitch-soundness,lemma:stitch-completeness}.
\section{Conclusions and future research}
\label{sec:concl}

Our main contribution here was to provide evidence that existing regret minimization algorithms are close to optimal in terms of the number of iterations needed to approximate an equilibrium. There are many interesting avenues for future research. Besides the obvious question of further sharpening the lower bounds we obtain, an important question is to characterize the class of learning algorithms captured by low-degree SoS relaxations; is it the case that common no-regret dynamics can be understood through the algorithmic framework followed in our work? Furthermore, while the lower bounds we obtain have the advantage of being unconditional, they only apply to a restricted class of algorithms. It is natural to conjecture that computing sparse CE is \PPAD-hard (for a certain regime of sparsity), but that remains an open question.

\section*{Acknowledgments}

This material is based on work supported by the Vannevar Bush Faculty Fellowship ONR N00014-23-1-2876, National Science Foundation grants RI-2312342 and RI-1901403, ARO award W911NF2210266, and NIH award A240108S001. Alkis Kalavasis was supported by the Institute for Foundations of Data Science at Yale.

\printbibliography

@String{AISTATS = "International Conference on Artificial Intelligence and Statistics (AISTATS)"}

@String{ACMEC03 = "Proceedings of the ACM Conference
                  on Electronic Commerce (ACM-EC)"}

@String{AISTATS = "International Conference on
									Artificial Intelligence and Statistics (AISTATS)"}

@String{COLT = "Conference on Learning Theory (COLT)"}

@String{EC = "Proceedings of the ACM Conference on Economics and Computation (EC)"}

@String{ECold = "Proceedings of the ACM Conference on Electronic Commerce (EC)"}

@String{FOCS = "Proceedings of the Annual Symposium on Foundations of
Computer Science (FOCS)"}

@String{GEB =     "Games and Economic Behavior"}

@String{ICML = "International Conference on Machine Learning (ICML)"}

@String{ICML03 = "International Conference on Machine Learning (ICML)"}

@String{ICML03Addr = "Washington, DC, USA"}

@String{JACM =   "Journal of the ACM"}

@String{MIT =    "Massachusetts Institute of Technology"}

@String{NeurIPS = "Proceedings of the Annual Conference on Neural Information
                  Processing Systems (NeurIPS)"}

@String{SIAM =    "SIAM Journal on Computing"}

@String{SODA = "Annual ACM-SIAM Symposium on Discrete Algorithms (SODA)"}

@String{STOC = "Proceedings of the Annual Symposium on Theory of
Computing (STOC)"}

@String{TEAC =     "ACM Transaction on Economics and Computation (TEAC)"}

@string{jan = "January"}

@article{Althofer94:Sparse,
title = {On sparse approximations to randomized strategies and convex combinations},
journal = {Linear Algebra and its Applications},
volume = {199},
pages = {339-355},
year = {1994},
author = {Ingo Althöfer}
}

@inproceedings{Daskalakis09:Oblivious,
  author       = {Constantinos Daskalakis and
                  Christos H. Papadimitriou},
  title        = {On oblivious PTAS's for nash equilibrium},
  booktitle    = STOC,
  year         = {2009}
}

@article{Littlestone87:Learning,
  author       = {Nick Littlestone},
  title        = {Learning Quickly When Irrelevant Attributes Abound: {A} New Linear-threshold
                  Algorithm},
  journal      = {Mach. Learn.},
  volume       = {2},
  number       = {4},
  pages        = {285--318},
  year         = {1987}
}

@inproceedings{Kothari22:Polynomial,
  author       = {Pravesh K. Kothari and
                  Peter Manohar and
                  Brian Hu Zhang},
  title        = {Polynomial-Time Sum-of-Squares Can Robustly Estimate Mean and Covariance
                  of Gaussians Optimally},
  booktitle    = {International Conference on Algorithmic Learning Theory (ALT)},
  year         = {2022}
}

@article{Arora09:Expander,
  author       = {Sanjeev Arora and
                  Satish Rao and
                  Umesh V. Vazirani},
  title        = {Expander flows, geometric embeddings and graph partitioning},
  journal      = JACM,
  volume       = {56},
  number       = {2},
  pages        = {5:1--5:37},
  year         = {2009}
}

@inproceedings{Barak14:Rounding,
  author       = {Boaz Barak and
                  Jonathan A. Kelner and
                  David Steurer},
  title        = {Rounding sum-of-squares relaxations},
  booktitle    = STOC,
  year         = {2014}
}

@inproceedings{Barak15:Dictionary,
  author       = {Boaz Barak and
                  Jonathan A. Kelner and
                  David Steurer},
  title        = {Dictionary Learning and Tensor Decomposition via the Sum-of-Squares
                  Method},
  booktitle    = STOC,
  year         = {2015}
}

@inproceedings{Harrow16:Tight,
  author       = {Aram W. Harrow and
                  Anand Natarajan and
                  Xiaodi Wu},
  title        = {Tight SoS-Degree Bounds for Approximate Nash Equilibria},
  booktitle    = {Conference on Computational Complexity (CCC)},
  year         = {2016}
}

@inproceedings{Ma16:Polynomial,
  author       = {Tengyu Ma and
                  Jonathan Shi and
                  David Steurer},
  title        = {Polynomial-Time Tensor Decompositions with Sum-of-Squares},
  booktitle    = FOCS,
  year         = {2016}
}

@inproceedings{Barak11:Rounding,
  author       = {Boaz Barak and
                  Prasad Raghavendra and
                  David Steurer},
  title        = {Rounding Semidefinite Programming Hierarchies via Global Correlation},
  booktitle    = FOCS,
  year         = {2011}
}

@article{Arora15:Subexponential,
  author       = {Sanjeev Arora and
                  Boaz Barak and
                  David Steurer},
  title        = {Subexponential Algorithms for Unique Games and Related Problems},
  journal      = JACM,
  volume       = {62},
  number       = {5},
  pages        = {42:1--42:25},
  year         = {2015}
}

@inproceedings{Pang21:SOS,
  author       = {Shuo Pang},
  title        = {{SOS} Lower Bound for Exact Planted Clique},
  booktitle    = {Computational Complexity Conference (CCC)},
  year         = {2021}
}

@inproceedings{Hopkins16:Integrality,
  author       = {Samuel B. Hopkins and
                  Pravesh Kothari and
                  Aaron Henry Potechin and
                  Prasad Raghavendra and
                  Tselil Schramm},
  title        = {On the Integrality Gap of Degree-4 Sum of Squares for Planted Clique},
  booktitle    = SODA,
  year         = {2016}
}

@inproceedings{Deshpande15:Improved,
  author       = {Yash Deshpande and
                  Andrea Montanari},
  title        = {Improved Sum-of-Squares Lower Bounds for Hidden Clique and Hidden
                  Submatrix Problems},
  booktitle    = COLT,
  year         = {2015}
}

@inproceedings{Meka15:Sum,
  author       = {Raghu Meka and
                  Aaron Potechin and
                  Avi Wigderson},
  title        = {Sum-of-squares Lower Bounds for Planted Clique},
  booktitle    = STOC,
  year         = {2015}
}

@book{Sandholm10:Population,
  title={Population games and evolutionary dynamics},
  author={Sandholm, William H},
  year={2010},
  publisher={MIT press}
}

@misc{Anonymous24:Barriers,
    author = {Ioannis Anagnostides and Alkis Kalvasis and Tuomas Sandholm },
    title = {Barriers to Welfare Maximization with No-Regret Learning},
    year = {2024}
}

@inproceedings{Peng24:Complexity,
    author = {Binghui Peng, and Aviad Rubinstein},
    title = {The complexity of approximate (coarse) correlated equilibrium for incomplete information games},
    booktitle = COLT,
    year = 2024
}

@article{Hazan11:How,
  author       = {Elad Hazan and
                  Robert Krauthgamer},
  title        = {How Hard Is It to Approximate the Best Nash Equilibrium?},
  journal      = SIAM,
  volume       = {40},
  number       = {1},
  pages        = {79--91},
  year         = {2011}
}

@inproceedings{Feder07:Approximating,
  author       = {Tom{\'{a}}s Feder and
                  Hamid Nazerzadeh and
                  Amin Saberi},
  title        = {Approximating nash equilibria using small-support strategies},
  booktitle    = EC,
  year         = {2007}
}

@inproceedings{Babichenko14:Simple,
  author       = {Yakov Babichenko and
                  Siddharth Barman and
                  Ron Peretz},
  title        = {Simple approximate equilibria in large games},
  booktitle    = EC,
  year         = {2014}
}

@article{Aumann74:Subjectivity,
  author  = {Robert Aumann},
  title   = {Subjectivity and Correlation in Randomized Strategies},
  journal = {Journal of Mathematical Economics},
  year    = {1974},
  volume  = {1},
  pages   = {67--96}
}

@phdthesis{Nash50:Non,
  author     = {John Nash},
  title      = {Non-cooperative games},
  school     = {Priceton University},
  year       = {1950},
  department = {Department of Mathematics}
}

@inproceedings{Peng24:Fast,
  author       = {Binghui Peng and
                  Aviad Rubinstein},
  title        = {Fast Swap Regret Minimization and Applications to Approximate Correlated
                  Equilibria},
  booktitle    = STOC,
  year         = {2024}
}

@inproceedings{Dagan24:From,
  author       = {Yuval Dagan and
                  Constantinos Daskalakis and
                  Maxwell Fishelson and
                  Noah Golowich},
  title        = {From External to Swap Regret 2.0: An Efficient Reduction for Large
                  Action Spaces},
  booktitle    = STOC,
  year         = {2024}
}

@misc{Barak16:Proofs,
    author = {Boaz Barak and David Steurer},
    title = {Proofs, beliefs, and algorithms through the lens of sum-of-squares},
    url = {https://www.sumofsquares.org},
    year = {2016}
}

@article{Fleming19:Semialgebraic,
  author       = {Noah Fleming and
                  Pravesh Kothari and
                  Toniann Pitassi},
  title        = {Semialgebraic Proofs and Efficient Algorithm Design},
  journal      = {Found. Trends Theor. Comput. Sci.},
  volume       = {14},
  number       = {1-2},
  pages        = {1--221},
  year         = {2019}
}

@inproceedings{Tulsiani09:CSP,
  author       = {Madhur Tulsiani},
  editor       = {Michael Mitzenmacher},
  title        = {{CSP} gaps and reductions in the lasserre hierarchy},
  booktitle    = STOC,
  year         = {2009}
}

@article{Barak19:Nearly,
  title={A nearly tight sum-of-squares lower bound for the planted clique problem},
  author={Barak, Boaz and Hopkins, Samuel and Kelner, Jonathan and Kothari, Pravesh K and Moitra, Ankur and Potechin, Aaron},
  journal=SIAM,
  volume={48},
  number={2},
  pages={687--735},
  year={2019}
}

@inproceedings{Braverman15:Approximating,
  author       = {Mark Braverman and
                  Young Kun{-}Ko and
                  Omri Weinstein},
  title        = {Approximating the best Nash Equilibrium in \emph{n\({}^{\mbox{o}}\)}\({}^{\mbox{(log
                  \emph{n})}}\)-time breaks the Exponential Time Hypothesis},
  booktitle    = SODA,
  year         = {2015}
}

@article{Austrin13:Inapproximability,
  author       = {Per Austrin and
                  Mark Braverman and
                  Eden Chlamtac},
  title        = {Inapproximability of NP-Complete Variants of Nash Equilibrium},
  journal      = {Theory Comput.},
  volume       = {9},
  pages        = {117--142},
  year         = {2013}
}

@article{Deligkas18:Inapproximability,
  author       = {Argyrios Deligkas and
                  John Fearnley and
                  Rahul Savani},
  title        = {Inapproximability results for constrained approximate Nash equilibria},
  journal      = {Inf. Comput.},
  volume       = {262},
  pages        = {40--56},
  year         = {2018}
}

@inproceedings{Rubinstein16:Settling,
  author    = {Aviad Rubinstein},
  editor    = {Irit Dinur},
  title     = {Settling the Complexity of Computing Approximate Two-Player Nash Equilibria},
  booktitle = {{IEEE} 57th Annual Symposium on Foundations of Computer Science, {FOCS}},
  pages     = {258--265},
  publisher = {{IEEE} Computer Society},
  year      = {2016}
}

@article{Goldberg23:Lower,
  author       = {Paul W. Goldberg and
                  Matthew J. Katzman},
  title        = {Lower bounds for the query complexity of equilibria in Lipschitz games},
  journal      = TCS,
  volume       = {962},
  pages        = {113931},
  year         = {2023}
}

@misc{Maiti23:Query,
  author       = {Arnab Maiti and
                  Ross Boczar and
                  Kevin G. Jamieson and
                  Lillian J. Ratliff},
  title        = {Query-Efficient Algorithms to Find the Unique Nash Equilibrium in
                  a Two-Player Zero-Sum Matrix Game},
  journal      = {CoRR},
  volume       = {abs/2310.16236},
  year         = {2023}
}

@article{Babichenko15:Query,
  author       = {Yakov Babichenko and
                  Siddharth Barman},
  title        = {Query Complexity of Correlated Equilibrium},
  journal      = {{ACM} Trans. Economics and Comput.},
  volume       = {3},
  number       = {4},
  pages        = {22:1--22:9},
  year         = {2015}
}

@article{Goldberg16:Bounds,
  author       = {Paul W. Goldberg and
                  Aaron Roth},
  title        = {Bounds for the Query Complexity of Approximate Equilibria},
  journal      = TEAC,
  volume       = {4},
  number       = {4},
  pages        = {24:1--24:25},
  year         = {2016}
}

@inproceedings{Kothari18:Sum,
  author       = {Pravesh K. Kothari and
                  Ruta Mehta},
  title        = {Sum-of-squares meets nash: lower bounds for finding any equilibrium},
  booktitle    = STOC,
  year         = {2018}
}

@inproceedings{Conitzer04:Communication,
  author    = {Vincent Conitzer and Tuomas Sandholm},
  title     = {Communication complexity as a lower bound for learning in games},
  booktitle = ICML,
  year      = {2004}
}

@inproceedings{Greenwald03:Correlated,
  author    = {Amy Greenwald and Keith Hall},
  title     = {Correlated {Q}-learning},
  booktitle = ICML03,
  year      = {2003},
  pages     = {242--249},
  address   = ICML03Addr
}

@InProceedings{Gordon08:No,
  author       = {Gordon, Geoffrey J and Greenwald, Amy and Marks, Casey},
  title        = {No-regret learning in convex games},
  booktitle    = {Proceedings of the 25\textsuperscript{th} international conference on Machine learning},
  year         = {2008},
  pages        = {360--367},
  organization = {ACM},
}

@article{Littlestone94:Weighted,
  author  = {Nick Littlestone and M.~K. Warmuth},
  title   = {The Weighted Majority Algorithm},
  journal = {Information and Computation},
  year    = {1994},
  volume  = {108},
  number  = {2},
  pages   = {212--261}
}

@book{Cesa-Bianchi06:Prediction,
  title     = {Prediction, learning, and games},
  publisher = {Cambridge University Press},
  year      = {2006},
  author    = {Cesa-Bianchi, Nicolo and Lugosi, Gabor}
}

@article{Foster97:Calibrated,
  author  = {Dean Foster and Rakesh Vohra},
  title   = {Calibrated Learning and Correlated Equilibrium},
  journal = GEB,
  year    = {1997},
  volume  = {21},
  pages   = {40--55}
}

@article{Hart00:Simple,
  author  = {Sergiu Hart and Andreu Mas-Colell},
  title   = {A Simple Adaptive Procedure Leading to Correlated Equilibrium},
  journal = {Econometrica},
  year    = {2000},
  volume  = {68},
  pages   = {1127--1150}
}

@inproceedings{Daskalakis21:Near,
  author  = {Constantinos Daskalakis and
             Maxwell Fishelson and
             Noah Golowich},
  title   = {Near-Optimal No-Regret Learning in General Games},
  booktitle = NeurIPS,
  year    = {2021}
}

@article{Chen09:Settling,
  author  = {Xi Chen and Xiaotie Deng and Shang-Hua Teng},
  title   = {Settling the Complexity of Computing Two-Player {N}ash Equilibria},
  journal = JACM,
  year    = {2009}
}

@article{Farina24:Polynomial,
  author       = {Gabriele Farina and
                  Charilaos Pipis},
  title        = {Polynomial-Time Computation of Exact {\(\Phi\)}-Equilibria in Polyhedral
                  Games},
  journal      = {CoRR},
  volume       = {abs/2402.16316},
  year         = {2024}
}

@inproceedings{Ganor18:Communication,
  author       = {Anat Ganor and
                  {Karthik {C. S.}}},
  title        = {Communication Complexity of Correlated Equilibrium with Small Support},
  booktitle    = {Approximation, Randomization, and Combinatorial Optimization. Algorithms
                  and Techniques (APPROX/RANDOM)},
  year         = {2018}
}

@inproceedings{Etessami07:Complexity,
  author    = {Kousha Etessami and Mihalis Yannakakis},
  title     = {On the Complexity of {N}ash Equilibria and Other Fixed Points (Extended Abstract)},
  booktitle = FOCS,
  year      = {2007}
}

@article{Daskalakis08:Complexity,
  author  = {Constantinos Daskalakis and Paul Goldberg and Christos Papadimitriou},
  title   = {The Complexity of Computing a {N}ash Equilibrium},
  journal = SIAM,
  year    = {2008}
}

@article{Gilboa89:Nash,
  author  = {Itzhak Gilboa and Eitan Zemel},
  title   = {{N}ash and Correlated Equilibria: Some Complexity Considerations},
  journal = GEB,
  year    = {1989},
  volume  = {1},
  pages   = {80--93}
}

@misc{:,
  note = {http://www.computerpokercompetition.org/},
  key  = {ACPC}
}

@inproceedings{Lipton03:Playing,
  author    = {Richard Lipton and Evangelos Markakis and Aranyak Mehta},
  title     = {Playing Large Games Using Simple Strategies},
  booktitle = ACMEC03,
  year      = {2003}
}

@inproceedings{Foster23:Hardness,
  author       = {Dylan J. Foster and
                  Noah Golowich and
                  Sham M. Kakade},
  title        = {Hardness of Independent Learning and Sparse Equilibrium Computation
                  in Markov Games},
  booktitle    = ICML,
  year         = {2023}
}

@Inbook{Nesterov00:Semidefinite,
author="Nesterov, Yuri
and Wolkowicz, Henry
and Ye, Yinyu",
title="Semidefinite Programming Relaxations of Nonconvex Quadratic Optimization",
pages="361--419",
bookTitle="Handbook of Semidefinite Programming: Theory, Algorithms, and Applications",
year="2000",
publisher="Springer US"
}

@article{Feige10:Submodular,
  author       = {Uriel Feige and
                  Jan Vondr{\'{a}}k},
  title        = {The Submodular Welfare Problem with Demand Queries},
  journal      = {Theory of Computing},
  volume       = {6},
  number       = {1},
  pages        = {247--290},
  year         = {2010}
}

@article{Feige09:Maximizing,
  author       = {Uriel Feige},
  title        = {On Maximizing Welfare When Utility Functions Are Subadditive},
  journal      = SIAM,
  volume       = {39},
  number       = {1},
  pages        = {122--142},
  year         = {2009}
}

@article{Goemans95:Improved,
  author       = {Michel X. Goemans and
                  David P. Williamson},
  title        = {Improved Approximation Algorithms for Maximum Cut and Satisfiability
                  Problems Using Semidefinite Programming},
  journal      = JACM,
  volume       = {42},
  number       = {6},
  pages        = {1115--1145},
  year         = {1995}
}

@article{Raghavan87:Randomized,
  author       = {Prabhakar Raghavan and
                  Clark D. Thompson},
  title        = {Randomized rounding: a technique for provably good algorithms and
                  algorithmic proofs},
  journal      = {Combinatorica},
  volume       = {7},
  number       = {4},
  pages        = {365--374},
  year         = {1987}
}

@article{Hochbaum82:Approximation,
  author       = {Dorit S. Hochbaum},
  title        = {Approximation Algorithms for the Set Covering and Vertex Cover Problems},
  journal      = SIAM,
  volume       = {11},
  number       = {3},
  pages        = {555--556},
  year         = {1982}
}

@inproceedings{Feige16:Oblivious,
  author       = {Uriel Feige and
                  Michal Feldman and
                  Inbal Talgam{-}Cohen},
  title        = {Oblivious Rounding and the Integrality Gap},
  booktitle    = {Approximation, Randomization, and Combinatorial Optimization. Algorithms
                  and Techniques, (APPROX/RANDOM)},
  year         = {2016}
}

@article{Grotschel81:Ellipsoid,
  title={The ellipsoid method and its consequences in combinatorial optimization},
  author={Gr{\"o}tschel, Martin and Lov{\'a}sz, L{\'a}szl{\'o} and Schrijver, Alexander},
  journal={Combinatorica},
  volume={1},
  pages={169--197},
  year={1981},
  publisher={Springer}
}

@article{Lasserre01:Global,
  author       = {Jean B. Lasserre},
  title        = {Global Optimization with Polynomials and the Problem of Moments},
  journal      = {{SIAM} J. Optim.},
  volume       = {11},
  number       = {3},
  pages        = {796--817},
  year         = {2001}
}

@book{Parrilo00:Structured,
  title={Structured semidefinite programs and semialgebraic geometry methods in robustness and optimization},
  author={Parrilo, Pablo A},
  year={2000},
  publisher={California Institute of Technology}
}

@inproceedings{Barak14:Sum,
  title={Sum-of-squares proofs and the quest toward optimal algorithms},
  author={Barak, Boaz and Steurer, David},
  booktitle={Proceedings of International Congress of Mathematicians (ICM)},
  year={2014}
}

@inproceedings{Hadiji23:Towards,
  author       = {H{\'{e}}di Hadiji and
                  Sarah Sachs and
                  Tim van Erven and
                  Wouter M. Koolen},
  title        = {Towards Characterizing the First-order Query Complexity of Learning
                  (Approximate) Nash Equilibria in Zero-sum Matrix Games},
  booktitle    = NeurIPS,
  year         = {2023}
}

@article{Goos23:Near,
  author       = {Mika G{\"{o}}{\"{o}}s and
                  Aviad Rubinstein},
  title        = {Near-Optimal Communication Lower Bounds for Approximate Nash Equilibria},
  journal      = SIAM,
  volume       = {52},
  number       = {6},
  pages        = {S18--316},
  year         = {2023}
}

@article{Fearnley16:Finding,
  author       = {John Fearnley and
                  Rahul Savani},
  title        = {Finding Approximate Nash Equilibria of Bimatrix Games via Payoff Queries},
  journal      = TEAC,
  volume       = {4},
  number       = {4},
  pages        = {25:1--25:19},
  year         = {2016}
}

@article{Bowling15:Heads,
  author  = {Michael Bowling and Neil Burch and Michael Johanson and Oskari Tammelin},
  title   = {Heads-up Limit Hold'em Poker is Solved},
  journal = {Science},
  year    = {2015},
  volume = {347},
number = {6218},
pages = {145-149}
}

@inproceedings{Jiang11:Polynomial,
  author    = {Albert Jiang and Kevin Leyton-Brown},
  title     = {Polynomial-time computation of exact correlated equilibrium in compact games},
  booktitle = ECold,
  year      = {2011}
}

@article{Daskalakis15:Near,
  author    = {Constantinos Daskalakis and Alan Deckelbaum and Anthony Kim},
  title     = {Near-optimal no-regret algorithms for zero-sum games},
  journal   = GEB,
  year      = {2015},
  volume    = {92},
  pages     = {327--348},
  publisher = {Elsevier}
}

@inproceedings{Rakhlin13:Optimization,
  title     = {Optimization, learning, and games with predictable sequences},
  author    = {Rakhlin, Alexander and Sridharan, Karthik},
  booktitle = NeurIPS,
  year      = {2013}
}

@inproceedings{Syrgkanis15:Fast,
  title     = {Fast convergence of regularized learning in games},
  author    = {Syrgkanis, Vasilis and Agarwal, Alekh and Luo, Haipeng and Schapire, Robert E},
  booktitle = NeurIPS,
  year      = {2015}
}

@inproceedings{Deligkas22:Pure,
  author       = {Argyrios Deligkas and
                  John Fearnley and
                  Alexandros Hollender and
                  Themistoklis Melissourgos},
  title        = {Pure-Circuit: Strong Inapproximability for {PPAD}},
  booktitle    = FOCS,
  year         = {2022}
}

@inproceedings{Boodaghians20:Smoothed,
  author       = {Shant Boodaghians and
                  Joshua Brakensiek and
                  Samuel B. Hopkins and
                  Aviad Rubinstein},
  title        = {Smoothed Complexity of 2-player Nash Equilibria},
  booktitle    = FOCS,
  year         = {2020}
}

@inproceedings{Cai24:Near,
  author       = {Yang Cai and
                  Haipeng Luo and
                  Chen{-}Yu Wei and
                  Weiqiang Zheng},
  title        = {Near-Optimal Policy Optimization for Correlated Equilibrium in General-Sum
                  Markov Games},
  booktitle    = AISTATS,
  year         = {2024}
}

@inproceedings{Daskalakis22:Fast,
  author       = {Constantinos Daskalakis and
                  Noah Golowich},
  title        = {Fast rates for nonparametric online learning: from realizability to
                  learning in games},
  booktitle    = STOC,
  year         = {2022}
}

@inproceedings{Erez23:Regret,
  author       = {Liad Erez and
                  Tal Lancewicki and
                  Uri Sherman and
                  Tomer Koren and
                  Yishay Mansour},
  title        = {Regret Minimization and Convergence to Equilibria in General-sum Markov
                  Games},
  booktitle    = ICML,
  year         = {2023}
}

@article{Blum07:From,
  author  = {Avrim Blum and Yishay Mansour},
  journal = {J. Mach. Learn. Res.},
  pages   = {1307--1324},
  title   = {From External to Internal Regret},
  volume  = {8},
  year    = {2007}
}

@article{Papadimitriou08:Computing,
  author  = {Christos H. Papadimitriou and Tim Roughgarden},
  journal = JACM,
  number  = {3},
  pages   = {14:1--14:29},
  title   = {Computing correlated equilibria in multi-player games},
  volume  = {55},
  year    = {2008}
}

@article{Stoltz07:Learning,
  author  = {Gilles Stoltz and G{\'{a}}bor Lugosi},
  journal = {Games Econ. Behav.},
  number  = {1},
  pages   = {187--208},
  title   = {Learning correlated equilibria in games with compact sets of strategies},
  volume  = {59},
  year    = {2007}
}

@article{Stoltz05:Internal,
  title     = {Internal regret in on-line portfolio selection},
  author    = {Stoltz, Gilles and Lugosi, G{\'a}bor},
  journal   = {Machine Learning},
  volume    = {59},
  number    = {1-2},
  pages     = {125--159},
  year      = {2005},
  publisher = {Springer}
}

\end{document}